\theoremstyle{definition}
\newtheorem{theorem}{Theorem}
\newtheorem{claim}{Claim}
\newtheorem{lemma}{Lemma}
\newtheorem{definition}{Definition}
\renewcommand{\emph}[1]{\textit{#1}}
\newcounter{para}
\newcommand*\bigcdot{\mathpalette\bigcdot@{.5}}
\newcommand*\bigcdot@[2]{\mathbin{\vcenter{\hbox{\scalebox{#2}{$\m@th#1\bullet$}}}}}
\newcommand{\llangle}[1][]{\savebox{\@brx}{\(\m@th{#1\langle}\)}%
  \mathopen{\copy\@brx\kern-0.5\wd\@brx\usebox{\@brx}}}
\newcommand{\rrangle}[1][]{\savebox{\@brx}{\(\m@th{#1\rangle}\)}%
  \mathclose{\copy\@brx\kern-0.5\wd\@brx\usebox{\@brx}}}
\newcolumntype{L}{>{$}l<{$}} 
\newcolumntype{C}{>{$}c<{$}} 
\newcolumntype{R}{>{$}r<{$}} 
\newtheorem*{theorem*}{Theorem}
\newtheorem*{lemma*}{Lemma}
\begin{document}
\newcommand{\Caltech}{California Institute of Technology, Pasadena, CA 91125, USA}
\newcommand{\MIT}{Center for Theoretical Physics, Massachusetts Institute of Technology, Cambridge, MA 02139, USA}
\newcommand{\Stanford}{Department of Electrical Engineering, Stanford University, Stanford, CA, USA}
\newcommand{\MEP}{maximum entropy principle}

\title{
A Maximum Entropy Principle in Deep Thermalization\\ and in Hilbert-Space Ergodicity
}
\date{\today}
\author{Daniel K. Mark}
\affiliation{\MIT}
\author{Federica Surace}
\affiliation{\Caltech}
\author{Andreas Elben}
\affiliation{\Caltech}
\author{Adam L. Shaw}
\affiliation{\Caltech}
\author{Joonhee Choi}
\affiliation{\Stanford}
\author{Gil Refael}
\affiliation{\Caltech}
\author{Manuel Endres}
\affiliation{\Caltech}
\author{Soonwon Choi}
\email{soonwon@mit.edu}
\affiliation{\MIT}

\AtBeginShipoutNext{\AtBeginShipoutUpperLeft{%
  \put(\dimexpr\paperwidth-1cm\relax,-1.5cm){\makebox[0pt][r]{MIT-CTP 5692}}%
}}

\begin{abstract}
We report universal statistical properties displayed by ensembles of pure states that naturally emerge in quantum many-body systems.
Specifically, two classes of state ensembles are considered: those formed by i) the temporal trajectory of a quantum state under unitary evolution or ii) the quantum states of small subsystems obtained by partial, local projective measurements performed on their complements. These cases respectively exemplify the phenomena of ``Hilbert-space ergodicity" and   ``deep thermalization." 
In both cases, the resultant ensembles are defined by a simple principle: the distributions of pure states have maximum entropy, subject to constraints such as energy conservation, and effective constraints imposed by thermalization.
We present and numerically verify quantifiable signatures of this principle by deriving explicit formulae for all statistical moments of the ensembles; proving the necessary and sufficient conditions for such universality under widely-accepted assumptions; and describing their measurable consequences in experiments. We further discuss information-theoretic implications of the universality: our ensembles have maximal information content while being maximally difficult to interrogate, establishing that generic quantum state ensembles that occur in nature hide (scramble) information as strongly as possible.
Our results generalize the notions of
Hilbert-space ergodicity to time-independent Hamiltonian dynamics and deep thermalization from infinite to finite effective temperature.
Our work presents new perspectives to characterize and understand universal behaviors of quantum dynamics using statistical and information theoretic tools.
\end{abstract}

\maketitle

\section{Introduction}
\begin{figure*}[t!]    \includegraphics[width=0.75\textwidth]{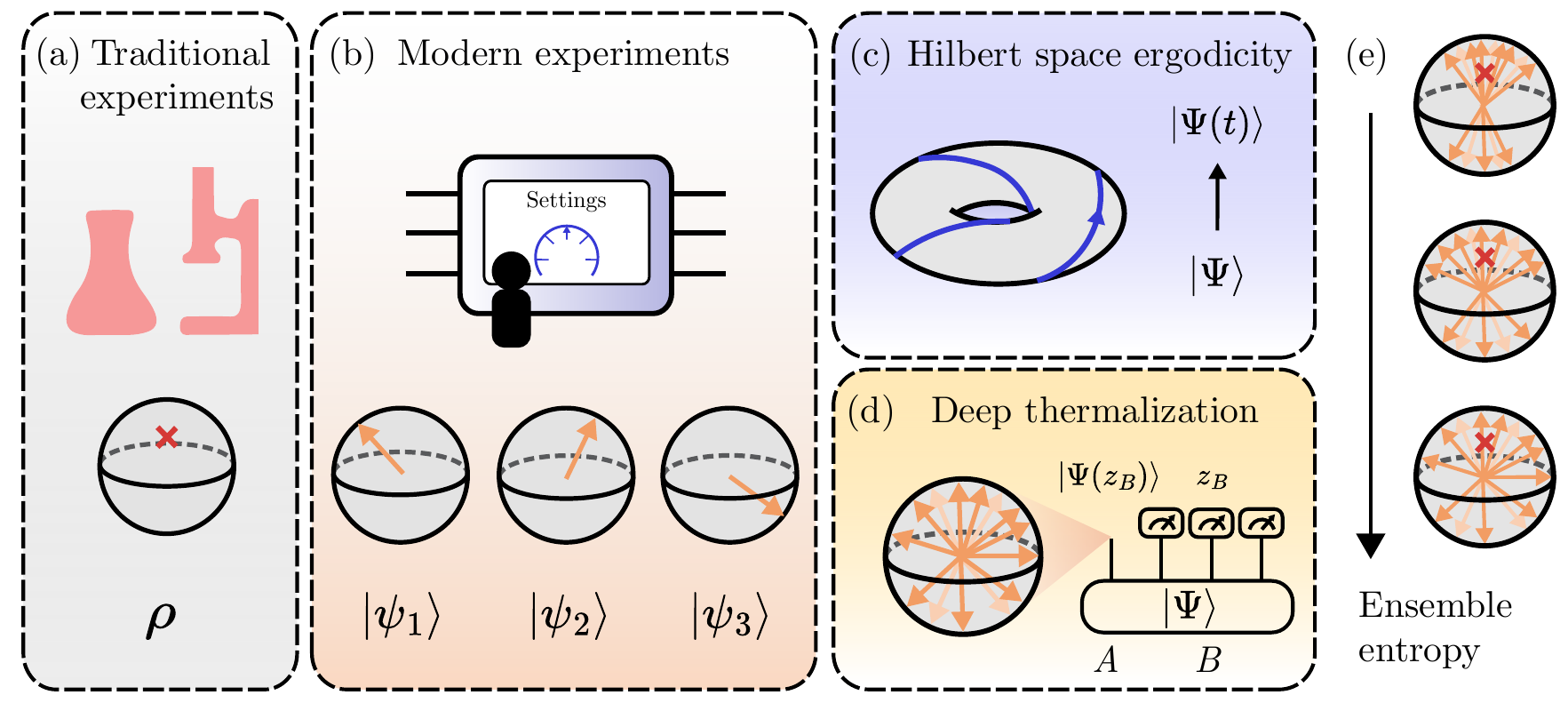}
    \caption{Overview of this work.
    In this work, we study ensembles of states generated by natural dynamics. (a) Traditional experiments on quantum systems can only measure the average properties of such ensembles, described by a density matrix $\rho$ (red cross in Bloch sphere). (b) In contrast, modern experimental capabilities are able to generate and investigate ensembles of labeled quantum states, and study their higher statistical properties. We specifically study two ensembles of states in this work: (c) the \textit{temporal ensemble} of states generated by taking an initial state $\ket{\Psi_0}$ and evolving it to different times under a (fixed) Hamiltonian $H$, and (d) the \textit{projected ensemble} of states $\{\ket{\Psi(z_B)}\}$ generated by projective measurement (performed on a large subsystem) of a single state $\ket{\Psi}$ resulting in many different outcomes $z_B$. These ensembles, respectively, exemplify the phenomena of \textit{Hilbert-space ergodicity} and \textit{deep thermalization}.    
    (e) Our key finding is that these ensembles of states have \textit{maximum entropy}, subject to constraints arising from dynamics. For example, quantum thermodynamics constrains projected ensembles to have a fixed first moment $\rho$. While many ensembles of states share the same first moment $\rho$ (red crosses), the (unnormalized) projected ensemble is the one with the maximum ensemble entropy (bottom).
    } 
    \label{fig:figure0}
\end{figure*}

The second law of thermodynamics is a foundational concept in statistical physics with far reaching implications across many areas of studies, from quantum gravity to information theory and computation~\cite{landau2013statistical,bekenstein1974generalized,cover1999elements,mezard2009information,bennett1985fundamental}. 
At its core is the idea of a \textit{maximum entropy principle}~\cite{jaynes1957information,grandy1980principle,martyushev2006maximum,banavar2010applications,presse2013principles}; that a macroscopic system in equilibrium is described by the state with maximum entropy, subject to physical constraints such as global energy and particle number conservation. 
It is remarkable that such a simple principle enables highly accurate predictions about a large system without knowledge of the vast majority of its degrees of freedom. 

In quantum systems, the maximum entropy principle~\cite{gogolin2011absence,gogolin2016equilibration} is often justified by one of the following two conditions relevant to many conventional experiments [Fig.~\ref{fig:figure0}(a)].
First, one only has access to local or spatially averaged observables~\cite{short2011equilibration,riera2012thermalization,deutsch2018eigenstate,ueda2020quantum}.
Second, one can only measure observables averaged over long time intervals or at late, equilibrium times without fine-grained temporal resolution~\cite{reimann2008foundation, deutsch2018eigenstate,ueda2020quantum}.
In both cases, measurement outcomes become largely agnostic to microscopic details of the initial state or system dynamics and are well-described by thermal density matrices.
It is this {\it inability} to obtain detailed, microscopic information from traditional experiments that render quantum states to be in a probabilistic ensemble and ultimately empower the \MEP~to make accurate and practical predictions of accessible observables.

Modern quantum experiments challenge such traditional settings [Fig.~\ref{fig:figure0}(b)].  
Quantum devices built using atomic or superconducting qubit platforms have realized and probed diverse phenomena ranging from non-equilibrium dynamics~\cite{bernien2017probing,zhang2017observation,neyenhuis2017observation, choi2019probing,mi2021time} to exotic phases of matter~\cite{gross2017quantum,chiu2019string,brydges2019probing,scholl2021programmable,ebadi2021quantum}. These experiments have the capability to fully interrogate the entire quantum system, enabling the measurement of arbitrary, non-local multi-particle correlation functions at the fundamental quantum projection limit. They are also able to deterministically evolve quantum many-body states with a high repetition rate, enabling the study of quantum dynamics with an unprecedented level of accuracy and details. 
Existing maximum entropy principles were not designed to capture such microscopic measurement outcomes, which are non-local and time-resolved.
Therefore, we set out to develop a new approach that may provide an effective statistical model for such data.
Here, we report the discovery of a generalized maximum entropy principle for quantum many-body systems that is well aligned with modern capabilities of experiments.

We specifically study two settings, the \textit{temporal} and \textit{projected ensembles}. The former considers the trajectory of quantum states under time evolution, while the latter considers ensembles of states generated by partial projective measurements performed on the system~[Fig.~\ref{fig:figure0}(c,d)]. In both settings, the average states, or first moments, of the ensembles are described by density matrices $\rho$ which have been extensively studied in the literature~\cite{reimann2008foundation,short2011equilibration,riera2012thermalization,gogolin2016equilibration,deutsch2018eigenstate,ueda2020quantum}. Here, certain universal phenomena apply, such as the fact that the reduced density matrices of small subsystems are well described by thermal Gibbs states~\cite{riera2012thermalization,deutsch2018eigenstate}, or dephasing in the energy-eigenbasis, which leads to the so-called diagonal ensemble as an effective description of equilibrium properties~\cite{gogolin2016equilibration,reimann2008foundation,short2011equilibration}. However, the temporal and projected ensembles of generic systems have not been studied beyond their first moments. Our results extend this universality to higher moments, finding that the first moment uniquely fixes the higher moments.

We find a new type of \textit{maximum entropy principle} that describes ensembles of states in the above settings~[Fig.~\ref{fig:figure0}(e)].  
The entropy in question is an \textit{ensemble entropy}, which quantifies how the states in an ensemble are spread out over the relevant Hilbert space. This is distinct from the conventional von Neumann entropy, which only depends on the average state (density matrix) of the ensemble.  
In broad strokes, our main finding is that in many physical situations, the ensembles of pure states maximize the ensemble entropy up to physical constraints such as energy conservation or local thermalization, thus establishing a new \MEP. 
Special cases of our ensembles of states have been discussed in the literature under the names of ``deep thermalization"~\cite{choi2023preparing,cotler2023emergent,ippoliti2022dynamical,ippoliti2022solvable,claeys2022emergent,lucas2023generalized,wilming2022hightemperature,bhore2023deep,chan2024projected,varikuti2024unraveling,vairogs2024extracting} and ``Hilbert space ergodicity"~\cite{pilatowskycameo2023complete,pilatowskycameo2024hilbertspace} respectively, in particular at infinite effective temperature. Here we present a unified principle that holds under general conditions. 

Ultimately, our work is a theory of fluctuations: the variations in observables over states in an ensemble, which go fundamentally beyond ensemble-averaged observables that are typically studied. In companion experimental work~\cite{shaw2024universal}, we use insights gained from this work to predict and verify universal forms of  fluctuations in a variety of quantities, interpolating from global to local, and in closed as well as open system dynamics.

Our findings have several implications. 
First, we confirm that Hamiltonian dynamics uniformly explores its available Hilbert space. Specifically, the temporal trace of a many-body quantum state satisfies the \MEP~under the constraint of energy conservation. For sufficiently long evolution, the trajectory of the state is only constrained by the population of the energy eigenstates: all other degrees of freedom are randomly distributed with maximum entropy. This improves our understanding of thermalization as a dynamical phenomenon~\cite{mori2017thermalization,mori2018thermalization} and its relationship to static notions of thermal equilibrium~\cite{depalma2015necessity,harrow2022thermalization}. While thermalization has traditionally concerned the dynamics of local quantities ---e.g. how such quantities reach their equilibrium values--- here we study the dynamics of the \textit{global} state, confirming and formalizing our intuition about ergodicity in Hilbert space~\cite{nakata2012phase,pilatowskycameo2023complete}. 
We find necessary and sufficient conditions for Hilbert-space ergodicity and find that curiously, just as in classical dynamics, a system can be ergodic without being chaotic.

Second, we generalize notions of deep thermalization~\cite{choi2023preparing,cotler2023emergent,ippoliti2022dynamical} away from infinite temperature. Deep thermalization refers to the observation that in generic many-body dynamics, not only does the density matrix of a local subsystem reach thermal equilibrium, so do the higher moments of the projected ensemble. Additionally, it has been pointed out that the higher moments may take a longer time than lower moments to converge, implying the possibility of sustained thermalization (in non-local quantities) even when local observables have converged~\cite{ippoliti2022dynamical,ippoliti2022solvable,chan2024projected}. The equilibrium values of such higher moments have considerable structure. At infinite effective temperature, the projected ensemble~\cite{choi2023preparing,cotler2023emergent} has been found to approach the \textit{Haar ensemble}~\cite{ambainis2007quantum,harrow2013church}, a paradigmatic uniform random ensemble of quantum states. We extend this phenomenology to finite temperature, identifying finite temperature generalizations of the Haar ensemble with maximum entropy which describe our ensembles of states. Our result should be taken as complementary to, but distinct from conventional theories of quantum thermalization such as the eigenstate thermalization hypothesis (ETH)~\cite{gogolin2016equilibration,deutsch2018eigenstate}. It does not rely on the ETH, but when used in conjuction, implies that all higher moments of the projected ensemble reach thermal equilbrium with universal, Gibbs-like structure.

Third, our approach predicts and, where relevant, confirms the existence of a universality in natural many-body states, revealed in their higher moments. Not only do mean quantities captured by density matrices $\rho$ exhibit universal thermal behavior, so do higher-order moments such as the variance and skewness, as dictated by our \MEP. Higher order quantities beyond local observables have received recent attention in non-equilibrium dynamics~\cite{rosenberg2023dynamics,wienand2023emergence}, and our results suggest at the broader applicability of higher moments --- the \MEP~holds not only at the level of density matrices~\cite{rigol2008thermalization,polkovnikov2011nonequilibrium}, but may also hold at the level of many-body wavefunctions~\cite{brody2000information,anza2022quantum,hahn2023statistical}. This possibility has been recently investigated for energy eigenstates in Ref.~\cite{hahn2023statistical}. Our results support such an approach based on the \MEP, and suggest that they may be more widely applicable; here we primarily apply them to time-evolved many-body states.

Finally, we draw further connections between thermalization and quantum information theory. Ensembles of quantum states are natural objects for quantum communication: fundamental results such as the Holevo bound investigate the use of such ensembles to transmit information~\cite{holevo1998capacity}. Our results indicate that ensembles of states that result from many-body dynamics are both maximally difficult to compress, distinguish, or use for information transmission, providing a new perspective on the nature of information scrambling and hiding in many-body dynamics~\cite{kukuljan2017weak,nahum2018operator,mi2021information,barratt2022transitions,ippoliti2023learnability}. 

The maximum entropy principle has repeatedly exceeded expectations in its ability to describe complex, interacting systems~\cite{jaynes1957information,grandy1980principle,martyushev2006maximum,banavar2010applications,presse2013principles}. Most results have so far been restricted to classical or single-particle quantum physics. Our work represents a first step to adapting this principle to the modern setting of many-body physics. 
While we were able to firmly establish this principle in certain settings, we find that it may apply more generally in many-body physics. Indeed, see Ref.~\cite{altland2022maximum} for a recent application of this principle to many-body physics to full-counting statistics in systems with a $U(1)$ symmetry. These developments hint at broader applications of the maximum entropy principle in many-body physics. 

\section{Summary of results}
\label{sec:summary_of_results}

\begin{table*}[]
\renewcommand{\arraystretch}{1.5}
    \centering
    \begin{tabular*}{\textwidth}{@{\extracolsep{\fill}}p{0.16\textwidth}  p{0.12\textwidth}  p{0.19\textwidth}  p{0.19\textwidth}  p{0.24\textwidth}}
    \toprule
    Ensemble & \RaggedRight{Statistical description} & \RaggedRight{Maximum entropy principle}  & \RaggedRight{Measurable signature} & \RaggedRight{Information theoretic properties}\\
    \midrule
    Temporal ensemble   & \RaggedRight{Random~phase ensemble  (Sec.~\ref{sec:temporal_ensemble}) } & \RaggedRight{Maximum entropy with fixed energy populations $\{|c_E|^2\}$}  & \RaggedRight{Porter Thomas (PT) dist.~in outcome probabilities $p_z(t)$ (Thm.~\ref{thm:temp_ens_PT}).} & \RaggedRight{Constant mutual information [Eq.~\eqref{eq:temp_MI}]\newline$I(Z; T) \approx (1-\gamma)/\ln 2$ } \\
    \RaggedRight{Projected ensemble: \newline Energy non-revealing basis} & \RaggedRight{Scrooge ensemble (Sec.~\ref{sec:scrooge})} & \RaggedRight{Maximum entropy with fixed first moment $\rho_A$}&  \RaggedRight{PT dist.~in joint probabilities $p(o_A,z_B)$ [Eq.~\eqref{eq:distorted_Gaussian_kth_moment_result}]}  & \RaggedRight{Minimal mutual information [Eq.~\eqref{eq:subentropy}] \newline $I(O_A;Z_B) \approx Q(\rho)$ }  \\
    \RaggedRight{General measurement basis}& \RaggedRight{Generalized Scrooge ensemble (Sec.~\ref{sec:general_scrooge})} & \RaggedRight{Collection of maximum-entropy ensembles with fixed first moments $\rho_d(z_B)$}& \RaggedRight{PT dist. in normalized probabilities $p_{o_A}(z_B)/p_{d,o_A}(z_B)$ [Eq.~\eqref{eq:PT_in_gen_Scrooge}]} & {Minimal~interaction} information [Eq.~\eqref{eq:weighted_subentropies}] \newline $I(O_A;Z_B;T) \approx \mathbb{E}_{z_B} Q(\bar{\rho}(z_B))$ \\
    \bottomrule
    \end{tabular*}
    \caption{Summary of key results in this work. We describe the key ensembles considered in this work: the temporal ensemble and the projected ensembles generated by measurements in a basis $\{z_B\}$ under two different cases: when $\{z_B\}$ is \textit{energy non-revealing} and when it is general. 
    Their statistical descriptions are probability distributions which have maximum entropy under appropriate constraints. We finally discuss their measurable signatures and information theoretic properties.
    }
    \label{tab:summary}
\end{table*}    

In this work, we study ensembles of quantum states.
These ensembles $\mathcal{E}$ are collections of pure quantum states $\ket{\Psi_j}$, labeled by an index $j$, with an associated probability distribution $p_j$, that is, $\mathcal{E}  = \{p_j, \ket{\Psi_j}\}$. 
Quantum state ensembles can be discrete (i.e.~the index $j$ takes discrete values, such as a discrete set of measurement outcomes) or continuous (the index $j$ takes continuous values, such as time). In situations where the label is not important, we simply write $\mathcal{E}  = \{P(\Psi) ,\ket{\Psi}\}$. 

Quantum state ensembles have traditionally been studied in quantum information for tasks such as quantum communication~\cite{nielsen_chuang_2010}. However, until recently they have not been necessary in the context of many-body physics. Under traditional measurements, observations made on an ensemble of states only depend on the first moment $\rho^{(1)}_\mathcal{E} \equiv \int d\Psi P(\Psi) |\Psi\rangle\!\langle \Psi|$. This is because without the ability to determine the index $j$, measurements of an observable $O$ would yield its expected value, averaged over the ensemble, i.e.~$\int d\Psi P(\Psi) \langle \Psi|O|\Psi \rangle = \text{tr}(O \rho^{(1)}_\mathcal{E})$. Therefore, in this setting, one can only operate at the level of mixed states $\rho^{(1)}_\mathcal{E}$~\cite{reimann2008foundation,short2011equilibration,riera2012thermalization,gogolin2016equilibration,deutsch2018eigenstate,ueda2020quantum}.

Modern experiments challenge this assumption and may involve full knowledge of the ensemble $\mathcal{E}$. In modern quantum devices, one can determine the index $j$ in a variety of settings. For example, in the ensembles we study, $j$ will be the evolution time, which can be precisely controlled, or $j$ will be the outcome of a projective measurement which is recorded by the experimenter. When the index $j$ can be determined, we can study a wider range of properties, such as the \textit{fluctuations} of the expectation values $\langle \Psi_j | O |\Psi_j\rangle$. The simplest such quantity is the variance $\int d\Psi P(\Psi)\langle \Psi | O |\Psi\rangle^2 - \big[\int d\Psi P(\Psi) \langle \Psi | O |\Psi\rangle\big]^2$. This quantity depends on more than the first moment $\rho^{(1)}_\mathcal{E}$. Specifically, it depends on the second moment $\rho^{(2)}_\mathcal{E} \equiv \int d\Psi P(\Psi) |\Psi\rangle\!\langle \Psi|^{\otimes 2}$ since the first term is equal to $\int d\Psi P(\Psi)\langle \Psi | O |\Psi\rangle^2 = \text{tr}(\rho^{(2)}_\mathcal{E} O^{\otimes 2})$. In general, the statistical properties of quantum state ensembles can be systematically characterized through their moments
\begin{align}
 \rho^{(k)}_\mathcal{E} \equiv \int d\Psi P(\Psi) \ketbra{\Psi}{\Psi}^{\otimes k},
 \end{align}
for positive integer $k$\footnote{For discrete ensembles, the integral reduces to a sum over all states in the ensemble, see Sec.~\ref{sec:def_pure_state_ensembles}.}. 

Recent work has illustrated that rich physical phenomena are encoded in the higher moments of state ensembles. As a concrete, but non-exhaustive example, higher moment quantities are necessary to detect the measurement-induced phase transition in monitored quantum circuits~\cite{fisher2023random,potter2022entanglement,li2018quantum,skinner2019measurement}. In this setting, an ensemble of pure states is generated by the different outcomes of mid-circuit measurements. A phase transition is observed in quantities such as the bipartite entanglement entropy averaged over states in the ensemble. Their average $k$-th R{\'e}nyi entropies can be computed from the $k$-th moment of the ensemble, through the quantity $\int d\Psi P(\Psi) \text{tr}_A(\text{tr}_B(|\Psi\rangle\!\langle \Psi|)^k)$, where $A$ and $B$ are a bipartition of the system.

Without reference to a specific observable, the fluctuations of quantum state ensembles can be studied by examining how the states are distributed in Hilbert space. We quantify this spread through an \textit{ensemble entropy}, which is the Shannon entropy of the probability distribution $P(\Psi)$. For continuous ensembles, this requires the choice of a reference distribution, in our case, the Haar ensemble.  Thus, for continuous quantum state ensembles, the  ensemble entropy is given  as the (negative) Kullback-Liebler (KL) divergence between $P(\Psi)$ and the Haar ensemble
\begin{align}
    \text{Ent}(\mathcal{E}) &\equiv -\mathcal{D}(\mathcal{E} \Vert \text{Haar}) \label{eq:ens_entropy_intro}\\
    &\equiv -\int d\Psi~P(\Psi) \log_2\left(\frac{P(\Psi)}{P_\text{Haar}(\Psi)}\right)~. \nonumber
\end{align}

We find that the ensembles we study are statistically described by universal continuous\footnote{Even when the ensembles are finite, such as the projected ensemble, we treat them as finite samples from an underlying continuous ensemble.} state ensembles with maximum ensemble entropy, up to constraints imposed by energy conservation or effective constraints imposed by thermalization. We note that the ensemble entropy differs from the traditionally considered von Neumann entropy $S(\rho^{(1)}_\mathcal{E}) = - \text{tr}[\rho^{(1)}_\mathcal{E} \log_2 \rho^{(1)}_\mathcal{E}]$ which only depends on the average state $\rho^{(1)}_\mathcal{E}$. Instead, $ \text{Ent}(\mathcal{E})$ depends on the distribution of the individual states $P(\Psi)$, which is determined by all moments of $\mathcal{E}$, and has distinct information-theoretic properties (Fig.~\ref{fig:info_theory}). We also remark that while similar notions have been studied in the past, e.g.~in Refs.~\cite{brody2000information,anza2022quantum}, this work explicitly shows the link between this entropy measure and ensembles of states obtained from dynamics, the first to do so to our knowledge.

In the following, we demonstrate this result by focusing on two specific settings ---temporal and projected ensembles--- and their relationship. Finally, we discuss implications of our results, summarized in Table~\ref{tab:summary}.

\subsection{Temporal ensembles are random phase ensembles}
The first ensemble of states we study is the \textit{temporal ensemble}, consisting of quantum states generated by time evolution of an initial state $\ket{\Psi_0}$ under a Hamiltonian $H$~[Fig.~\ref{fig:temporal_ensemble}(a)], with probabilities being uniform on the time interval $t\in [-\tau/2,\tau/2]$.
\begin{align}
    \mathcal{E}_\text{Temp.}(\tau) &\equiv \left\{\left(1/\tau,\ket{\Psi(t)}\right)\right\}_{t=-\tau/2}^{\tau/2} \\
    &\equiv \left\{(1/\tau,\exp(-i H t) \ket{\Psi_0})\right\}_{t=-\tau/2}^{\tau/2}~. \nonumber
\end{align}
This ensemble consists of the trajectory of a quantum state as it evolves under $H$. The statistical properties of the trajectory quantify its degree of ergodicity in Hilbert space, as our results will make clear. 

Conventionally, quantum ergodicity is formulated in terms of the energy eigenstates of a system~\cite{stechel1984quantum,srednicki1999approach,goldstein2010long}. While this applies to time-independent Hamiltonian evolution or Floquet dynamics, it is not applicable to quantum dynamics in general. Furthermore, the notion of ergodicity is fundamentally dynamical in nature, and its definition in terms of static quantities such as energy eigenstates illustrates a tension in our understanding of quantum ergodicity.
Recent works~\cite{pilatowskycameo2023complete,pilatowskycameo2024hilbertspace} challenge this formulation. Specifically, they show that the ergodicity of certain time-dependent Hamiltonian dynamics cannot be formulated using conventional approaches based on the statistical properties of \mbox{(quasi)-energy} eigenstates, but instead can be characterized by the statistical properties of their trajectories.
The central idea of Refs.~\cite{pilatowskycameo2023complete,pilatowskycameo2024hilbertspace} is to test if a time-evolved state uniformly visits all ambient points in the Hilbert space, a property dubbed (complete) Hilbert-space ergodicity.

Our work generalizes the notion of Hilbert-space ergodicity~\cite{pilatowskycameo2023complete} to systems with energy conservation.
The presence of energy conservation explicitly disallows  ergodicity over all of Hilbert space. Nevertheless, we find that many-body dynamics under time-independent Hamiltonian evolution is still Hilbert-space ergodic, in the sense that the evolved states uniformly explore their available Hilbert space, which is constrained due to energy conservation.
In order to understand the effects of energy conservation, consider the simple case of a single qubit initially polarized in the $x$-direction, subject to a magnetic field along the $z$-axis. Over time, the qubit periodically traces out an equatorial circle on the Bloch sphere, instead of uniformly covering the sphere. In contrast, quantum dynamics without such conservation laws, specifically a quasiperiodically driven system, can uniformly explore Hilbert space~\cite{pilatowskycameo2023complete}. 

As in classical dynamics~\cite{lebowitz1973modern}, we find that a quantum many-body system may be Hilbert-space ergodic without being quantum chaotic (i.e. being integrable). We show that Hilbert-space ergodicity holds as long as the system satisfies the \textit{no-resonance conditions} (Definition~\ref{def:k-no-resonance}), colloquially the absence of higher order resonances in the spectrum (Definition~\ref{def:k-no-resonance}). This condition is widely assumed to be true in chaotic systems~\cite{srednicki1999approach,reimann2008foundation,linden2009quantum,short2011equilibration}, and we numerically observe that it is also true in certain integrable systems such as the XXZ model, but not in others such as the transverse field Ising model.

To prove our claim of Hilbert-space ergodicity, we show that in the $\tau\rightarrow \infty$ infinite interval limit, the temporal ensemble is equal to the \textit{random phase ensemble}~\cite{nakata2012phase,nechita2021graphical}. This is the ensemble of states with fixed amplitudes $|c_E| \equiv |\langle E| \Psi_0\rangle |$ in some basis $\{|E\rangle \}$, here the energy eigenbasis, and with complex phases $\text{arg}(c_E)$ independently random from the uniform distribution $\text{Unif}([0,2\pi))$. In a time-evolved state, the amplitudes $|c_E|^2$ are fixed by the initial state and Hamiltonian, and the phases are the only degrees of freedom. 

\noindent \textbf{Theorem~\ref{thm:temp_ens}} (informal): \textit{The infinite-time temporal ensemble obtained by evolving an initial state is equal to a random phase ensemble if the Hamiltonian $H$ satisfies all $k$-th no-resonance conditions. Conversely, for almost every initial state, if the infinite-time temporal ensemble is equal to a random phase ensemble, the Hamiltonian $H$ satisfies all $k$-th no-resonance conditions.}

The above Theorem is somewhat of a folklore result~\cite{linden2009quantum,goldstein2010long,nakata2012phase}, often stated without proof. Here we provide a full statement and proof, for several reasons. Firstly, Theorem~\ref{thm:temp_ens} has a natural interpretation in terms of a maximum entropy principle.
The random phase ensemble has maximum ensemble entropy [Eq.~\eqref{eq:ens_entropy_intro}] on this restricted Hilbert space, and in Sec.~\ref{subsec:finite_temp_temp_ens}, we quantify the rate of convergence of the temporal ensemble to the random phase ensemble with increasing interval $\tau$. Next, this leads to a novel type of universality in the many-body setting, which we outline below.

Unlike the above example of a two-level system, the exponentially larger Hilbert space in many-body systems leads to behaviour much more akin to ergodicity in the full Hilbert space. Specifically, we show that the trajectory appears pseudorandom in the following sense: in the limit of large Hilbert space dimension $D$, the temporal ensemble is \textit{statistically equal} to the Haar ensemble, \textit{deformed} as follows
\begin{equation}
\ket{\Psi(t)} \overset{\text{stat.}}{\approx} \sqrt{D\rho_d}\ket{\phi}~,~\ket{\phi}\sim \text{Haar},  
\label{eq:temporal_stat_equality}
\end{equation}
where $\rho_d \equiv \sum_E |E\rangle\!\langle E|\Psi_0 \rangle\!\langle \Psi_0|E\rangle\!\langle E|$ is the diagonal ensemble. By ``statistically equal," we mean that Eq.~\eqref{eq:temporal_stat_equality} is not true on the level of wavefunctions, but the moments $\rho^{(k)}$ of the ensembles are very close. To show this, we develope a simple analytic expression for the moments of the temporal ensemble that holds in the $D\rightarrow \infty$ limit~(Sec.~\ref{subsec:temporal_ensemble_product_form}). One consequence of this pseudorandomness is our finding that the measurement probabilities, upon appropriate rescaling, universally follow the Porter-Thomas distribution (below, and Sec.~\ref{subsec:temp_ens_PT_dist}), a hallmark of random quantum states which had been observed in quantum dynamics without conservation laws~\cite{arute2019quantum}. We discuss information-theoretic aspects of this result and also prove an analog of the ergodic theorem, a result in the theory of dynamical systems~\cite{brin2002introduction}, where we relate the statistics defined over time and configuration space (discussed below and in Sec.~\ref{subsec:ergodicity}).

\subsection{Projected ensembles are (generalized) Scrooge ensembles}
Not only do temporal ensembles satisfy the maximum entropy principle, so do \textit{projected ensembles}. Projected ensembles, introduced in a modern context by Refs.~\cite{choi2023preparing,cotler2023emergent} (and previously studied in Ref.~\cite{goldstein2016universal}), are ensembles of states obtained from a single, larger many-body state $\ket{\Psi}$. To do so, we partition the system into two parts $A$ and $B$, such that the Hilbert space factorizes (for simplicity of discussion) as $\mathcal{H} = \mathcal{H}_A \otimes \mathcal{H}_B$. We then measure $B$ in a basis $\{\ket{z_B}\}$. Each measurement outcome $\ket{z_B}$ projects $\ket{\Psi}$ into a distinct pure state $\ket{\Psi(z_B)} \in \mathcal{H}_A$~[Fig.~\ref{fig:figure0}(d)], defined as
\begin{equation}
    \ket{\Psi(z_B)} \equiv (\mathbb{I}_A \otimes \bra{z_B})\ket{\Psi}/\sqrt{p(z_B)}~,
\end{equation}
where the normalization constant is the probability of obtaining the outcome $z_B$, $p(z_B) \equiv \bra{\Psi}(\mathbb{I}_A \otimes \ketbra{z_B}{z_B})\ket{\Psi}$. This defines the projected ensemble:
\begin{align}
    \mathcal{E}_\text{Proj.} = \{p(z_B), \ket{\Psi(z_B)} \}_{z_B}~.
\end{align}
For concreteness, we primarily study global states $\ket{\Psi} = \exp(-iHt)\ket{\Psi_0}$ obtained from time-evolution by a generic, ergodic many-body Hamiltonian $H$, which rapidly reach thermal equilibrium at a temperature set by the energy of the initial state~\cite{rigol2008thermalization,polkovnikov2011nonequilibrium}. We will also study eigenstates of $H$ that exhibit similar thermal properties.

In Refs.~\cite{choi2023preparing,cotler2023emergent}, it was found that when the state $\ket{\Psi}$ is obtained from time-evolution under a chaotic many-body Hamiltonian at infinite effective temperature, the projected ensemble of small subsystems is very well-approximated by the Haar ensemble. This phenomenon is known as \textit{deep thermalization}~\cite{ippoliti2022dynamical}. This has since been rigorously shown in a variety of settings, including deep random circuits~\cite{cotler2023emergent,chan2024projected}, dual-unitary models~\cite{ho2021exact,ippoliti2022dynamical,ippoliti2022solvable,claeys2022emergent,shrotriya2023nonlocality}, free fermion models~\cite{lucas2023generalized}, or Hamiltonian dynamics with certain assumptions~\cite{wilming2022hightemperature}. Various aspects including the rate of deep thermalization~\cite{ippoliti2022dynamical,ippoliti2022solvable,shrotriya2023nonlocality,chan2024projected}, the effect of symmetries~\cite{bhore2023deep,varikuti2024unraveling}, and the presence of magic in the global state~\cite{vairogs2024extracting} have been investigated. The majority of what we know so far is restricted to the setting of infinite temperature dynamics. In this work, we set out to study the projected ensembles formed in finite temperature dynamics. 

At finite temperature, projected ensembles are not close to the Haar ensemble. To see this, note that the first moment of the projected ensemble $\rho^{(1)}_\mathcal{E}\equiv \sum_{z_B} p(z_B) \ketbra*{\Psi(z_B)}{\Psi(z_B)} = \text{tr}_B[\ketbra{\Psi}{\Psi}]$ is precisely equal to the reduced density matrix (RDM). In turn, only at infinite temperature is the RDM equal to the first moment of the Haar ensemble, the maximally mixed state $\mathbb{I}_A/D_A$. This follows from a standard result in quantum thermalization~\cite{srednicki1999approach} that the reduced density matrix $\rho_A$ is well described by a  Gibbs state, i.e.~$\rho_A\propto \exp(-\beta H_A)$, where $H_A$ is the Hamiltonian truncated to the subsystem $A$, and $\beta^{-1}$ is an effective temperature. It has remained an open question as to how to describe these projected ensembles at finite temperature. 

In this work, we find that a maximum entropy principle still applies: under an ``energy non-revealing" condition on the measurement basis $\{|z_B\rangle\}$ which we shall discuss below, the projected ensembles are described by the ensembles that have maximum entropy, over all ensembles with a given first moment. These ensembles have been previously studied and were termed \textit{Scrooge ensembles}~\cite{jozsa1994lower,goldstein2006distribution, reimann2008typicality} for their unique information-theoretic properties\footnote{As a technical point, Scrooge ensembles only have maximum entropy when they are regarded as ensembles of \textit{unnormalized states}~\cite{goldstein2006distribution}, see Sec.~\ref{sec:def_pure_state_ensembles}.}. Each density matrix $\rho$ uniquely defines a Scrooge ensemble $\text{Scrooge}(\rho)$, and our result is that at finite temperatures, the projected ensemble is the Scrooge ensemble corresponding to the appropriate thermal RDM.

\noindent \textbf{Claim \ref{claim:Scrooge}} (Corollary of Theorem~\ref{thm:moment_of_general_projected_ensemble} below, informal): \textit{The projected ensemble is statistically described by the Scrooge ensemble, if the measurement basis $\{\ket{z_B}\}$ is energy non-revealing.}

The Scrooge ensemble may be essentially understood as a deformation of the Haar ensemble such that its first moment equals $\rho$~\cite{jozsa1994lower}~[Fig.~\ref{fig:projected*}(b)]. 
Specifically, we can form the Scrooge ensemble by first sampling Haar-random states $\ket{\psi} \sim \text{Haar}$, then \emph{deforming} them by a given density matrix 
$\rho$, i.e. 
\begin{equation}
\ket{\Phi} \sim \text{Scrooge}(\rho)~\Longleftrightarrow~ \ket{\Phi} \propto \sqrt{\rho}\ket{\psi},~ \ket{\psi} \sim \text{Haar} ~.
\end{equation}
The probability $p(\Phi)$ is further weighted by the norm $\bra{\psi}\rho\ket{\psi}$, leading to the distribution of states 
 \begin{align}
&\mathcal{E}_{\text{Scrooge}}(\rho)= \left\{
\begin{aligned}
  &p(\Phi) =D\bra{\psi}\rho\ket{\psi},\\
  &\ket{\Phi} = \frac{\sqrt{\rho}\ket{\psi}}{\bra{\psi}\rho\ket{\psi}^{1/2}}
\end{aligned}\right\}_{\ket{\psi}\sim \text{Haar}}.
\end{align}
In Sec.~\ref{subsec:scrooge_moment}, we give closed form expressions for the resultant probability distribution function $p(\Phi)$, as well as its $k$-th moments.

Scrooge ensembles do not always describe the projected ensemble. They only do so in the special case when the measurement basis $z_B$ is uncorrelated with the energy, defined precisely in Sec.~\ref{subsec:gen_scrooge_to_scrooge}. Under these circumstances, we can treat the projected states $\ket{\Psi(z_B)}$ as independent, identical samples from a distribution. However, this is not always true: when $z_B$ is correlated with the subsystem energy, it is also correlated with the projected state $\ket*{\Psi(z_B)}$, such that the energies $E_A \equiv \langle \Psi(z_B)|H_A |\Psi(z_B)\rangle$ and $E_B \equiv \langle z_B | H_B |z_B \rangle$ are anticorrelated. Under these general conditions, we introduce the \textit{generalized Scrooge ensemble} to describe the projected ensemble (Section~\ref{sec:general_scrooge}). Each projected state $|\Psi(z_B)\rangle$ should be treated as a sample from a \textit{different} Scrooge ensemble for each $z_B$, i.e.~where the first moment is not the RDM $\rho_A$, but is a $z_B$-dependent mixed state $\bar{\rho}(z_B)$, which we may determine by the time-average of the projected state, $\mathbb{E}_t[|\Psi(z_B,t)\rangle \! \langle \Psi(z_B,t)|]$. Therefore, the generalized Scrooge ensemble is more accurately a collection of ensembles, where the projected ensemble is comprised of exactly one sample from each ensemble, but we will abuse notation and refer to it as an ensemble. This is our second key result.

\noindent \textbf{Theorem~\ref{thm:moment_of_general_projected_ensemble}} (informal): \textit{At \textrm{sufficiently} long times $t$,\footnote{While our proof requires times that are exponential in system size, we find from numerical experiments that we only require times that grow weakly (at most polynomially) with system size.} the projected ensemble generated by a time-evolved state $\ket{\Psi(t)}$ is approximately equal to the generalized Scrooge ensemble, which depends on the measurement basis $\{z_B\}$, Hamiltonian $H$, and initial state $\ket{\Psi_0}$, if the Hamiltonian $H$ satisfies all $k$-th no-resonance conditions. The converse direction holds for almost every initial state $\ket{\Psi_0}$.}

When the measurement basis $\{z_B\}$ is \textit{uncorrelated} with the global energy, the generalized Scrooge ensemble reduces to the Scrooge ensemble (Claim~\ref{claim:Scrooge}). Specifically, this happens when the mixed state $\bar{\rho}(z_B)$ is independent of $z_B$ (up to a rescaling), giving a concrete definition for ``energy non-revealing" bases. 

We prove this result using our knowledge of the temporal ensemble. We show that the temporal trajectory of the projected state $|\Psi(z_B,t)\rangle$ is a Scrooge ensemble (here, because of the subsystem projection, there are no hard constraints on the amplitudes, as there were in the case of the global state). We then utilize our ergodic theorem (Sec.~\ref{subsec:ergodicity}) to translate our results for a distribution over time to a distribution over the outcomes $z_B$.

While Theorem~\ref{thm:moment_of_general_projected_ensemble} concerns the time-evolved states $\ket{\Psi(t)}$, we numerically observe that the generalized Scrooge ensemble also describes the projected ensembles obtained from energy eigenstates, suggesting that our \MEP~holds for a broader class of many-body states.

In projected ensembles, the maximum entropy principle sheds light on the nature of correlations between system and bath.
Specifically, the projected ensemble reveals correlations between the subsystem $A$ and its complement $B$, which we quantify in terms of the mutual information. 
We find that the \emph{nontrivial} correlations between them are minimal and basis independent. Here, nontrivial correlations mean any amount of correlation (mutual information) that is beyond the thermal correlations that remain after infinite-time averaging. In other words, any meaningful correlations between $A$ and $B$ originate only from energy conservation, and other symmetries if they exist.
Below and in Sec.~\ref{subsec:scrooge_mutual_info} and~\ref{subsec:gen_Scrooge_info}, we discuss consequences of these correlations for tasks such as communication and compression.

\subsection{Implications and discussion}
Below, we summarize several broader points that we found applicable to both the temporal and projected ensemble.

\textit{Ergodic theorem ---} Remarkably, the temporal and projected ensembles are closely related. We prove that a version of the ergodic theorem holds in many-body dynamics. This enables us to translate statements about distributions over time (temporal ensemble) into statements about distributions over measurement outcomes, at fixed, typical points in time. In particular, we use this to prove
Theorem~\ref{thm:moment_of_general_projected_ensemble}, a statement about the projected ensembles at typical points in time.

\textit{Operational meaning of ensemble entropy ---} In this work, we use the ensemble entropy to define our \MEP. We also identify its operational interpretation: the ensemble entropy quantifies the information required to classically represent an ensemble of states. Our maximally entropic ensembles are therefore the most difficult to compress, formalizing our intuition that the states generated by natural, chaotic many-body dynamics lack overall structure that can be exploited for compression. 

Specifically, the ensemble entropy quantifies the number of classical bits required to store an ensemble of states up to numerical precision $\epsilon$~[Fig.~\ref{fig:info_theory}(a)]. This should be contrasted with the minimum number of \textit{qubits} required to store the ensemble in quantum memory, which was established by Schumacher to depend only on the first moment $\rho_\mathcal{E}$ of the ensemble through the von Neumann entropy $S(\rho_\mathcal{E})$~\cite{schumacher1995quantum}.

\textit{Information-theoretic consequences ---} Not only are our ensembles of states maximally difficult to classically store, they are also maximally difficult to distinguish by measurement, as quantified by their \textit{accessible information}~\cite{jozsa1994lower,nielsen_chuang_2010}.
The accessible information quantifies the maximum rate of classical information that can be transmitted by sending states from the ensemble over a quantum channel (in terms of bits of information per channel use) [Fig.~\ref{fig:info_theory}(b)]. It is upper-bounded by the well-known Holevo bound~\cite{holevo1998capacity}, and lower-bounded by Josza, Robb, and Wooters~\cite{jozsa1994lower}. They found that the Scrooge ensemble achieves this lower bound, dubbed the \textit{subentropy} $Q(\rho)$ [Eq.~\eqref{eq:subentropy}]. They dubbed the ensemble the ``Scrooge ensemble" because it is uniquely ``stingy" in the amount of information revealed under measurement. Therefore, the projected ensemble not only has maximum ensemble entropy, it also minimizes the amount of information that can be extracted. These information theoretic properties are directly measurable [Fig.~\ref{fig:projected*}(f) and Fig.~\ref{fig:generalized_scrooge}(d)] but have not yet been observed in experiment. 

In the case of temporal ensembles, the accessible information quantifies the amount of information that can be gained about the evolution time of the state. We find that it takes on a universal value $(1-\gamma)/\ln 2 \approx 0.61$, where $\gamma$ is the Euler-Mascheroni constant [Fig.~\ref{fig:temporal_ensemble}(d)]. 

\begin{figure}
    \centering
\includegraphics[width=0.95\columnwidth]{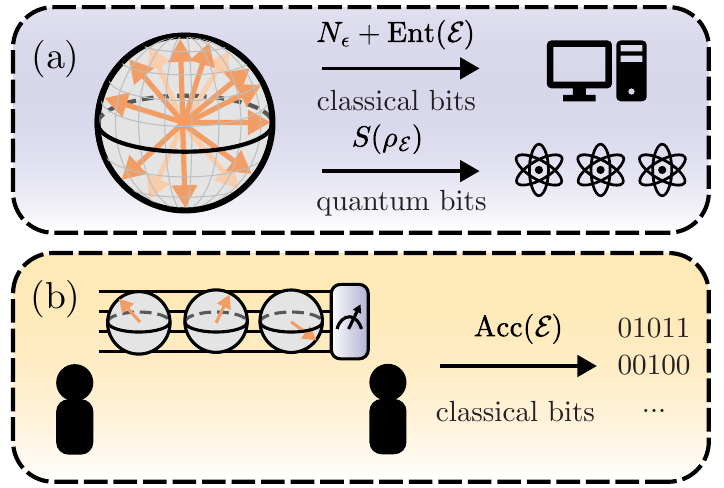}
    \caption{Information theoretic implications of our work. Our results indicate that two tasks are maximally difficult for our ensemble of states: (a) Information compression of an ensemble of states $\mathcal{E}$. The ensemble entropy quantifies the task of classically representing $M$ states from $\mathcal{E}$ up to precision $\epsilon$ (denoted here by a fine grid partitioning the Hilbert space into $2^{N_\epsilon}$ elements). This task requires $M\cdot [N_\epsilon+\text{Ent}(\mathcal{E})]$ classical bits. This classical storage task differs from the analogous quantum task: storing the same $M$ states in quantum memory only requires $M S(\rho_\mathcal{E})$ qubits, where the $S(\rho_\mathcal{E})$ is the von Neumann entropy of the average state $\rho_\mathcal{E}$.  (b) We also consider  a task originally studied by Holevo~\cite{holevo1998capacity}: sending classical information by sending states from $\mathcal{E}$ over a quantum channel. This channel capacity is given by the accessible information $\text{Acc}(\mathcal{E})$, which our ensembles of states uniquely minimize.}
    \label{fig:info_theory}
\end{figure}

\textit{Measurable signature: universal Porter-Thomas distribution ---} Our \MEP~selects distributions of states that are ``closest" to the Haar ensemble, satisfying certain constraints. 
The relationship between our ensembles of states to the Haar ensemble gives rise to a measurable signature of our \MEP: the \textit{Porter-Thomas (PT) distribution}~\cite{porter1956fluctuations}. Originally identified in nuclear physics settings, the PT distribution has received renewed attention in many-body physics. This distribution represents the distribution of overlaps between Haar random states $\ket{\psi}$ with a fixed (but arbitrary) state $\ket{\Phi}$, i.e.~the probability $p(\Phi) \equiv |\langle \Phi | \psi \rangle |^2$ of measuring $\ket{\Phi}$ from a Haar-random state $\ket{\psi}$~\cite{boixo2018characterizing}. Not only is the PT distribution a measurable signature of Haar-random states, we find that it is also present in many natural many-body states. In companion work~\cite{shaw2024universal}, we experimentally and theoretically investigate the emergence of the Porter-Thomas distribution in global quantities, its relationship to local quantities, and how the Porter-Thomas distribution is modified in the presence of noise, using our results to learn about the noise in an experiment.

\subsection{Organization of paper}
The rest of the paper is organized as follows. In Section~\ref{sec:def_pure_state_ensembles}, we provide an overview of the central object of our work --- ensembles of pure states and their information-theoretic and measurable properties. We then proceed to study our two ensembles of states: the temporal ensemble (Section~\ref{sec:temporal_ensemble}) and the projected ensemble in a special and a general setting (Sections~\ref{sec:scrooge} and \ref{sec:general_scrooge}). Finally, in Section~\ref{sec:eigenstate_PT} we discuss the PT distribution in the properties of many-body eigenstates.

\section{Ensembles of pure states}
\label{sec:def_pure_state_ensembles}
Ensembles of pure states generated by quantum many-body dynamics are the central objects of study in this work. In this section, we provide an overview of ensembles of pure states and their properties. After defining normalized and unnormalized ensembles of states, we introduce two quintessential ensembles: the Haar and Gaussian ensembles, which we use as the reference distributions for our ensemble entropy. We then discuss the quantities we use to analyze our ensembles of states: the $k$-th statistical moments, ensemble entropy, mutual information, and PT distribution.

Ensembles of pure states are collections of pure states in a Hilbert space of dimension $D$, weighted by a probability distribution. 
\begin{definition}
An \textit{ensemble of pure states} is a probability distribution of states $\ket{\Psi_j}$ weighted by probabilities $p_j$:
\begin{equation}
    \mathcal{E} = \{p_j, \ket{\Psi_j}\}
\end{equation}
when the $p_j$'s are omitted, we take the states $\ket{\Psi_j}$ to be uniformly distributed over indices $j$~\footnote{We ignore subtleties about the global phase of the states $|\Psi_j\rangle$, and assume that the states are gauge-fixed to, for example, have real positive values of the first entry $\langle u_1|\Psi_j\rangle$ in some basis $\{|u_i\rangle\}$.}. 
\end{definition}

Our distributions may be continuous or discrete. In the continuous case, the distribution is a measure over normalized wavefunctions, which we can specify in terms of its components $\Psi_z \equiv \braket{z}{\Psi}$, subject to the constraint that $\sum_z \abs{\Psi_z}^2 = 1$:
\begin{equation}
    P(\Psi) d\Psi = \int P(\{\Psi_z\}) \delta(\Vert\Psi \Vert^2-1) \prod_z d^2 \Psi_z~, 
\end{equation}
where $\prod_z d^2 \Psi_z$ is the Euclidean measure on $\mathbb{C}^{\otimes D}$.

\textit{Unnormalized ensembles of states ---} We will also find it useful to define distributions of \textit{unnormalized} states. In this case, the distribution is simply a measure over $D$-dimensional complex vectors, defined in terms of its components $\tilde{\Psi}_z \equiv \braket*{z}{\tilde{\Psi}}$. For consistency, we will use tildes throughout this work to denote unnormalized states $\ket*{\tilde{\Psi}}$.

An unnormalized ensemble can be mapped to a normalized one by normalizing each states. This results in the probability distribution $P(\Psi) d\Psi = \int d\tilde{\Psi} P(\tilde{\Psi}) \delta(\ket*{\tilde{\Psi}}\!/\Vert \tilde{\Psi} \Vert - \ket{\Psi}) $. For each state $\ket{\Psi}$, the normalized ensemble averages over the distribution of the norm $\Vert \tilde{\Psi} \Vert$ to give the measure $P(\Psi)$.
 Therefore, the unnormalized ensemble contains more information than the normalized one: specifically information about fluctuations of the norm $\Vert \tilde{\Psi} \Vert$.

\subsection{Gaussian and Haar random ensembles of states}
The ensemble entropy, which quantifies our \MEP, is defined with reference to the Haar (or Gaussian) ensemble. These ensembles are unique in being invariant under arbitrary unitary transformations and have been the subject of considerable study as paradigmatic distributions of random quantum states~\cite{harrow2013church}.

\textit{Random unnormalized states: the Gaussian ensemble ---} The Gaussian ensemble is a probability distribution over unnormalized states, where each component $\tilde{\Psi}_z\equiv \braket*{z}{\tilde{\Psi}}$ is an independent complex Gaussian variable.
\begin{align}
P_\text{Gauss}(\tilde{\Psi}) d \tilde{\Psi} = \prod_z \frac{D}{\pi} \exp[-D \vert \tilde{\Psi}_z \vert^2] d^2 \tilde{\Psi}_z~.
\label{eq:Gaussian_ens}
\end{align}
The normalization above is chosen so that the average norm is $\mathbb{E}[\braket*{\tilde{\Psi}}] = 1$.

\textit{Random normalized states: the Haar ensemble ---} The Haar ensemble is the unique distribution of normalized states that is invariant under any unitary transformation $U$. It is naturally related to the Gaussian ensemble: one can sample from the Haar ensemble by sampling states $\ket{\Psi}$ from the Gaussian ensemble, then normalizing them. This induces the measure
\begin{align}
P_\text{Haar}(\Psi)d\Psi  = \int P_\text{Gauss}(\tilde{\Psi}) \delta(\ket{\Psi} - \ket*{\tilde{\Psi}}/\Vert \tilde{\Psi} \Vert) d \tilde{\Psi} ~.
\label{eq:Gaussian_ens}
\end{align}
The Haar ensemble has been widely studied in quantum information science and utilized for a myriad of applications, see e.g.~Ref.~\cite{harrow2013church}. In particular, the moments of the Haar ensemble play a key role, which we now discuss.

\subsection{Statistical $k$-th moments}
An ensemble of states contains a large amount of information, beyond that of its density matrix. In this work, we systematically study and quantify ensembles of states by their higher-order moments. 

Higher order quantities have received less attention in many-body physics because they have been inaccessible to traditional experiments, requiring a large number of controlled repetitions in order to access higher order statistical properties. Modern quantum devices address these limitations with both high repetition rates and a high degree of controllability~\cite{choi2023preparing,shaw2023benchmarking}.

In this work, we find that not only do the average properties (first moments) thermally equilibrate, so too do the higher moments of our natural ensembles of states. We first define the $k$-th moments.

The \textit{$k$-th moment} of an ensemble $\mathcal{E}$ is, for discrete ensembles
\begin{align}
     \rho^{(k)}_\mathcal{E} &\equiv \sum_{j} p_j \ketbra{\Psi_j}{\Psi_j}^{\otimes k},
 \end{align}
for continuous ensembles is
\begin{equation}
     \rho^{(k)}_\mathcal{E} \equiv \int d\Psi P(\Psi) \ketbra{\Psi}{\Psi}^{\otimes k},
\label{eq:normalized_kth_moment}
\end{equation}
and for unnormalized ensembles we have (denoting the moments with tildes)
\begin{equation}
 \tilde{\rho}^{(k)}_\mathcal{E} = \int d\tilde{\Psi} P(\tilde{\Psi}) \ketbra*{\tilde{\Psi}}{\tilde{\Psi}}^{\otimes k}.   
\label{eq:unnormalized_kth_moment}
\end{equation}
 
The $k$-th moment of an ensemble can be used to describe the expectation value of any linear operator $O^{(k)}\!:\!\mathcal{H}^{\otimes k} \rightarrow \mathcal{H}^{\otimes k}$ which acts on $k$-copies of the Hilbert space. The ensemble average of $O^{(k)}$ can be obtained from the $k$-th moment: $\mathbb{E}_\mathcal{\ket*{\Psi} \sim E}[\bra{\Psi}^{\otimes k} O^{(k)}\ket{\Psi}^{\otimes k}] = \text{tr}[O^{(k)} \rho^{(k)}_{\mathcal{E}}]$. For example, $O^
{(k)} = O^{\otimes k}$ describes the $k$-th moment of the expectation values $\mathbb{E}_\mathcal{E}[\bra{\Psi}O\ket{\Psi}^k]$. Here, we use $\mathbb{E}_\mathcal{E}[f(\Psi)]$ to denote the ensemble average $\int d\Psi P(\Psi) f(\Psi)$.

The $k$-th moment can be thought of as a generalization of the density matrix. Namely, the first moment $\rho^{(1)} = \sum_j p_j \ketbra{\Psi_j}{\Psi_j}$ is the conventional density matrix, and describes the average state of the ensemble. All higher moments $\rho^{(k)}_\mathcal{E}$ are likewise positive semi-definite operators and, for ensembles of normalized states, have unit trace.

\textit{Moments of Haar ensemble ---} The higher moments of the Gaussian and Haar ensembles are particularly simple. The moments of the Haar ensemble have analytical expression~\cite{harrow2013church}:
\begin{align}
    \rho^{(k)}_\text{Haar} &\equiv \int_\text{Haar} d\Psi \ketbra{\Psi}{\Psi}^{\otimes k} =  \frac{\sum_{\sigma \in S_k} \text{Perm}(\sigma)}{\prod_{i=0}^{k-1}(D+i)}~,
    \label{eq:Haar_kth_moment}
\end{align}
with similar expression for the moments of the Gaussian ensemble.
Here, $\text{Perm}(\sigma)$ is the operator that permutes $k$ states by the permutation $\sigma \in S_k$, i.e.~$\text{Perm}(\sigma)[\ket*{\Psi_1} \otimes \cdots \otimes \ket*{\Psi_k}] = \ket*{\Psi_{\sigma(1)}} \otimes \cdots \otimes \ket*{\Psi_{\sigma(k)}}$. This result follows from the Schur-Weyl duality~\cite{christandl2006structure}. Its simplicity enables the design of useful quantum algorithms such as classical shadow tomography~\cite{huang2020predicting,elben2023randomized}, as well as to analytical solutions of models of quantum dynamics such as random quantum circuits~\cite{fisher2023random}. We will find that our ensembles of states have $k$-th moments with similar structure as the Haar $k$-th moments, enabling phenomena such as the emergence of the Porter-Thomas distribution.

\textit{Trace distance of higher moments --- } Higher moments also enable the systematic and quantitative comparison of two ensembles of states. Specifically, we compare their $k$-th moments through their \textit{trace distance}.
\begin{align}
    \Delta^{(k)}_\text{Haar} &\equiv \frac{1}{2} \Vert \rho^{(k)}_\mathcal{E} - \rho^{(k)}_\text{Haar}\Vert_* ~,
    \label{eq:Haar_trace_distance}
\end{align}
where the trace (or nuclear) norm $\Vert \cdot \Vert_*$ of a matrix is the sum of the absolute values of its eigenvalues. In particular, the trace distance sets an upper bound on how well any $k$-copy observable $O^{(k)}$ can distinguish two ensembles of states~\cite{nielsen_chuang_2010}.

Refs.~\cite{choi2023preparing,cotler2023emergent} found that with increasing system size, projected ensembles of  infinite effective temperature states --- either obtained from quench dynamics of simple initial states or chosen from high energy eigenstates --- have closer $k$-th moments to the Haar moment (in other words, projected ensembles form better approximate $k$-designs). Analogously, we find that the $k$-th moments of the projected and (generalized) Scrooge ensembles converge in trace distance with increasing system size [Fig.~\ref{fig:projected*}(b) and Fig.~\ref{fig:generalized_scrooge}(b)].

\subsection{Ensemble entropy}
The central claim of our work is that natural many-body dynamics generates ensembles of states with maximum entropy. The entropy in question, the ensemble entropy, differs from the conventional von Neumann entropy $S(\rho) \equiv -\tr[\rho \log(\rho)]$. This is not suitable as an entropy of an ensemble of states, since it only depends on the first moment $\rho$, and there are many ensembles of states which share the same average state $\rho$. Instead, our ensemble entropy is the Shannon entropy of the distribution of states over Hilbert space. For continuous distributions, this entropy must be defined relative to a reference distribution. 
\begin{definition}
    The \textit{ensemble entropy} $\text{Ent}(\mathcal{E})$ is the negative of the Kullback-Leibler (KL) divergence between $\mathcal{E}$ and the reference measure. For normalized ensembles, the reference distribution is the Haar measure [Eq.~\eqref{eq:ens_entropy_intro}]
\begin{align}
    \text{Ent}(\mathcal{E}) &\equiv -\mathcal{D}(\mathcal{E} \Vert \text{Haar}) \label{eq:entropy_normalized_ensemble}\\
    &= -\int d\Psi~P(\Psi) \log_2\left(\frac{P(\Psi)}{P_\text{Haar}(\Psi)}\right)~, \nonumber
\end{align}
while for unnormalized ensembles, the reference distribution is the Gaussian measure
\begin{align}
    \text{Ent}(\mathcal{E}) &\equiv -\mathcal{D}(\mathcal{E} \Vert \text{Gauss}) \label{eq:entropy_unnormalized_ensemble}\\
    &= -\int d\tilde{\Psi}~P(\tilde{\Psi}) \log_2\left(\frac{P(\tilde{\Psi})}{P_\text{Gauss}(\tilde{\Psi})}\right)~. \nonumber
\end{align}
\end{definition}
The ensemble entropy is non-positive, achieving zero only when the two distributions are equal. See Refs.~\cite{brody2000information,anza2022quantum} for similar attempts to quantify the entropy of state ensembles.

\textit{Connection to von Neumann entropy ---} While the ensemble entropy may seem an unconventional notion of entropy, in fact it can be related to the von Neumann entropy of higher moments of the ensemble. For finite ensembles $\mathcal{E} = \{p_j, |\Psi_j\rangle\}$, the quantity
\begin{equation}
    \text{lim}_{k\rightarrow \infty} S(\rho^{(k)}_\mathcal{E})~.
\end{equation}
converges to the Shannon entropy $-\sum_j p_j \log_2 p_j$, which in turn converges to the ensemble entropy (plus a divergent constant), when the discrete ensemble is appropriately selected (see below). This follows from the fact that with increasing number of copies $k$, two states $\ket{\psi}^{\otimes k}, \ket{\phi}^{\otimes k}$ become increasingly distinguishable (orthogonal), since $|\braket{\phi}{\psi}|^{2k}$ quickly vanishes when $|\braket{\phi}{\psi}| \neq 1$. Therefore, for any discrete ensemble $\{p_j, \ket{\Psi_j}\}$, the quantities $p_j$ and $\ket{\Psi_j}^{\otimes k}$ converge to the eigenvalues and eigenvectors of $\rho^{(k)}$ respectively. A similar argument can be made for continuous distributions, and we obtain $ \text{Ent}(\mathcal{E}) = \text{lim}_{k\rightarrow \infty} S(\rho^{(k)}_\mathcal{E}) + C$, with a divergent constant $C$.

\textit{Ensemble entropy and information compression ---} The ensemble entropy has a simple operational meaning: it quantifies the minimum number of bits required to store the states in the ensemble in classical memory, up to a \textit{fixed precision} $\epsilon$. If storing a single such state requires $N_\epsilon$ classical bits, we find that storing a \textit{large} number of states $M$ drawn from the ensemble $\mathcal{E}$ requires $M[N_\epsilon + \text{Ent}(\mathcal{E})]$ classical bits (note that $\text{Ent}(\mathcal{E}) \leq 0$), as illustrated in Fig.~\ref{fig:info_theory}(a). 
Our maximum entropy principle indicates that the required storage for many-body states generated by time evolution is maximal.

To see this, note that storing states up to accuracy $\epsilon$ amounts to partitioning the Hilbert space in a uniform $\epsilon$-net, i.e.~into a discrete set of points which are a distance $\epsilon$ apart~\cite{cover1999elements}. To store the states in the ensemble, each state in the continuous Hilbert space is assigned to one of these discrete points. Storing these states thus amounts to storing these labels --- they may either be stored directly or they may be compressed according to how often the labels appear. In the former case, each label requires $N_{\epsilon}$ bits, if there are $2^{N_{\epsilon}}$ points in the net. In the latter, the optimal rate of compression is given by the relative entropy between the discrete probabilities on this $\epsilon$-net and the uniform distribution~\cite{cover1999elements}. In the limit of small $\epsilon$, the discrete and continuous distributions converge, and therefore the overall amount of compression achieved is equal to the ensemble entropy $\text{Ent}(\mathcal{E})$ per state, and the overall number of bits required to store $M$ states is $M[N_\epsilon + \text{Ent}(\mathcal{E})]$.

Finally, we note that this task is different from the task of storing an ensemble of states in quantum memory, i.e. compressing the states into a smaller number of \textit{qubits}. This can be done by Schumacher compression~\cite{schumacher1995quantum}, with compression rate given by the von Neumann entropy of the first moment $S(\rho_\mathcal{E})$. In contrast, the task of classical approximate storage depends on all moments of the ensemble.

\subsection{Mutual information}
\label{subsec:mutual_info}
In addition to the ensemble entropy, we study another information-theoretic property: the mutual information, which quantifies the information-theoretic correlations between two variables. We find that not only is the ensemble entropy maximized in our ensembles, the mutual information is also minimized. These two quantities are a priori unrelated, and lead to distinct operational consequences.

The \textit{mutual information} $I(X;Z)$ between random variables $X$ and $Z$ is defined as
\begin{align}
&I(X;Z) = H(X)-H(X|Z)\\
&= -\sum_x p(x) \log_2[p(x)] + \sum_z p(z) \sum_x p(x|z) \log_2[p(x|z)]~.\nonumber
\end{align}
This can be interpreted as the information gain (reduction in entropy) of the variable $X$ from measurements of the variable $Z$. In this work, we take the logarithm to base 2, which quantifies the mutual information in units of bits.

The mutual information is relevant in several information theoretic tasks, including quantifying the maximum rate of information transmission through a noisy classical channel~\cite{jozsa1994lower}, where the variables $X$ and $Z$ are here the channel input and output. Below, we focus on the significance of mutual information to a quantum communication task.

Holevo studied the use of a noiseless quantum channel to transmit classical information~\cite{holevo1998capacity}. Given an ensemble of (possibly mixed) states $\mathcal{E} = \{p_x, \rho_x\}$, Holevo considered a task where one party (Alice) transmits a classical message $\{x_1,x_2,\dots\}$ to another (Bob), by sending a sequence of states $\{\rho_{x_1},\rho_{x_2},\dots\}$. Bob decodes this message using measurements $Z$ on the received states. When the quantum states $\ket*{\psi_{x_i}}$ are not orthogonal, the symbols $x_i$ cannot be unambiguously determined: the maximum classical transmission rate is therefore given by the \textit{accessible information}
\begin{equation}
\text{Acc}(\mathcal{E}) \equiv \text{sup}_{Z} I(X;Z),    
\label{eq:accessible_information}
\end{equation}
the mutual information maximized over complete measurement bases $Z$.

Holevo provided an upper bound for the accessible information: the \textit{Holevo-$\chi$ quantity}  $\chi(\mathcal{E}) \equiv S(\rho_\mathcal{E}) - \sum_x p_x S(\rho_x)$~\cite{holevo1998capacity}. In an ensemble of pure states, the Holevo-$\chi$ quantity is equal to the von Neumann entropy $\chi(\mathcal{E}) = S(\rho_\mathcal{E})$. For a given first moment $\rho_\mathcal{E}$, the Holevo bound is saturated when the ensemble of states is formed using the eigenbasis of $\rho_\mathcal{E}$, i.e.~$\mathcal{E} = \{\lambda_j, \ket{j}\}$.

Josza, Robb and Wooters~\cite{jozsa1994lower} subsequently established a lower bound for the accessible information, dubbed the \textit{subentropy} $Q(\rho_{\mathcal{E}})$. That is, for any ensemble of pure states $\mathcal{E}$ with a given first moment $\rho_{\mathcal{E}}$, the accessible information is bounded by:
\begin{align}
Q(\rho_{\mathcal{E}}) \leq \text{Acc}(\mathcal{E}) \leq S(\rho_{\mathcal{E}})~.
\end{align}

Like the von Neumann entropy, the subentropy only depends on the first moment $\rho_\mathcal{E}$. For any ensemble $\mathcal{E}$, the subentropy is defined as the mutual information $I(X;Z)$ averaged over all complete projective measurement bases $Z$. This is equivalent to averaging over unitaries $U$ which rotate measurement bases and has expression~\cite{jozsa1994lower,datta2014properties}
\begin{equation}
    Q(\rho_\mathcal{E}) \equiv - \sum_j \left(\prod_{i\neq j} \frac{\lambda_j}{\lambda_j - \lambda_i} \right) \log_2 \lambda_j~, \label{eq:subentropy}
\end{equation}
where $\lambda_j$ are the eigenvalues of $\rho_\mathcal{E}$.  More recently, the subentropy has received attention in studies of the dynamics of information in many-body systems~\cite{ippoliti2023learnability}.

Josza, Robb and Wooters established that the Scrooge ensemble saturates this lower bound (Section~\ref{sec:scrooge},~\cite{jozsa1994lower}). Scrooge ensembles are therefore ensembles with minimum accessible information (or, the most information ``stingy" ensembles), leading to their name.

(Generalized) Scrooge ensembles describe projected ensembles, therefore our projected ensembles have minimal accessible information. That is, in natural many-body systems, it is maximally hard to predict the measurement outcome of one subsystem from measurements on the other. 

\subsection{Porter-Thomas distribution}
\label{subsec:PT_dist}
Finally, we turn to a measurable signature of our \MEP, the Porter-Thomas (PT) distribution. Originally studied in nuclear physics~\cite{porter1956fluctuations}, the PT distribution has received renewed attention due to its presence in Haar-random many-body states, approximated by the output of deep random quantum circuits~\cite{boixo2018characterizing}.

Concretely, a positive random variable $X$ follows the \textit{Porter-Thomas distribution} with mean $\mu$ [i.e.~$X\sim \text{PT}(\mu)$] if its probability distribution function is
    \begin{align}
        \text{Pr}[X=x] dx &= \exp(-x/\mu) dx/\mu
    \end{align}    

In particular, the overlap of states $\ket{\Psi}$ drawn from the Haar random distribution to any fixed state $\ket{\Phi}$ follows a PT-distribution with $\mu = 1/D$~\cite{arute2019quantum}
\begin{equation}
    \text{Pr}_{\ket{\Psi}\sim \text{Haar}}[\abs{\braket{\Psi}{\Phi}}^2=x ]dx = \exp(-D x) d D x~.
\end{equation}

Notably, this is a statement about the \textit{probability distribution of probabilities} $|\langle \Psi |\Phi \rangle|^2$. While relatively unknown in physical settings, this has received attention in the statistical literature under several names, including the \textit{fingerprint} and \textit{histogram of the histogram}~\cite{valiant2011power,valiant2017estimating}. The fingerprint is sufficient to describe the properties of a distribution that do not depend on its labels, most notably its Shannon entropy.

The probability-of-probabilities can be studied in modern quantum experiments. For example, the PT distribution expected from Haar-random states has been predicted and subsequently experimentally verified in deep random quantum circuits~\cite{boixo2018characterizing,arute2019quantum} and many-body states at infinite temperature~\cite{cotler2023emergent}. Even though the ensembles we study in this work are not the Haar ensemble, we find that the \MEP~leads to the same PT distribution, when observables are appropriately normalized. In companion work, we experimentally and theoretically study this object in detail~\cite{shaw2024universal}.

\section{Temporal ensembles}
\label{sec:temporal_ensemble}
The most natural setting of quantum dynamics is time evolution under a time-independent Hamiltonian. It is an important endeavor to understand the phenomena that can arise in this setting. Despite the fact that it is simple to state, relatively few tools are available to analyze the dynamics under an interacting many-body Hamiltonian, and one must often turn to more analytically tractable models of quantum dynamics such as random quantum circuits~\cite{fisher2023random}.

In this section, we show universal statistical properties of the temporal ensemble --- the time-trace of a state evolving under Hamiltonian dynamics. Intuitively, Hamiltonian dynamics conserves energy, and its trajectory in Hilbert space cannot be unconstrained. In fact, the Schr\"odinger equation conserves the population $|c_E|^2$ of each energy eigenstate $\ket{E}$, where $|c_E|^2 \equiv |\langle\Psi_0|E\rangle|^2$ for an initial state $\ket{\Psi_0}$. This imposes a total of $D$ (the Hilbert space dimension) constraints, and the only remaining degrees of freedom are the complex phases $\text{arg}(\langle E|\Psi(t)\rangle)$. One might conjecture that these phases are uniformly and independently random, satisfying the \MEP. 

Indeed, this expectation is correct: the temporal ensemble is (statistically) described by the random phase ensemble, the ensemble of states with fixed magnitudes and independent, uniformly random phases in the energy eigenbasis. This has been a folklore result in several works, see e.g.~Refs.~\cite{linden2009quantum,goldstein2010long,nakata2012phase}, however, it has to our knowledge not been presented with the same level of rigor as in this work. Here, we rigorously state our result with a necessary and sufficient condition --- the $k$-th no-resonance condition --- and discuss its corollaries. 

The first moments of the temporal ensemble are heavily used in the quantum thermalization literature. Seminal works in quantum chaos and thermalization introduced the \textit{diagonal ensemble} $\rho_d \equiv \sum_E |c_E|^2 \ketbra*{E}{E}$ to characterize observables at thermal equilibrium~\cite{rigol2008thermalization,polkovnikov2011nonequilibrium}.
However, this only captures the average values of observables and does not address higher moments such as their variance over time. Previous work either bounds~\cite{short2011equilibration,reimann2008foundation} or predicts the dependence of fluctuations with the total Hilbert space dimension~\cite{srednicki1999approach,nation2018offdiagonal,nation2019ergodicityprobes}, but here we provide approximate formulae for all moments. We apply our result on the temporal ensemble to a novel class of observables: global projective measurements, discovering universal behaviour in its higher moments and information-theoretic properties.

We first formalize the above intuitive expectation and present the sketch of our proof. We define the \textit{temporal ensemble} as the set of $\ket{\Psi(t)}$ explored by a \textit{fixed} initial state $\ket{\Psi_0}$ time-evolving under a Hamiltonian $H$ over a time interval $t \in [-\tau/2,\tau/2]$.
\begin{align}
    \mathcal{E}_\text{Temp.}(\tau) = \{\exp(-i H t) \ket{\Psi_0}\}_{t=-\tau/2}^{\tau/2} 
\end{align}
When $\tau\rightarrow \infty$, we equate this to the \textit{random phase ensemble}, the distribution of states with fixed magnitudes in a given basis $\abs{\braket{j}{\Psi}} = \abs{c_j} \equiv \abs{\braket{j}{\Psi_0}}$, with i.i.d.~random phases $\phi_j \sim \text{Unif}([0,2\pi))$.
\begin{gather}
    \mathcal{E}_\text{Rand.~Phase} = \Big\{\sum_{j=1}^D \abs{c_j} e^{i\phi_j} \ket{j}~\Big|~\phi_j \sim \text{Unif}([0,2\pi)) \Big\},
    \nonumber
\end{gather}
This ensemble was studied in Ref.~\cite{nakata2012phase}, which characterizes properties such as the entanglement of its typical states. Ref.~\cite{nechita2021graphical} develops a graphical calculus to perform the $k$-copy averages we discuss in this work, although we will not need its full generality here (see also Ref.~\cite{mark2022benchmarking,liu2023predicting}).

Our result, which equates the temporal ensemble (in the limit $\tau\rightarrow \infty$) with the random phase ensemble, relies on the following assumption:

\begin{definition}[$k$-th no-resonance condition]
\label{def:k-no-resonance}
A Hamiltonian $H$ satisfies the $k$-th \textit{no-resonance condition}~\cite{reimann2008foundation,linden2009quantum,kaneko2020characterizing,huang2021extensive} if for any two sets of $k$ eigenvalues $\{E_{\alpha_j}\}_{j=1}^k$ and $\{E_{\beta_j}\}_{j=1}^k$ the equation
\begin{equation}
    E_{\alpha_1} + E_{\alpha_2} + \cdots + E_{\alpha_k} = E_{\beta_1} + E_{\beta_2} + \cdots + E_{\beta_k}~ \label{eq:k_no_resonance}
\end{equation}
is true if and only if the sets of indices $(\alpha_1, \alpha_2, \dots, \alpha_k)$ and  $(\beta_1, \beta_2, \dots, \beta_k)$ are equal up to reordering. 
\end{definition}

In other words, the two sets of indices must be related by some permutation $\sigma \in S_k$, $\beta_j = \alpha_{\sigma(j)}$. 
It is easy to see that the $k$-th no-resonance condition implies the $k'$-th no-resonance conditions, for $k'<k$. The no-resonance conditions are considered generically satisfied and are used in works such as Refs.~\cite{srednicki1999approach,reimann2008foundation,linden2009quantum,short2011equilibration}: it is believed that if the condition does not hold in a given ergodic system, any small perturbation will generically break any $k$-th resonances~\cite{kaneko2020characterizing}. We note recent work~\cite{riddell2023noresonance} that studies near-violations of the $k$-th no-resonance condition, concluding that they are generically small and bounding their effects on certain quantities.

The no-resonance condition for $k=1$ simply states that the energy eigenspectrum has no degeneracies. However, this may not be satisfied in generic Hamiltonians. The presence of non-Abelian symmetries ensures that there are degenerate multiplets of states. Strictly speaking, this violates all no-resonance conditions. However, for the purposes of the temporal ensemble, we may disregard these degeneracies: the initial state $\ket{\Psi_0}$ projects the degenerate eigenspace onto a single eigenstate. Our conclusions below hold as long as the spectrum of $H$, with degeneracies removed, satisfies the $k$-th no-resonance conditions~\cite{mark2022benchmarking}. We call this the \textit{no-resonance condition modulo degeneracies}.

With the above definitions, we are ready to state our result equating the temporal and random phase ensembles.
\begin{theorem}
\label{thm:temp_ens}
Given an initial state $\ket{\Psi_0}$ and a Hamiltonian $H$, the infinite-time temporal ensemble is equal to the random phase ensemble:
\begin{equation}
\lim_{\tau \rightarrow \infty} \mathcal{E}_\text{Temp.}(\tau) = \mathcal{E}_\text{Rand.~Phase}~,    
\end{equation}
with fixed energy populations $|c_E|^2 = \abs{\braket{\Psi_0}{E}}^2$, if $H$ satisfies all $k$-th no-resonance conditions modulo degeneracies. The converse holds as long as $|\Psi_0\rangle$ has non-zero population on the eigenspace of every non-degenerate eigenvalue $E$, i.e.~for almost every $|\Psi_0\rangle$\footnote{If there is a symmetry and $|\Psi_0\rangle$ is supported only on every eigenstate in a symmetry sector, we may conclude that the $k$-th no-resonance conditions modulo degeneracies hold for the eigenvalues in that sector.}.
\end{theorem}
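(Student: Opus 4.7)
My plan is to prove equality of the two ensembles by matching all their statistical moments $\rho^{(k)}$. Since both ensembles live on a compact (finite-dimensional) domain, their distributions are uniquely determined by the moment sequence, so it suffices to show $\rho^{(k)}_{\text{Temp.}}(\infty) = \rho^{(k)}_{\text{Rand.~Phase}}$ for every positive integer $k$, with the no-resonance condition entering exactly at level $k$.

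For the forward direction, I would first write $|\Psi(t)\rangle = \sum_E c_E e^{-i E t}|E\rangle$ and expand
\begin{align*}
|\Psi(t)\rangle\!\langle\Psi(t)|^{\otimes k} = \sum_{\vec\alpha,\vec\beta} c_{\alpha_1}\cdots c_{\alpha_k} c^*_{\beta_1}\cdots c^*_{\beta_k}\, e^{-i t(\sum_i E_{\alpha_i}-\sum_i E_{\beta_i})}\, |\vec\alpha\rangle\!\langle\vec\beta|,
\end{align*}
and then time-average over $t\in[-\tau/2,\tau/2]$ as $\tau\to\infty$. The average kills every oscillating exponential, leaving only the multi-indices $(\vec\alpha,\vec\beta)$ that satisfy $\sum_i E_{\alpha_i}=\sum_i E_{\beta_i}$. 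Invoking the $k$-th no-resonance condition forces $\vec\beta$ to be a permutation of $\vec\alpha$. I would then repeat the same calculation for the random phase ensemble: expanding $|\sum_E |c_E|e^{i\phi_E}|E\rangle|^{\otimes k}$ and averaging each $\phi_E$ uniformly over $[0,2\pi)$ annihilates every product of phase factors except those where the multiset of $\alpha$-indices equals the multiset of $\beta$-indices. Because the resulting surviving terms involve $|c|^2$'s and the same permutation sum in both cases, the two moments coincide term by term.

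For the converse, I would contrapose: suppose $H$ violates the $k$-th no-resonance condition modulo degeneracies for some $k$, i.e.\ there exist non-permutation-equivalent index sets $(\vec\alpha,\vec\beta)$ with $\sum_i E_{\alpha_i}=\sum_i E_{\beta_i}$ and at least one mismatched non-degenerate index. The time-averaged temporal moment then retains the coefficient $c_{\alpha_1}\cdots c_{\alpha_k} c^*_{\beta_1}\cdots c^*_{\beta_k}$ on the matrix element $|\vec\alpha\rangle\!\langle\vec\beta|$, whereas in $\rho^{(k)}_{\text{Rand.~Phase}}$ the coefficient of that same matrix element vanishes. Under the hypothesis that $|\Psi_0\rangle$ has nonzero overlap with every non-degenerate eigenspace, the offending coefficient is nonzero, so $\rho^{(k)}_{\text{Temp.}}(\infty)\neq\rho^{(k)}_{\text{Rand.~Phase}}$ and the ensembles differ. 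The ``almost every $|\Psi_0\rangle$'' qualifier corresponds to excluding the measure-zero set where some $c_E$ for a non-degenerate eigenvalue vanishes.

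Two minor technical points deserve care. First, the degeneracy subtlety: whenever an eigenvalue $E$ is $g$-fold degenerate, $|\Psi_0\rangle$ projects that subspace onto a single vector $|\tilde E\rangle\equiv P_E|\Psi_0\rangle/\|P_E|\Psi_0\rangle\|$, and the evolution on this subspace is the trivial phase $e^{-iEt}$; I would simply replace the basis $\{|E\rangle\}$ by $\{|\tilde E\rangle\}$ across the whole argument, which is the precise meaning of ``modulo degeneracies.'' Second, the $\tau\to\infty$ time averages exist because $|\Psi(t)\rangle\!\langle\Psi(t)|^{\otimes k}$ is an almost-periodic (finite-dimensional Bohr-type) function of $t$. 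The main obstacle I anticipate is bookkeeping rather than a conceptual hurdle: when the indices in $\vec\alpha$ have coincidences, the permutation sum overcounts, and one must verify that this overcounting is identical on the temporal and random-phase sides --- this is automatic once one notes that both calculations produce exactly the permanent-like sum $\sum_{\sigma\in S_k}|\vec\alpha\rangle\!\langle\vec\alpha\circ\sigma|$ weighted by $\prod_i|c_{\alpha_i}|^2$, giving the claimed equality of $k$-th moments and hence, by moment determinacy on the compact support, equality of the ensembles.
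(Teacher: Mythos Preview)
Your proposal is correct and follows essentially the same route as the paper: expand in the energy eigenbasis, time-average to pick out the resonance-surviving terms, match these against the phase-averaged moments of the random phase ensemble, and run the converse by exhibiting an extra matrix element when the no-resonance condition fails. The only substantive difference is how you justify moment determinacy: you invoke compactness of the support (both ensembles live on the fixed-amplitude torus $\{|\Psi_E|=|c_E|\}$, where the balanced moments $\rho^{(k)}$ are exactly the Fourier coefficients of the phase distribution and hence determine it), whereas the paper checks the complex Carleman condition explicitly---your argument is more elementary and arguably cleaner here, while Carleman would extend to unbounded settings the paper does not need.
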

\noindent\textit{Proof sketch.}
Here we sketch the idea of the proof, deferring its details to Appendix~\ref{app:temp_ens_equals_random_phase_ens_proof}. 
The general idea is to evaluate the $k$-th moment $\rho_\text{temp.}^{(k)}$ of the temporal ensemble, and show that it matches the moments of the random phase ensemble. The evaluation of $\rho_\text{temp.}^{(k)}$ is possible for an infinite interval $\tau = \infty$ based on the no-resonance condition Eq.~\eqref{eq:k_no_resonance}, and the corresponding moments of the random phase ensemble can also be explicitly evaluated. We note that having matching moments does not always imply that two ensembles are equal to one another. Under certain conditions such as the (complex) Carleman condition~\cite{schmudgen2017}, this equivalence can be made. The evaluated moments satisfy the complex Carleman condition, completing our proof. The converse direction can be easily shown by noting that if the $k$-th no-resonance condition is not satisfied, the moment $\rho_\text{temp.}^{(k)}$ will have additional terms that are not present in the random phase ensemble (as long as all coefficients $c_E$ are non-zero) and hence the temporal and random phase ensembles will differ.
\hfill \qedsymbol

\textit{Ensemble entropy --- }
The ensemble entropy is always divergent in any distribution of states with fixed magnitudes, since the constrained space is a sub-dimensional manifold of the full Hilbert space. This can be intuitively understood from the following property of the KL divergence $\mathcal{D}(\mathcal{E}\Vert \text{Haar})$.
Assume we are given samples from $\mathcal{E}$ and promised that the underlying ensemble is either $\mathcal{E}$ or the Haar ensemble. The KL divergence quantifies how many samples one needs to correctly conclude that the ensemble is $\mathcal{E}$ and not the Haar ensemble~\cite{cover1999elements}. In our case, we can rule out the Haar ensemble even with a \textit{single} wavefunction $|\Psi(t)\rangle$ from the temporal ensemble. This is because the state $|\Psi(t)\rangle$ has the fixed amplitudes $|\langle E|\Psi(t)\rangle| = |c_E|$, which occurs with probability zero in the Haar ensemble. Therefore, given complete knowledge of the state $|\Psi(t)\rangle$, the constraints $\{|c_E|\}$, and the promise that the underlying ensemble is either $\mathcal{E}$ or the Haar ensemble, this discrimination task can be done infinitely quickly, and hence the KL divergence is infinitely large. While artificial in practice, this operational meaning for the KL divergence accounts for its divergence in the temporal ensemble.

However, the \MEP~still holds in a certain sense: we can separate the entropy into contributions from the phase and magnitude degrees of freedom; the contribution from the phase is maximized by independent, uniformly distributed phases (Appendix~\ref{app:temp_ens_entropy}). We find a constant, non-divergent contribution
\begin{align}
    \text{Ent}(\mathcal{E}_\text{Temp.}) &\equiv -\mathcal{D}(\mathcal{E}_\text{Temp.} \Vert \text{ Haar}) \label{eq:temporal_ensemble_entropy}\\
    &=\sum_E \log(D\abs{c_E}^2)
    +\lim_{\delta\rightarrow 0} D \left(\log(\delta/2)-1\right)~.\nonumber
\end{align}
We interpret this expression as proportional to the volume $\prod_E |c_E|$ of the constrained space. Remarkably, this expression will reappear in the ensemble entropy of the projected ensemble (Section~\ref{sec:Scrooge_entropy}). 

\subsection{Asymptotic product form}
\label{subsec:temporal_ensemble_product_form}
In this work, we shall be interested in the moments of the random phase ensemble. Here, we provide a simplified, approximate form of the $k$-th moments, which enables their further analysis.
\begin{theorem}[Asymptotic product form]\label{thm:temp_ens_product_form}
In the limit of small purity $\tr(\rho_d^2)$, the random phase ensemble has the \textit{product form}:
\begin{align}
    \rho^{(k)}_\text{Temp.} = \rho_d^{\otimes k} \sum_{\sigma \in S_k} \text{Perm}(\sigma) + O(\tr(\rho_d^2))
\end{align}
Here, $\rho_d \equiv \sum_E \abs{c_E}^2 \ketbra{E}{E}$ is the \textit{diagonal ensemble}, a common construction in the thermalization literature~\cite{rigol2008thermalization,polkovnikov2011nonequilibrium}, and its purity $\tr(\rho_d^2)$ is typically exponentially small in total system size. 
\end{theorem}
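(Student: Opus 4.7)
The plan is to compute $\rho^{(k)}_\text{Temp.}$ directly from the random phase ensemble description (equivalent to the infinite-time temporal ensemble by Theorem~\ref{thm:temp_ens}) and to show that its dominant contribution is precisely $\rho_d^{\otimes k}\sum_{\sigma\in S_k}\text{Perm}(\sigma)$, with the remainder controlled by the purity. Writing $\ket{\Psi}=\sum_E |c_E|e^{i\phi_E}\ket{E}$, expanding $\ketbra{\Psi}{\Psi}^{\otimes k}$ in the energy basis, and averaging over the i.i.d.~uniform phases $\phi_E$ produces a Kronecker constraint that is nonvanishing exactly when the index tuples $\vec{E}=(E_1,\ldots,E_k)$ and $\vec{E}'=(E'_1,\ldots,E'_k)$ agree as multisets:
\begin{equation*}
\rho^{(k)}_\text{Temp.} = \sum_{\vec E,\vec E':\,\vec E\sim\vec E'}\prod_j |c_{E_j}|^2\,\ket{\vec E}\bra{\vec E'}.
\end{equation*}

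Next, I would expand the proposed product form in the same basis: since $\rho_d^{\otimes k}\text{Perm}(\sigma)=\sum_{\vec E}\prod_j|c_{E_j}|^2\,\ket{E_{\sigma(1)},\ldots,E_{\sigma(k)}}\bra{\vec E}$, summing over $\sigma\in S_k$ produces each pair of distinct tuples $(\vec F,\vec E)$ with a common multiset exactly $\prod_E n_E(\vec E)!$ times --- the size of the stabilizer of $\vec E$ in $S_k$, where $n_E$ denotes the multiplicity of $E$ in $\vec E$. On tuples with all distinct entries this stabilizer factor is $1$ and the two sides coincide exactly, so the entire discrepancy lives on multisets with at least one repeated eigenvalue:
\begin{equation*}
\rho_d^{\otimes k}\!\sum_{\sigma\in S_k}\!\text{Perm}(\sigma) - \rho^{(k)}_\text{Temp.} = \sum_{\vec E,\vec E':\,\vec E\sim\vec E'} \Big(\prod_E n_E(\vec E)! - 1\Big)\prod_j |c_{E_j}|^2\,\ket{\vec E}\bra{\vec E'}.
\end{equation*}

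The final step is to bound the trace norm of this remainder. Grouping by multiset $S$ with multiplicities $\{n_E\}$, each block is a positive rank-one operator on $\text{span}\{\ket{\vec E}:\vec E\text{ orders }S\}$ with trace $\prod_E |c_E|^{2n_E}(k!-k!/\prod_E n_E!)$; different multisets live in orthogonal subspaces so trace norms add. The leading contribution comes from multisets with a single doubled eigenvalue ($n_{E_*}=2$, others $1$) and equals $\binom{k}{2}\,\text{tr}(\rho_d^2)$ after using the elementary symmetric-polynomial bound $\sum_{\{E_i\}\text{ distinct}\ne E_*}\prod_i |c_{E_i}|^2\le 1/(k-2)!$. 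Every deeper degeneracy contributes a factor $\text{tr}(\rho_d^p)$ with $p\ge 3$; since any density matrix satisfies $\max_E|c_E|^2\le\sqrt{\text{tr}(\rho_d^2)}$, such factors are bounded by $\text{tr}(\rho_d^2)^{p/2}$ and are subleading in the small-purity limit.

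The main obstacle is the combinatorial bookkeeping linking ordered tuples $\vec E$, unordered multisets $S$, and permutations $\sigma\in S_k$ --- specifically, correctly tracking the stabilizer factor $\prod_E n_E!$ on both sides of the comparison. Once this is handled, the bound collapses to the elementary density-matrix inequality $\text{tr}(\rho_d^p)\le\text{tr}(\rho_d^2)^{p/2}$ for $p\ge 2$, and the leading coefficient $\binom{k}{2}$ emerges cleanly from the multisets with a single coincidence.
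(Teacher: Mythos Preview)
Your proposal is correct and follows essentially the same route as the paper: expand both $\rho^{(k)}_{\text{Temp.}}$ and the product form in the energy eigenbasis, observe that they agree on tuples with all-distinct entries, and bound the trace norm of the difference by summing the positive rank-one blocks supported on multisets with at least one coincidence. Your bookkeeping is in fact sharper --- you extract the leading coefficient $\binom{k}{2}\,\tr(\rho_d^2)$, whereas the paper is content with the cruder bound $(p(k)-1)\,k!\,\tr(\rho_d^2)$ obtained by replacing two copies of $\rho_d$ by $\rho_d^{(2)}$ and summing over degeneracy classes.
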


\noindent\textbf{Proof sketch:} We provide the full proof of Theorem~\ref{thm:temp_ens_product_form} in Appendix~\ref{app:temporal_ensemble_error}, and only state the key idea, which is the following: the $k$-th moment of the temporal ensemble can be written as a sum over all lists of labels of energy eigenstates $\{\alpha_j\}$ and over all permutations of $\{\alpha_j\}$ (Theorem~\ref{thm:temp_ens}). Crucially, $\{\alpha_j\}$ is allowed to have repeated elements. Only if no elements are repeated will the number of unique permutations of $\{\alpha_j\}$ be $k!$. Using the same counting also for lists with repeated elements, the expression for the moments considerably simplifies.
\begin{align}
    \rho^{(k)}_\text{Temp.} &\approx \sum_{\sigma \in S_k}\sum_{\{\alpha_j\}} \bigotimes_{j=1}^k c_{\alpha_j}  c^*_{\sigma(\alpha_j)}\ketbra{E_{\alpha_j}}{E_{\sigma(\alpha_j)}} \label{eq:temp_ens_thm_eq2}\\
    &= \rho_d^{\otimes k} \sum_{\sigma \in S_k} \text{Perm}(\sigma)
\end{align}
Eq.~\eqref{eq:temp_ens_thm_eq2} is not exact because it over-counts permutations when $\{\alpha_j\}$ contains repeated elements: there are additional correction terms to compensate for this over-counting. Careful consideration gives that the trace norm of all correction terms can be bounded by $O(\tr(\rho_d^2))$, with a $k$-dependent prefactor (Appendix~\ref{app:temporal_ensemble_error}). \hfill \qedsymbol

Theorem~\ref{thm:temp_ens_product_form} explicitly demonstrates how the higher order moments of the temporal ensemble depend on the first moment $\rho_d$. Specifically, the moments of the temporal ensemble have a structural resemblance to the moments of the Haar ensemble [Eq.~\eqref{eq:Haar_kth_moment}], up to a correction which is typically exponentially small, formalizing our intuition of maximally entropic ensembles as ``distorted" Haar ensembles. This asymptotic behavior will also be common to projected ensembles.

\subsection{Finite-time temporal ensembles}
\label{subsec:finite_temp_temp_ens}
Theorem~\ref{thm:temp_ens} studies the properties of temporal ensembles over infinite time intervals. Here, we investigate how temporal ensembles over finite time intervals converge to their infinite-interval values. We analytically study its convergence rate in random-matrix models and present numerical evidence that the above convergence rate also holds for realistic Hamiltonians. Specifically, we study the convergence of the finite time $k$-th moment $\rho^{(k)}_\text{Temp.}(\tau)$ to its infinite-time limit $\rho^{(k)}_\text{Temp.} \equiv \rho^{(k)}_\text{Temp.}(\tau=\infty)$. This has expression:
\begin{align}
    &\rho^{(k)}_\text{Temp.}(\tau) \equiv \frac{1}{\tau}\int_{-\tau/2}^{\tau/2} dt \ketbra{\Psi(t)}{\Psi(t)}^{\otimes k} \label{eq:finite_time_ensemble}\\
    &=\! \!\sum_{\{\alpha_j\},\{\beta_j\}} \!\!\text{sinc}\Big(\sum_j (E_{\alpha_j}-E_{\beta_j})\frac{\tau}{2}\Big) \Bigg[\bigotimes_j c_{\alpha_j} c_{\beta_j}^* \ketbra*{E_{\alpha_j}}{E_{\beta_j}}\Bigg]\nonumber. 
\end{align}

We consider the convergence both in terms of the trace norm and Frobenius norm, with the latter being analytically easier to study. For the first moment, i.e.~$k=1$, we numerically observe that the distance in Frobenius and trace norms both decay as $\tau^{-1}$ (see App.~\ref{app:finite_time_ensembles}). This holds for a generic chaotic model, the mixed field Ising model (the main model we study in this work, Sec.~\ref{subsec:numerical_details}) as well as for Hamiltonians sampled at random from the Gausian Unitary Ensemble (GUE).  We attribute this asymptotic $1/\tau$ scaling to the $1/x$ asymptotic behavior of the $\text{sinc}(x)$ function in Eq.~\eqref{eq:finite_time_ensemble}. In Appendix~\ref{app:finite_time_ensembles}, we support these numerical findings by analytical calculations in random matrix theory, showing that the average of the squared Frobenius norm decays as $\tau^{-2}$. This suggests that the average Frobenius norm decays as $\tau^{-1}$, consistent with our numerical observations. 

Interestingly, for higher order moments $k=2,3$, we observe different behavior. For the mixed field Ising model as well as individual samples from the GUE, the Frobenius distance decays at intermediate times as $\tau^{-1/2}$ before a crossover time, after which it decays as $\tau^{-1}$ [Fig.~\ref{fig:temporal_ensemble}(b)]. This crossover happens in times which are exponentially long in system size, which we attribute to the presence of a minimum distance of gaps $E_i-E_j+E_k-E_l$ (for $k=2$) for each individual Hamiltonian. While such minimum distance is exponentially small in system size, it has a finite, nonzero value $\Delta_{\min}$ for finite-sized systems. Only at times much longer than the inverse of this minimum gap does the asymptotic $1/(\Delta_{\min}\tau)$ behavior become clear.

Notably, we find that for the first moment, $k=1$, the GUE ensemble mean closely resembles the behavior of individual instances and decays as $\tau^{-1}$. In contrast, for higher moments,  $k=2$, the behavior of individual samples and the sample mean differ. With an increasing number of samples, the crossover is less pronounced. We relate these findings to statistical properties of the eigenvalues of GUE Hamiltonians. It is well known that the eigenvalues exhibit level repulsion, i.e.\ the probability to find two distinct eigenvalues close to each other $E_i-E_j \approx 0$ is vanishing. In contrast, as suggested by Ref.~\cite{riddell2023noresonance} and the analytical and numerical results presented in Appendix~\ref{app:finite_time_ensembles}, we find that the gaps $E_i-E_j+E_k-E_l$ of eigenvalues exhibit only weak repulsion, leading to the crossover timescale growing with the size of the sample.

\begin{figure*}
    \centering
    \includegraphics[width = 0.9\textwidth]{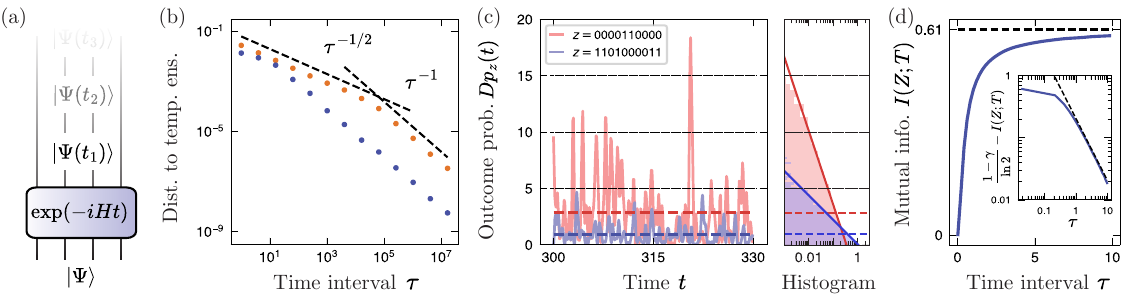}
    \caption{Temporal ensembles from natural dynamics. (a) The temporal ensemble is the collection of states formed by time-evolving a given state $\ket{\Psi}$ by a Hamiltonian $H$, for various times $t$. (b) 
    Temporal ensembles formed over a finite interval of width $\tau$ approach the infinite-time ensemble as $\tau$ increases: the distance (here, the Frobenius norm) between their $k$-th moment decreases as $1/\tau^{1/2}$, before decreasing as $\tau^{-1}$. Data shown here for $k=2$, for a state evolving under the mixed field Ising model (MFIM) in red, and a Gaussian Unitary Ensemble (GUE)-random Hamiltonian in blue. (c) Signatures of the temporal ensemble can be measured through the fluctuations of quantities such as the outcome probability $p(z,t) \equiv \abs{\braket{z}{\Psi(t)}}^2$ (normalized by a factor of $D$). Their histogram of values over time reveals a universal, Porter-Thomas (exponential) distribution. Histograms and time traces for two values of $z$ shown in red and blue. (d) The mutual information $I(Z;T)$ as a function of the time interval $\tau$ of the ensemble. $I(Z;T)$ quickly saturates to the theoretical value of $(1-\gamma)/\ln 2$, agreeing with the predicted $(1-\gamma - \sqrt{\pi}/(2\sigma_H \tau))/\ln 2$ [inset, dashed, Eq.~\eqref{eq:finite_time_mutual_information}]. In (c,d), data is obtained from quench evolution in the MFIM, for the initial state in Eq.~\eqref{eq:initial_state} with $\theta=0.6$, and the units of time are determined by the $O(1)$ Hamiltonian parameters of Eq.~\eqref{eq:MFIM}.}
    \label{fig:temporal_ensemble}
\end{figure*}

\subsection{Porter-Thomas distribution as measurable signature of temporal ensembles}
\label{subsec:temp_ens_PT_dist}
An immediate consequence of Theorem~\ref{thm:temp_ens_product_form} is the emergence of the Porter-Thomas distribution (Sec.~\ref{subsec:PT_dist}) in the probabilities of measuring a given outcome $z$, as a function of time. This result was originally presented in Ref.~\cite{mark2022benchmarking}. For clarity, we denote this distribution as $p_z(t) \equiv p(z,t) \equiv \braket{z}{\Psi(t)}\!\braket{\Psi(t)}{z}$~\footnote{The subscript $z$ indicates that the measurement outcome $z$ is being fixed and the distribution $p_z(t)$ is taken over time $t$.}, which can be experimentally estimated by repeated measurements of the state $\ket{\Psi(t)}$. 

Unlike local observables, whose expectation values quickly equilibrate in time--- $\langle O(t)\rangle$ has exponentially small relative fluctuations about its equilibrium value $\mathbb{E}_t[\langle O(t)\rangle]$\cite{reimann2008foundation}--- global observables such as the projector $|z\rangle\langle z|$ never equilibrate. Theorem~\ref{thm:temp_ens_PT} indicates that they have $O(1)$ relative fluctuations which follow a universal, exponential distribution. We utilized this result in Ref.~\cite{mark2022benchmarking} to propose a protocol for many-body benchmarking in analog quantum simulators. Here, we will use the PT distribution as a signature of our maximum entropy principle; it will be a common signature among the temporal ensemble, projected ensemble and beyond.

\begin{theorem}\label{thm:temp_ens_PT}
Given an initial state $\ket{\Psi_0}$, a Hamiltonian $H$ which satisfies the $k$-th no-resonance condition, and a fixed state $\ket{z}$, the outcome distribution $p_z(t)$ follows an approximate Porter-Thomas (or, exponential) distribution over time, with mean $p_\text{avg}(z)\equiv \mathbb{E}_t[p_z(t)]$. Specifically, the $k$-th moments satisfy:
\begin{align}
    \mathbb{E}_t[p_z(t)^k]&= p_\text{avg}(z)^k\left[ k! + O(D_\beta^{-1})\right]
\end{align}
where $D_\beta^{-1} \equiv \sum_{z,E}\abs{\braket{z}{E}}^4 |c_E|^4/ p_\text{avg}(z)$ is the inverse \textit{effective dimension}, which captures the effective size of the Hilbert space explored during quench evolution that is accessible by measurements in $\{\ket{z}\}$. Our notation $\beta$ associates $D_\beta$ with the temperature of the initial state, which to a large extent controls the size of the Hilbert space explored.
\end{theorem}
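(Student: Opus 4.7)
The plan is to read off $\mathbb{E}_t[p_z(t)^k]$ as a contraction of the $k$-th moment of the temporal ensemble,
\begin{equation*}
    \mathbb{E}_t[p_z(t)^k] = \mathrm{tr}\!\left[\ketbra{z}{z}^{\otimes k}\, \rho^{(k)}_\text{Temp.}\right],
\end{equation*}
and then exploit Theorem~\ref{thm:temp_ens_product_form}. Substituting the asymptotic product form $\rho^{(k)}_\text{Temp.} \approx \rho_d^{\otimes k} \sum_{\sigma\in S_k}\mathrm{Perm}(\sigma)$, the key observation is that for any permutation $\sigma$ with cycle lengths $(\ell_1,\ell_2,\ldots)$, the trace collapses via $\ketbra{z}{z}\rho_d\ketbra{z}{z} = \langle z|\rho_d|z\rangle\,\ketbra{z}{z}$ so that each cycle contributes $p_\text{avg}(z)^{\ell_i}$, giving $\mathrm{tr}[\ketbra{z}{z}^{\otimes k}\rho_d^{\otimes k}\mathrm{Perm}(\sigma)] = p_\text{avg}(z)^k$ independently of $\sigma$. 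Summing over the $k!$ permutations immediately yields the Porter-Thomas $k$-th moment $k!\,p_\text{avg}(z)^k$.

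The remaining task is to upgrade the generic trace-norm bound of Theorem~\ref{thm:temp_ens_product_form} (which would give an error of order $\mathrm{tr}(\rho_d^2)$) to the sharper effective-dimension bound $O(D_\beta^{-1})$ advertised in the statement. For this I would not use Theorem~\ref{thm:temp_ens_product_form} as a black box but instead expand $\mathbb{E}_t[p_z(t)^k]$ directly from Theorem~\ref{thm:temp_ens}: applying the $k$-th no-resonance condition to the $2k$-fold phase average and reorganizing gives
\begin{equation*}
  \mathbb{E}_t[p_z(t)^k] = \sum_{\alpha\in\{1,\ldots,D\}^k}\frac{k!}{\prod_E n_E(\alpha)!}\prod_j q_{\alpha_j},
\end{equation*}
where $q_E \equiv |c_E|^2|\langle z|E\rangle|^2$ and $n_E(\alpha)$ counts the occurrences of $E$ in $\alpha$. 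The all-distinct sector contributes exactly $k!\bigl(\sum_E q_E\bigr)^k = k!\,p_\text{avg}(z)^k$ plus lower-order inclusion--exclusion terms, while each sector with partial coincidences carries at least one extra factor of $\sum_E q_E^2 = \sum_E |c_E|^4|\langle z|E\rangle|^4$ together with powers of $p_\text{avg}(z)$. Dividing by $p_\text{avg}(z)^k$ and bounding the coincidence factor by $p_\text{avg}(z)^2 D_\beta^{-1}$ (using the definition of $D_\beta^{-1}$ together with positivity of the other $z'$ terms) gives the desired $k!+O(D_\beta^{-1})$, with a $k$-dependent but system-size-independent prefactor.

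The main obstacle is exactly this last combinatorial step: for general $k$, the partial-coincidence sectors proliferate (indexed by the set partitions of $\{1,\ldots,k\}$), and one must verify that \emph{every} such sector contains at least two coinciding indices and can therefore be bounded by a product of the form $p_\text{avg}(z)^{k-2}\sum_E q_E^2$ (or a higher-order analogue). Organizing this expansion cleanly --- e.g.~via a set-partition or cycle-class expansion of the permutation sum --- and showing that all resulting prefactors depend only on $k$, is the technical heart of the argument. Once this is in place, the Porter-Thomas statement and its error term follow directly, and the interpretation of $D_\beta$ as the inverse participation ratio of the outcome distribution over energy eigenstates emerges naturally from the combinatorial identity.
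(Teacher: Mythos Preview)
Your proposal is correct and mirrors the paper's approach: both compute $\mathbb{E}_t[p_z(t)^k]$ as the contraction $\langle z|^{\otimes k}\rho^{(k)}_\text{Temp.}|z\rangle^{\otimes k}$, invoke the product form of Theorem~\ref{thm:temp_ens_product_form} for the leading $k!\,p_\text{avg}(z)^k$, and then observe that the generic trace-norm error of Theorem~\ref{thm:temp_ens_product_form} is too weak and must be refined via exactly the coincidence-sector expansion you describe (the paper defers this combinatorics to Ref.~\cite{mark2022benchmarking} and the closely analogous Appendix~\ref{app:proof_temp_ens_of_proj_st}). Your multinomial identity and set-partition bookkeeping are precisely the content of the multiset-class decomposition in Appendix~\ref{app:temporal_ensemble_error}, specialized to the rank-one observable $\ketbra{z}{z}^{\otimes k}$; the only small caveat is that the quantity $\sum_E q_E^2/p_\text{avg}(z)^2$ you isolate is the $z$-resolved error $\Delta_\beta(z)$, with $D_\beta^{-1}=\sum_z p_\text{avg}(z)\Delta_\beta(z)$ its weighted average, so the pointwise bound is most naturally stated in terms of $\Delta_\beta(z)$.
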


\begin{proof}
Its $k$-th moment can be easily obtained from the $k$-th moment of the temporal ensemble, which gives the desired
$\mathbb{E}_t[p_z(t)^k] \equiv \bra{z}^{\otimes k} \rho^{(k)}_\text{Temp.} \ket{z}^{\otimes k} \approx k!~ p_\text{avg}(z)^k$. This relation follows from Theorem~\ref{thm:temp_ens_product_form}. We conclude that $p_z(t)$ follows a Porter-Thomas distribution\footnote{More accurately, we bound the distance between the $k$-th moments of $p_z(t)$ and the Porter-Thomas distribution. We do not currently have a means to directly bound the distance between distributions, such as the total variation distance.}, with a mean value equal to $p_\text{avg}(z) \equiv \mathbb{E}_t[p_z(t)] = \langle z| \rho_d |z\rangle$. This is a simple illustration of how the first moment $p_\text{avg}(z)$ uniquely determines all moments of the distribution $p_z(t)$. This relation is only approximate because of the correction terms in Theorem~\ref{thm:temp_ens_product_form}. We are able to (loosely) bound the effects of these correction terms on the $k$-th moments of $p(z,t)$ in terms of the parameter $D_\beta^{-1}$. This is not a straightforward application of the bound provided in Theorem~\ref{thm:temp_ens_product_form}, which is too weak for our purposes. We refer the reader to Ref.~\cite{mark2022benchmarking} for a detailed proof, or to Appendix~\ref{app:proof_temp_ens_of_proj_st} for a very similar proof.
\end{proof}

\subsection{Numerical model}
\label{subsec:numerical_details}
We verify our theory predictions via explicit exact numerical simulations of interacting quantum many-body systems.
As an illustration, we work with a paradigmatic model of many-body quantum chaos, the one-dimensional mixed field Ising model (MFIM) with open boundary conditions.
\begin{equation}
    H_\text{MFIM} = \sum_{j=1}^N h_x X_j + h_y Y_j + \sum_{j=1}^{N-1} J X_j X_{j+1}~,
    \label{eq:MFIM}
\end{equation}
where $X_j$ and $Y_j$ are Pauli matrices on site $j$. We use parameter values established in Ref.~\cite{kim2014testing}: $(h_x,h_y,J) = (0.890,0.9045,1)$.
Unless otherwise stated, we use this model in this and future sections for our numerical simulations.

In both the projected and temporal ensemble, we study the time-evolution of the initial state
\begin{equation}
    \ket*{\Psi_0} = \left[\text{exp}(i \theta X/2) \ket{0}\right]^{\otimes N}~, \label{eq:initial_state}
\end{equation}
where the parameter $\theta$ is used to tune the state from infinite temperature ($\theta=0$) to positive or negative temperature. Unless otherwise stated, we present data for $\theta = 0.6$, which has energy density $E/N \approx 0.51$, far from the center of the spectrum.

Here, we illustrate the content of Theorem~\ref{thm:temp_ens_PT} in Fig.~\ref{fig:temporal_ensemble}(c), where we plot the probabilities $p_z(t)$ for two choices of $z$, which exhibit large fluctuations over time, about different average values $p_\text{avg}(z)$. The histograms of $\{p_z(t)\}$ show good agreement with the PT distributions $p_z(t)\sim \text{PT}[p_\text{avg}(z)]$ even over a relatively short time interval, confirming our analysis.

\subsection{Ergodicity}
\label{subsec:ergodicity}
Theorem~\ref{thm:temp_ens_PT} is a statement about the distribution of $p(z,t)$ over time (denoted $p_z(t)$ to make explicit that the variable $z$ is held constant). We can make an analogous statement about a distribution over $z$, holding $t$ constant. At a fixed, late time $t$, the quantities $\tilde{p}_t(z) \equiv p(z,t)/p_\text{avg}(z)$ are approximately distributed according to the PT distribution with mean 1.
\begin{equation}
    \tilde{p}_t(z) \sim \text{PT}(1)~,
\end{equation}
This was originally derived in Ref.~\cite{mark2022benchmarking} by showing that the weighted $k$-th moment $m^{(k)}(t) \equiv \sum_z p_\text{avg}(z) \tilde{p}_t(z)^k$ for a typical, late time $t$ satisfies:
\begin{equation}
    m^{(k)}(t) = k! + O(D_\beta^{-1/2})~.
\label{eq:ergodicity}
\end{equation}

We interpret this equivalence of distributions over time and over outcomes $z$ as a kind of ``ergodicity," in the sense of the \textit{ergodic theorem} in the study of classical dynamical systems~\cite{brin2002introduction}, which states that in an ergodic system, the average over time is equal to the ``spatial" average over configurations. Such an equivalence will be essential to our characterization of projected ensembles in Section~\ref{sec:general_scrooge}.

\subsection{Mutual information: building a (bad) ``clock''}
\label{subsec:temp_mutual_info}
The temporal ensemble has unique information theoretic properties. In particular, we find near-universal behavior in the \textit{mutual information} between the random variables $Z$ and $T$ which respectively represent the measurement outcomes $z$ and evolution times $t$.

To illustrate the operational meaning of the mutual information, let us consider the following task. We are given several copies of the time-evolved state $\ket{\Psi(t)}$. 
We have knowledge of the initial state of the system as well as its Hamiltonian, but not the duration of the time evolution $t$, which is assumed to be uniformly distributed on some large interval $t \in [t_0, t_0+\tau]$.
Given multiple single-copy measurements performed in an optimal basis, how much information about $t$ can we learn?

Before performing detailed analysis, we can make a prediction based on our maximum entropy principle: since a temporal ensemble is maximally entropic, we expect that the amount of information must be small no matter which basis the measurements are performed.
Indeed, our results imply that the (near) optimality is achieved by any generic measurement basis including the conventional configuration basis $\ketbra{z}{z}$. Furthermore the information gain per measurement takes the universal value $(1-\gamma)/\ln 2 \approx 0.6099$, independent of the Hamiltonian, initial state, measurement basis, and system size.
 
In order to formally define the mutual information $I(Z;T)$, we must make the following definitions. We let time be a uniformly distributed variable over a interval of (long but arbitrary) length $\tau$. We then define a joint probability distribution over measurement outcomes $z$ and times $t$, by $P(z,t) = p_z(t)/\tau$, which is normalized: $\int_{-\tau/2}^{\tau/2} dt \sum_z P(z,t) = 1$. We shall also need the marginal distributions $P(t) \equiv \sum_z P(z,t) = 1/\tau$, $P(z) \equiv \int dt P(z,t) = p_\text{avg}(z)$ and the conditional distribution $P(t|z) = P(t,z)/P(z) = \tilde{p}_z(t)/ \tau$. 

Then, the mutual information can be computed as 
\begin{align}
    &I(Z;T) = H(T)-H(T|Z)~,\\
    &= - \int dt P(t) \log_2(P(t))  \nonumber\\
    &~~~~+ \sum_z P(z) \int_{-\tau/2}^{\tau/2} dt P(t|z) \log_2 P(t|z)~,\\
    &\approx \sum_z p_\text{avg}(z) \int_0^\infty d\tilde{p}~\text{Pr}(\tilde{p}) \tilde{p} \log_2 \tilde{p}  + O(D_\beta^{-1})\label{eq:temp_MI_line}\\
    &\approx (1-\gamma)/\ln 2 ~ \label{eq:temp_MI}.
\end{align}
where in Eq.~\eqref{eq:temp_MI_line} we have used Theorem~\ref{thm:temp_ens_PT} to replace the integral over time with an integral over $\tilde{p}$, which follows a PT distribution, i.e. with probability density $\text{Pr}(\tilde{p})d\tilde{p} = \exp(-\tilde{p})d\tilde{p}$.

It is remarkable that this universal value does not depend on system sizes, and, in particular, does not vanish with any parameters.
This implies that, despite the fact that ergodic dynamics hides information about time $t$ as strongly as possible, a finite amount (0.6099 bits) of information still cannot be concealed, in the large $\tau$ limit.
In other words, our result implies that a certain level of temporal fluctuations in observables is inevitable in the ergodic dynamics of generic pure states, putting the lower bound on the ability of unitary dynamics to hide temporal information. 
We also note that estimating the evolution duration $t$ is equivalent to determining an overall strength scaling factor of a many-body Hamiltonian by evolving to a fixed time $t$, and hence this ``clock" may equivalently be regarded as a sensor for the overall Hamiltonian strength.

\subsubsection{Finite-time mutual information}
\label{subsubsec:finite_temp_mutual_info}

We note that the mutual information discussed above should be distinguished from more conventional metrics for clocks such as sensitivity. The sensitivity at very small intervals $\Delta t$ (i.e. in the limit of many measurements) depends instead on the energy uncertainty $\sigma_H \equiv \left(\langle \psi_0|H^2|\psi_0\rangle -\langle \psi_0|H|\psi_0\rangle^2\right)^{1/2}$, which is not universal. The mutual information instead describes how quickly a large interval $\tau$ can be refined by measurements.

Our universal result above is valid in the limit of large time intervals $\tau$, over which $p(z,t)$ follows the approximate PT distribution. At shorter intervals (but still at sufficiently late times $t$), the mutual information $I(Z;T)$ takes on a smaller value, due to $p(z,t)$ having a non-zero correlation time. This is illustrated in Fig.~\ref{fig:temporal_ensemble}(d). In fact, $I(Z;T)$ approaches its late-time value as a power law:
\begin{equation}
    I(Z;T) \approx \left(1-\gamma - \frac{\sqrt{\pi}}{2\sigma_H \tau}\right)/\ln 2~,
    \label{eq:finite_time_mutual_information}
\end{equation}
where $\sigma_H = (\bra{\Psi_0}H^2\ket{\Psi_0}-\bra{\Psi_0}H\ket{\Psi_0}^2)^{1/2}$ is the uncertainty in energy of the initial state. This is illustrated in the inset of Fig.~\ref{fig:temporal_ensemble}(d).

We present the full derivation in Appendix~\ref{app:finite_temporal_mutual_information}. In brief, we obtain this result by estimating the autocorrelation time of $p(z,t)$, using approximations of the many-body spectrum. We find that the correlation time is inversely proportional to the energy uncertainty of the state $\sigma_H$. 
If we interrogate $p(z,t)$ when t is an integer multiple of the correlation time, we get an effective number $\sigma_H \tau$ of points in times for which $p(z,t)$ is distinct.  This gives a mutual information that approaches the value $1-\gamma$ at the rate $(\sigma_H\tau)^{-1}$. The coefficient $\sqrt{\pi}/2$ follows from a detailed calculation. This result complements Section~\ref{subsec:finite_temp_temp_ens}, providing a convergence timescale of finite-time temporal ensembles from the perspective of the mutual information. We leave to future work the analysis of the ultimate sensitivity at very small intervals $\tau \ll \frac{\sqrt{\pi}}{2\sigma_H (1-\gamma )}$, at which point Eq.~\eqref{eq:finite_time_mutual_information} breaks down.

Our results in this section describe universal, statistical features of the trajectory of a many-body state under chaotic many-body dynamics. They confirm an intuitive, folklore heuristic that chaotic many-body dynamics resembles random unitary dynamics. Here, we find signatures of Haar random states such as the PT distribution in the time-evolved many-body states. We attribute this quasi-randomness to our maximum-entropy principle. This translates to our finding that these temporal trajectories can be used for certain protocols such as benchmarking and sensing.
Furthermore, their performance in these protocols is universal and independent of details such as the Hamiltonian, initial state, and measurement basis.

\section{Projected Ensembles}
\label{sec:scrooge}
\begin{figure*}
    \centering
    \includegraphics[width=\textwidth]{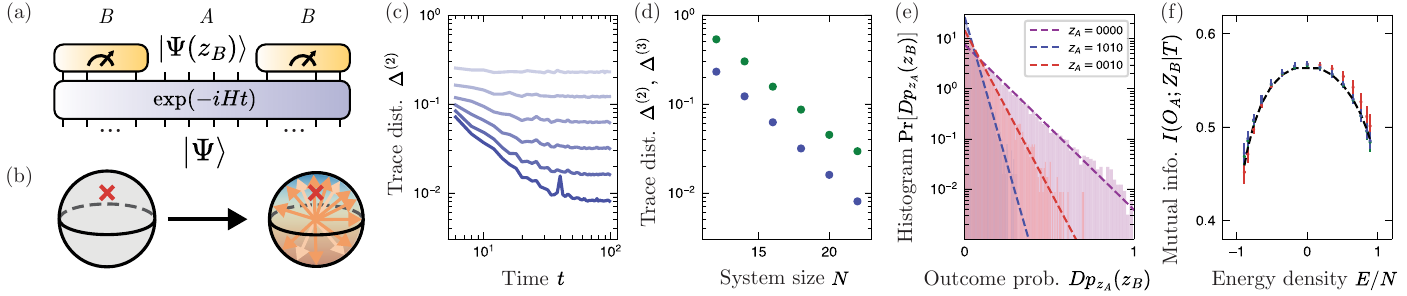}
    \caption{Projected ensembles from natural dynamics. (a) The projected ensemble is formed by performing projective measurements on a large subsystem $B$ of a many-body state, here obtained from time-evolution of an initial state. (b) In this setting, the projected ensemble is described by the \textit{Scrooge ensemble}, which is a unique probability distribution of states (color gradient) for each value of the first moment (red cross, in our setting equal to the reduced density matrix). (c) At constant energy density $E/N \approx 0.5$, the trace distance between the second moments of the projected and Scrooge ensembles decreases with quench evolution time before saturating to an equilibrium value which depends on system size.  We time-evolve a product state with the mixed-field Ising model (MFIM) on a one-dimensional open chain of length $N$ [Eq.~\eqref{eq:MFIM}], from $N=12$ (lighter) to $N=22$ (darker) and fix the subsystem $A$ to be four sites near the center of the chain. (d) The equilibrium trace distance decreases exponentially with system size for both second (blue) and third moment (green). (e) A signature of the Scrooge ensemble is the emergence of the Porter-Thomas distribution in measurement outcome distributions; here we plot a histogram of the normalized outcome distributions $D p_{z_A}(z_B)$, with fixed $z_A$ and different values of $z_B$. Histograms for three different values of $z_A$ (blue, red, purple) follow PT distributions with different means $D \mathbb{E}_{z_B}[p_{z_A}(z_B)]$. (f) The mutual information $I(O_A;Z_B|T)$ (conditioned at fixed times) is an information-theoretic signature of the Scrooge ensemble: its value is independent of the basis $O_A$, and is equal to the subentropy $Q[\rho_A]$ [Eq.~\eqref{eq:subentropy}] (black dashed). Here we plot $I(O_A;Z_B|T)$ for three different observables $O_A = \{X^{\otimes 4}, Y^{\otimes 4},Z^{\otimes 4}\}$ (red, green and blue respectively), against the energy density $E/N$ of the state, controlled by the polarization angle $\theta$ of the initial state [Eq.~\eqref{eq:initial_state}].
    }
    \label{fig:projected*}
\end{figure*}

Next, we turn to the projected ensemble, which is the second setting of our maximum entropy principle. 
When a many-body state is bipartitioned into a small system $A$ and its complement $B$, the large subsystem $B$ acts like a ``thermal bath" to the smaller subsystem $A$. This observation is instrumental to the emergence of thermal equilibration in a closed quantum system. Nevertheless, the extent to which subsystem $B$ truly acts as a bath is not fully characterized, owing to the complexity of the overall quantum many-body state. Projected ensembles offer a lens into the nature of correlations between subsystem and bath. Without the information of the bath, the subsystem is described by the reduced density matrix $\rho_A = \tr_B(\ketbra{\Psi}{\Psi})$. In order to go further, it is necessary to obtain information about the bath. In projected ensembles, this is done by projecting the bath onto definite states $|z_B\rangle$, typically by performing measurements in a product basis. This in turn projects $\ket{\Psi}$ into a pure state $\ket*{\Psi(z_B)} \in \mathcal{H}_A$. Taken over all measurement outcomes, a single wavefunction $\ket{\Psi}$ generates a large ensemble of states $\{\ket{\Psi(z_B)}\}$. By studying the statistical properties of this ensemble of states, as well as any correlations between $\ket{\Psi(z_B)}$ and $z_B$, the projected ensemble construction provides a new way to study system-bath correlations. 

Our maximum entropy principle states the following: there are correlations between $\ket{\Psi(z_B)}$ and $z_B$ which are ``thermal" in the sense that they arise from (and \emph{only} from) energy conservation. Furthermore, there are random fluctuations over these thermal correlations. In Section~\ref{sec:general_scrooge}, we find that these fluctuations are universal and follow a maximum entropy principle. 

In this section, we study a special case. In certain measurement bases, the correlations between $\ket*{\Psi(z_B)}$ and $z_B$ vanish. Nevertheless, effects of energy conservation remain: a mixture of all projected states equals $\rho_A$, which is, up to subleading corrections, a thermal Gibbs state with temperature set by the bath~\cite{srednicki1999approach}. Therefore, while the projected states $\ket*{\Psi(z_B)}$ are effectively identical independent random samples from a distribution of states, the underlying distribution is not Haar-random, but has first moment $\rho_A$. In this section, we find that this distribution has maximum entropy under this first moment constraint, a distribution that has been previously studied under the name ``Scrooge ensemble"~\cite{jozsa1994lower,goldstein2006distribution,goldstein2016universal}.

Explicitly, the projected ensemble is defined as the set of states obtained by projectively measuring a pure state $\ket{\Psi}$ on a large subsystem $B$~[Fig.~\ref{fig:projected*}(a)]. Specifically, the subsystem $B$ is measured, giving a random outcome $z_B$, which occurs with probability $p(z_B)$:
\begin{equation}
p(z_B) \equiv \bra{\Psi}\left(\mathbb{I}_A \otimes \ketbra{z_B}{z_B}   \right)\ket{\Psi}~.
\end{equation}
This measurement outcome projects the initial state $\ket{\Psi}$ onto a state $\ket{\Psi(z_B)}$ in the Hilbert space of the remaining subsystem, denoted $A$:
\begin{equation}
    \ket{\Psi(z_B)} \equiv \left(\mathbb{I}_A \otimes \bra{z_B}\right)\ket{\Psi}/\sqrt{p(z_B)} \in \mathcal{H}_A
\end{equation}
The projected ensemble is the set of states
\begin{equation}
 \mathcal{E}_\text{Proj.} = \{p(z_B), \ket{\Psi(z_B)}\}   
\end{equation}

While this object was first introduced in different contexts, e.g.~in Ref.~\cite{goldstein2016universal}, the projected ensemble was later studied by some of the authors in Refs.~\cite{choi2023preparing,cotler2023emergent} as a means of quantifying the notion of randomness in a \textit{single} many-body state (as opposed to ensembles of them), by studying its projected ensemble. 

\subsection{Results at infinite temperature: approximate $k$-designs}
In Ref.~\cite{cotler2023emergent}, it was found that a wide class of many-body states induce projected ensembles that are statistically close to the Haar ensemble. This closeness was measured by the trace distance $\Delta^{(k)}_\text{Haar}$ between the $k$-th moments of the projected ensemble and the Haar ensemble. 

Specifically, it was proven that the projected ensembles are approximate $k$-designs when the global states are typical states drawn from the Haar ensemble or from a state $k'$-design (with $k'> k$). This was also numerically demonstrated in natural many-body states, obtained from either (a) time-evolution of an initial state or (b) the eigenstates of a chaotic many-body Hamiltonian, close to infinite temperature, defined as states $\ket{\Psi_0}$ that satisfy $\bra{\Psi_0}H\ket{\Psi_0} = \tr(H)/D$. Subsequent work has rigorously established the emergence of the projected $k$-design in specific settings, such as in dual-unitary models~\cite{ho2021exact,ippoliti2022dynamical,ippoliti2022solvable,claeys2022emergent,shrotriya2023nonlocality}, free fermion models~\cite{lucas2023generalized}, or under assumptions of the reduced density matrix~\cite{wilming2022hightemperature}. Further work has studied the effects of symmetry~\cite{bhore2023deep,varikuti2024unraveling}, the effect of quantum magic in the global state~\cite{vairogs2024extracting} on the projected ensemble, and deep thermalization timescales~\cite{ippoliti2022dynamical,ippoliti2022solvable,chan2024projected}.

In the above studies, it was important that the initial state $\ket{\Psi_0}$ be at infinite temperature. This is because the first moment of the projected ensemble is the reduced density matrix $\rho_A \equiv \tr_B[\ketbra{\Psi}{\Psi}]$. For thermal states, $\rho_A$ is close to a Gibbs state~\cite{srednicki1999approach}; only at infinite temperature is this equal to the first moment of the Haar ensemble: the maximally mixed state $\mathbb{I}/D_A$. Therefore, away from infinite temperature, the projected ensembles of such states cannot equal the Haar ensemble. It has remained an open question as to which ensemble describes the finite-temperature projected ensemble.

\subsection{Projected ensembles at finite temperature: Scrooge ensemble}
Our main result of this section is that under appropriate conditions, the projected ensemble of a finite temperature state is described, by the \textit{Scrooge ensemble}~\cite{jozsa1994lower,goldstein2006distribution,goldstein2016universal}. Introduced in Ref.~\cite{jozsa1994lower}, any density matrix $\rho$ has a corresponding Scrooge ensemble, which we denote $\text{Scrooge}[\rho]$~[Figure~\ref{fig:projected*}(b)].
This is most easily understood as a ``$\rho$-distortion" of the Haar ensemble, defined by the probability distribution. The state $\ket{\Psi}$ has a higher probability when it has a larger overlap with the principal axes of $\rho$. 
We can sample from the Scrooge ensemble by ``distorting" wavefunctions sampled from the Haar ensemble, as described in Section~\ref{sec:summary_of_results}. We can also describe the Scrooge ensemble in terms of its distribution function~\cite{goldstein2006distribution}.
\begin{align}
    &P_\text{Scr}(\Psi) d\Psi \nonumber\\
    &= \frac{D!}{2~\pi^D \det \rho} \bra{\Psi}\rho^{-1} \ket{\Psi}^{-D-1} P_\text{Haar}(\Psi) d\Psi \label{eq:Scrooge}\\
    &=\frac{D!}{2~\pi^D \prod_{m=1}^D \lambda_m} \left(\sum_{m=1}^D \frac{\abs{\Psi_m}^2}{\lambda_m}\right)^{-D-1}~  P_\text{Haar}(\Psi) d\Psi~, \nonumber
\end{align}
where $\{\lambda_m, \ket{m}\}$ is the eigensystem of $\rho$, and $\Psi_m \equiv \braket{m}{\Psi}$.
In the context of the projected ensemble, the first moment $\rho$ is the reduced density matrix $\rho_A$ of $\ket{\Psi}$. 

In the ideal limit in which the measurement basis is a \textit{Haar-random basis} on $B$, 
Ref.~\cite{goldstein2016universal} proves that the resulting projected ensemble is the Scrooge ensemble. 

However, this limit of Haar-random measurements is not realistic to our setting, in which the measurement basis is typically a complete basis of unentangled product states. This may have qualitatively different behavior from the highly entangled Haar-random basis states. Furthermore, in realistic systems, the measurement basis may be strongly correlated with the Hamiltonian realized in the system.

It will turn out that the first condition does not affect the behavior of the projected ensemble, but the second condition does. 
While we provide a precise definition in Section~\ref{sec:general_scrooge}, ``correlations between measurement basis and the Hamiltonian" may be taken to be the distribution of expectation values $\bra{z_B}H_B\ket{z_B}$, where $H_B$ are the terms of the Hamiltonian $H$ restricted onto the large subsystem $B$. We find that the projected states $\ket*{\Psi(z_B)}$ are correlated with the energy $\bra{z_B}H_B\ket{z_B}$. If there is a large spread of values $\bra{z_B}H_B\ket{z_B}$, the projected states are distinguishable. Conversely, if $\bra{z_B}H_B\ket{z_B}$ is independent of $z_B$, the projected states are independent of their label $z_B$, and the projected ensemble can be regarded as a collection of identical samples from the same distribution. In brief, we claim:
\begin{claim}
\label{claim:Scrooge}
    The projected ensemble is statistically described by the Scrooge ensemble, if the measurement basis $\{\ket{z_B}\}$ is energy non-revealing.
\end{claim}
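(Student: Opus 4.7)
The plan is to derive Claim~\ref{claim:Scrooge} by reducing it to a statement about the temporal ensemble of the unnormalized projected state $\ket*{\tilde{\Psi}(z_B,t)} \equiv (\mathbb{I}_A \otimes \bra{z_B})\ket{\Psi(t)}$, then invoking the ergodic theorem of Sec.~\ref{subsec:ergodicity} to convert time averages into averages over outcomes $z_B$. First, I would fix $z_B$ and compute the time-averaged $k$-th moment $\tilde{\rho}^{(k)}_{z_B} \equiv \mathbb{E}_t[\ketbra*{\tilde{\Psi}(z_B,t)}{\tilde{\Psi}(z_B,t)}^{\otimes k}]$ by applying Theorem~\ref{thm:temp_ens} to the global state $\ket{\Psi(t)}$. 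Since $(\mathbb{I}_A \otimes \bra{z_B})$ is a fixed linear map, this reduces to projecting the random-phase $k$-th moment of the global state and, via the product form of Theorem~\ref{thm:temp_ens_product_form}, yields an expression of the form $\tilde{\rho}^{(k)}_{z_B} \approx \bar{\rho}(z_B)^{\otimes k} \sum_{\sigma\in S_k}\text{Perm}(\sigma)$ up to corrections controlled by the effective dimension.

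The next step is to recognize that this is precisely the unnormalized $k$-th moment of a Scrooge ensemble with first moment $\bar{\rho}(z_B)$. Indeed, a straightforward calculation using the definition in Eq.~\eqref{eq:Scrooge}, together with the Haar moment formula Eq.~\eqref{eq:Haar_kth_moment}, shows that the Scrooge $k$-th moment factorizes in exactly this $\bar{\rho}(z_B)^{\otimes k}$-times-symmetrizer form (this is the content that I would verify in Sec.~\ref{subsec:scrooge_moment}). Therefore the temporal trajectory of $\ket*{\tilde{\Psi}(z_B,t)}$ at fixed $z_B$ is statistically indistinguishable, at the level of all moments, from $\mathcal{E}_{\text{Scrooge}}(\bar{\rho}(z_B))$; the Carleman-type argument used in Theorem~\ref{thm:temp_ens} then upgrades moment equality to ensemble equality.

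To convert this ``distribution over $t$ at fixed $z_B$'' into the desired ``distribution over $z_B$ at fixed $t$,'' I would apply the ergodic theorem in Sec.~\ref{subsec:ergodicity}, which exchanges the roles of the sampling variable and the fixed variable at the price of subleading $O(D_\beta^{-1/2})$ corrections. At a typical late time $t$, the distribution of normalized projected states $\ket{\Psi(z_B)}$ weighted by $p(z_B) \propto \Vert \tilde{\Psi}(z_B,t)\Vert^2$ therefore coincides with a mixture over $z_B$ of Scrooge ensembles $\mathcal{E}_{\text{Scrooge}}(\bar{\rho}(z_B))$. Finally, invoking the energy-non-revealing hypothesis --- which by definition states that $\bar{\rho}(z_B)$ is independent of $z_B$ up to rescaling, and hence, after normalization, equals the reduced density matrix $\rho_A$ for every outcome --- collapses this mixture to the single ensemble $\mathcal{E}_{\text{Scrooge}}(\rho_A)$, establishing the claim.

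The main obstacle I anticipate is controlling the two independent sources of error. The product-form approximation in Theorem~\ref{thm:temp_ens_product_form} contributes an $O(\tr \rho_d^2)$ error that must be shown to remain small after the nonlinear operation of normalizing by $p(z_B)$, and the ergodic theorem contributes an $O(D_\beta^{-1/2})$ error at typical times; combining them rigorously, and confirming that no resonances between the projection data $\{z_B\}$ and the spectrum of $H$ can inflate these corrections, is the delicate step. The rest of the argument is essentially bookkeeping that mirrors the proof of Theorem~\ref{thm:moment_of_general_projected_ensemble}, of which Claim~\ref{claim:Scrooge} is a specialization.
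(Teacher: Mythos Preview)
Your proposal is correct and follows essentially the same route as the paper: establish Lemma~\ref{lemma:temp_ens_of_proj_st} (temporal moments of the projected state equal the unnormalized Scrooge moments via the product form), bound the temporal variance to pass to typical times as in Theorem~\ref{thm:moment_of_general_projected_ensemble}, and then specialize via the energy-non-revealing condition $\bar{\rho}(z_B) \propto \rho_A$ as in Sec.~\ref{subsec:gen_scrooge_to_scrooge}. One small caution: the Carleman step is out of place here, since the projected ensemble is discrete and the claim is about closeness of moments, not literal equality of continuous distributions; the paper accordingly works directly at the level of $k$-th moment trace distances rather than upgrading to distributional equality.
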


When such correlations are not negligible, the projected ensemble is described by a \textit{generalized Scrooge ensemble}, discussed in Section~\ref{sec:general_scrooge}. There, we provide a proof of our general claim for the projected ensembles of time-evolved states, based on our results about the temporal ensemble in Section~\ref{sec:temporal_ensemble}.

Our conventional understanding of many-body dynamics is that on sufficiently small subsystems, a many-body state will rapidly reach thermal equilibrium, i.e. its reduced density matrix will be close to that of a Gibbs state at an appropriate temperature~\cite{srednicki1999approach}. As with the temporal ensemble, our results imply that not only is the first moment of the projected ensemble (the reduced density matrix) a universal function of the temperature, so are all higher statistical moments. 

\subsection{Moments of the Scrooge ensemble}
\label{subsec:scrooge_moment}
We directly test our claim by comparing the $k$-th moments of the projected ensemble to those of the Scrooge ensemble. In order to do so, here we provide explicit expressions for the $k$-th moment of the Scrooge ensemble, which have previously not been available in the literature. 

Our expression for the $k$-th moment is in terms of its matrix elements in ($k$-copies of) the eigenbasis $\{\lambda_m, \ket{m}\}$ of $\rho$. We find that the nonzero elements of the Scrooge $k$-th moment $\rho^{(k)}_{\text{Scr.}}$ are those which permute the $k$ copies. For a basis element $\ket{\pmb{m}}\equiv \ket{m_{1},\dots,m_{k}}$, $\langle \pmb{m} |\rho^{(k)}_{\text{Scr.}} | \pmb{m'} \rangle$ is non-zero only  $\ket{\pmb{m'}}$ is a permutation of $\ket{\pmb{m}}$, i.e.~$\ket{\pmb{m'}} = \ket{\sigma(\pmb{m})},~\sigma \in S_k$. Furthermore, the matrix elements only depend on $\pmb{m}$ and are independent of the permutation $\sigma$. 
\begin{equation}
\rho^{(k)}_{\text{Scr.}} = \sum_{\pmb{m}} \sum_{\pmb{m'} = \sigma(\pmb{m})}    \rho^{(k)}_{\text{Scr.},\pmb{m}} \ketbra{\pmb{m}}{\pmb{m'}}~.
\end{equation}
This structure is similar to the $k$-th moments of the Haar ensemble [Eq.~\eqref{eq:Haar_kth_moment}]. However, unlike in the  Haar ensemble, the coefficients $\rho^{(k)}_{\text{Scr.},\pmb{m}}$ are nontrivial and are given by
\begin{equation}
\rho^{(k)}_{\text{Scr.},\pmb{m}}=\left(\prod_{i=1}^D \lambda^{-1}_i\right)\frac{\partial^{k} \Lambda_k}{\partial \mu_{m_1}\dots \partial \mu_{m_k}} {\Bigg\vert}_{\mu_i = \lambda_i^{-1}}~,  \label{eq:scrooge_kth_moment}
\end{equation}
where 
\begin{align}
    \Lambda_k(\mu_1,\dots, \mu_D)&\equiv \sum_{j=1}^{D} \frac{\mu^{k-2}_j\ln \mu_j}{\prod_{i\neq j} (\mu_j-\mu_i)}.
\end{align}
We derive this result in Appendix~\ref{app:Scrooge_kth_moment}. While this expression appears complicated, it reduces to a simple form in the limit that $\rho$ is a sufficiently mixed state. We illustrate this for $k=2$. When $\sum_m \lambda_m^{2n} \ll 1$ for every $n>1$, Eq.~\eqref{eq:scrooge_kth_moment} simplifies to the product form (c.f. Theorem~\ref{thm:temp_ens_product_form})
\begin{align}
    \rho^{(2)}_\text{Scr.} \approx \rho^{\otimes2} (\mathbb{I}+\mathbb S)~,
\end{align}
where $\mathbb{I}$ and $\mathbb{S}$ are the identity and swap operators permuting two copies of the Hilbert space $\mathcal{H}^{\otimes 2}$.

\subsection{Numerical evidence: trace distance between $k$-th moments}
We numerically support our claim by identifying the projected ensemble with the Scrooge ensemble. 
Detailed information about the model of our numerical simulations is provided in Section~\ref{subsec:numerical_details}. 
In Figure~\ref{fig:projected*}(c,d), we compare the $k=2,3$ moments of the Scrooge ensemble $\rho^{(k)}_\text{Scr.}$ with those of the projected ensemble
\begin{equation}
    \rho^{(k)}_\text{Proj.} \equiv \sum_{z_B} p(z_B) |\Psi(z_B)\rangle\langle \Psi(z_B)|^{\otimes k}~,
    \label{eq:proj_ens_moment}
\end{equation}
obtained from a time-evolved state $\ket{\Psi(t)}$ under the MFIM. As a function of quench evolution time $t$, the trace distance $\Delta^{(k)}$ decreases as a power law with $t$~\cite{cotler2023emergent}, before saturating at a value which depends on system size, illustrated for $k=2$ in Figure~\ref{fig:projected*}(c). This saturation value decreases exponentially with increasing system size for both $k=2$ and $k=3$ [Fig.~\ref{fig:projected*}(d)].

\subsection{Unnormalized ensembles of states}
It is mathematically more convenient to treat the projected ensemble as an ensemble of the \textit{unnormalized states} $\ket*{\tilde{\Psi}(z_B)} \equiv (\mathbb{I}_A \otimes \bra{z_B})\ket{\Psi} = \sqrt{p(z_B)}|\Psi(z_B)\rangle$, such that the projection operation $\ketbra{\Psi}{\Psi} \rightarrow \ketbra*{\tilde{\Psi}(z_B)}{\tilde{\Psi}(z_B)}$ acts linearly on the state $\ketbra{\Psi}{\Psi}$. The $k$-th moment of the ensemble of unnormalized states [Eq.~\eqref{eq:unnormalized_kth_moment}, reproduced below], differs from that of the ensemble of normalized states [Eq.~\eqref{eq:normalized_kth_moment}]. 
\begin{align}
\tilde{\rho}^{(k)} &\equiv \frac{1}{D_B}\sum_{z_B} \ketbra*{\tilde{\Psi}(z_B,t)}{\tilde{\Psi}(z_B,t)}^{\otimes k} \equiv \int d\tilde{\Psi} \ketbra*{\tilde{\Psi}}{\tilde{\Psi}}^{\otimes k}~.
\end{align}

The primary reason for considering the unnormalized ensemble moments is that we will be able to derive analytical expressions for the corresponding moments of projected ensembles generated by time-evolved states. 

In terms of unnormalized ensembles, our claim is that the projected ensemble is statistically described by the unnormalized-state analog of the Scrooge ensemble; which we refer to as the \textit{unnormalized Scrooge ensemble}, and denote $\tilde{\text{S}}\text{crooge}[\rho]$. This is simply a distortion of the Gaussian ensemble of unnormalized states [Eq.~\eqref{eq:Gaussian_ens}, Ref.~\cite{goldstein2006distribution}]. In the eigenbasis of $\rho$, the probability distribution function factorizes:
\begin{align}
    P_{\tilde{\text{S}}\text{cr.}}(\tilde{\Psi}) d\tilde{\Psi} \label{eq:max_S_dist} &= \prod_{m=1}^{D}\frac{1}{\lambda_m \pi} \exp[-\lambda_m^{-1} \vert\tilde{\Psi}_m\vert^2] d^2\tilde{\Psi}_m~.
\end{align}
Note that we can map the unnormalized $\tilde{\text{S}}\text{crooge}[\rho]$ into the normalized ensemble $\text{Scrooge}[\rho]$ by the procedure described in Sec.~\ref{sec:def_pure_state_ensembles}.

Indeed, the $k$-th moments of $\tilde{\text{S}}\text{crooge}[\rho]$ are far simpler than those of $\text{Scrooge}[\rho]$. They are \textit{exactly} of the product form:
\begin{align}
   &\tilde{\rho}^{(k)}_{\tilde{\text{S}}\text{cr.}} =  \int d \tilde{\Psi}~\exp[-\lambda_j^{-1} \vert\tilde{\Psi}_j\vert^2] \sum_{\substack{\pmb{m},\pmb{m'},\\\pmb{m'}= \sigma(\pmb{m})}} \prod_{i=1}^k |\tilde{\Psi}_{m_i}|^2 \ketbra{\pmb{m}}{\pmb{m'}}   \nonumber\\
   &= \rho^{\otimes k} \sum_{\sigma \in S_k} \text{Perm}(\sigma)~.\label{eq:distorted_Gaussian_kth_moment_result}
\end{align}

\subsection{Porter-Thomas distribution in outcome probabilities}
\label{subsec:Scrooge_PT}
The simple $k$-th moments of the unnormalized Scrooge ensembles lend themselves to a simple, measurable signature of our claim. We consider the joint probabilities $p_{o_A}(z_B) \equiv \vert\langle o_A|\tilde{\Psi}(z_B)\vert^2 = |\langle o_A,z_B|\Psi\rangle|^2$. Eq.~\eqref{eq:distorted_Gaussian_kth_moment_result} predicts that if we fix $o_A$, the distribution of probabilities $p_{o_A}(z_B)$ over the bitstrings $z_B$ will be a PT distribution $p_{o_A}(z_B) \sim \text{PT}\left(\bra{o_A}\rho_A \ket{o_A}\right)$. In Fig.~\ref{fig:projected*}(e), $A$ is a subsystem of four qubits. We plot the histogram of $p_{o_A}(z_B)$ for three different outcomes $o_A = 0000,1010$ or 0010, all of which follow PT distributions.

 In contrast to the joint probability $p_{o_A}(z_B)$, the distribution of the conditional probabilities $p(o_A|z_B) \equiv p(o_A,z_B)/p(z_B)$ is non-universal. While the moments of the joint probability $p_{o_A}(z_B)$ correspond to the moments of the unnormalized Scrooge ensemble, the moments of the conditional probability $p(o_A|z_B)= |\langle o_A | \Psi(z_B)\rangle|^2$ are given by the moments of the normalized Scrooge ensemble [Eq.~\eqref{eq:scrooge_kth_moment}].
 The predicted distribution of $p(o_A|z_B)$ is non-universal and depends on the observable $O_A$, hence is more challenging to verify, illustrating the conceptual utility of ensembles of unnormalized states.

\subsection{Maximally entropic ensembles of unnormalized states}
\label{sec:Scrooge_entropy}
That the Scrooge ensemble describes the projected ensemble is no coincidence: it is known that the \textit{unnormalized} Scrooge ensemble has maximum entropy, subject to the constraint that its first moment is $\rho_A$~\cite{parfionov2006lazy,goldstein2006distribution} (see Ref.~\cite{brody2000information} for a discussion of maximum-entropy normalized ensembles). Showing this is a simple application of Lagrange multipliers, but we reproduce it here for completeness. 
We want to solve for the probability distribution $P_\text{MaxEnt}$ which minimizes the relative entropy $D\left(\mathcal{E}\Vert\text{Gauss}\right)$, subject to the first moment constraint. This optimization may be independently solved in the eigenbasis of $\rho_A$, i.e. $P_\text{MaxEnt}$ factorizes in the eigenbasis. In each component, $P_\text{MaxEnt}$ has exponential form:
$P_\text{MaxEnt}(\tilde{\Psi}_j)/P_\text{Gauss}(\tilde{\Psi}_j) \propto \exp(-c_j \vert\tilde{\Psi}_j\vert^2)$. The unknown values $c_j$ are solved by enforcing the first moment constraint. With $\rho_A = \sum_j \lambda_j \ketbra{j}{j}$, we have
\begin{align}
    \lambda_j &= \int \vert\tilde{\Psi}_j\vert^2 P_\text{MaxEnt}(\tilde{\Psi}) d \tilde{\Psi} \\
    &= \frac{1}{Z} \left[\prod_{k\neq j} \frac{D}{D+c_k} \right] \frac{D}{(D+c_j)^2} = \frac{1}{D+c_j}~,
\end{align}
where in the second line we have performed the relevant Gaussian integrals, and we note that the normalization constant $Z$ satisfies $Z= \prod_{k}D/(D+c_k)$. It follows that $c_j = \lambda_j^{-1} - D$, precisely equal to the unnormalized Scrooge ensemble in Eq.~\eqref{eq:max_S_dist}. This derivation is fairly general and may apply to multiple settings, e.g.,~recent work in the context of maximum entropy states under charge conservation~\cite{altland2022maximum}.

The corresponding maximized value of the ensemble entropy is
\begin{align}
    &\text{Ent}(\mathcal{E}) = -\mathcal{D}(\mathcal{E}\Vert \text{Gauss}) \\
    &= \sum_j \int d\tilde{\Psi}_j P_{\tilde{\text{S}}\text{cr.}}(\tilde{\Psi}_j) \left[\left(\lambda_j^{-1}-D\right) \vert \tilde{\Psi}_j \vert^2 + \log(D \lambda_j)\right]\nonumber\\ 
    &= \sum_j\log(D \lambda_j)~,
\end{align}
identical to the (non-divergent part of the) ensemble entropy of the temporal ensemble~[Eq.~\eqref{eq:temporal_ensemble_entropy}], strongly suggesting a connection between the two ensembles.

\subsection{Basis-independent mutual information}
\label{subsec:scrooge_mutual_info}
The Scrooge ensemble was originally studied for its information theoretic properties: it is the ensemble of states with minimal \textit{accessible information} (Section~\ref{subsec:mutual_info})~\cite{jozsa1994lower}.  Specifically, Josza, Robb, and Wooters~\cite{jozsa1994lower} lower bounded the accessible information of any ensemble of states with first moment $\rho$ by the \textit{subentropy} $Q(\rho)$ [Eq.~\eqref{eq:subentropy}], and found that the Scrooge ensemble achieves this lower bound. A non-trivial, testable consequence of their result is that the mutual information $I(O_A;Z_B|T)$ is equal to the subentropy, for any complete basis $O_A$. 
\begin{align}
    I(O_A; Z_B|T) \approx Q(\rho_A)~. \label{eq:subentropy}
\end{align}
Here, the \textit{conditional} mutual information $I(O_A;Z_B|T)$ refers the mutual information between the variables $O_A$ and $Z_B$ of the probability distribution $p(o_A,z_B,t)$ at \textit{fixed times} $t$, i.e. conditioning on the time $T$. As a reminder, we denote variables with upper case letters and their values in lower case. Although time will not play a role here (as long as the evolution time is sufficiently long), we primarily adopt this notation anticipating our result in Sec.~\ref{subsec:interaction_info}, where the dependence on time will be important.

The operational meaning of the mutual information can be understood by considering the following scenario.
Suppose Alice and Bob share an entangled quantum many-body state obtained from ergodic Hamiltonian dynamics of duration $t$. 
Bob performs a measurement on his (larger) portion of the state in the computational basis, resulting in a measurement outcome $z_B$. Alice performs the measurement of an arbitrary (potentially optimized) local observable leading to an outcome $o_A$. How much information do Alice and Bob share in this exercise?
Or, equivalently, how much can Alice or Bob know about the measurement outcomes of the other from their own measurement outcomes?
Our result implies that they know \emph{as little as possible}, while some amount of information is inevitably shared due to energy conservation. 
Furthermore, this feature is universal, independent of the choice of Alice's observable as long as Bob's measurement basis realizes a Scrooge ensemble (we discuss the effects of measurement basis below).

We verify this property in Fig.~\ref{fig:projected*}(f). We choose three different measurement bases $O_A \in \{X^{\otimes 4},Y^{\otimes 4},Z^{\otimes 4}\}$, i.e. the $X$, $Y$ and $Z$ bases on the subsystem $A$. We verify that the mutual information $I(O_A;Z_B|T)$ of a time-evolved state $\ket{\Psi(t)}$ is independent of $O_A$ and agrees with the subentropy $Q(\rho_A)$ across a wide range of energy densities of the initial state $\ket{\Psi_0}$, i.e. over a wide range of temperatures of the reduced density matrix $\rho_A$. 

Taken together, our results provide a surprising universality in the behavior of projected ensembles, even away from infinite temperature. When the basis $Z_B$ is uncorrelated with the projected states $\ket{\Psi(z_B)}$, they are statistically described by independent samples from the Scrooge ensemble  $\ket{\Psi(z_B)} \sim \text{Scrooge}[\rho_A]$. We discussed measurable signatures of the Scrooge ensemble: statistics of the outcomes probabilities $p_{o_A}(z_B)$, and universal behaviour of the conditional mutual information $I(O_A;Z_B|T)$. 
However, we have not yet provided a derivation for the emergence of the Scrooge ensemble. Furthermore, our results in this section will not hold for generic measurement bases. In the following section, we resolve both issues by deriving a \textit{generalized Scrooge ensemble} from the temporal ensemble.

\section{From temporal to projected ensembles: The effect of measurement basis}
\label{sec:general_scrooge}
As aforementioned, properties of the projected ensemble depend on the measurement basis on $B$: our findings in Section~\ref{sec:scrooge} only hold for certain measurement bases. In this section, we seek to extend our description of projected ensembles to general bases. Instead of Scrooge ensembles, we will find that projected ensembles are described by a \textit{generalized Scrooge ensemble}, which we introduce, that accounts for correlations between $z_B$ and the projected states $\ket{\Psi(z_B)}$. 

For projected ensembles generated by from time-evolved states, we derive this result from properties of the temporal ensemble. Our finding is more general than time-evolved states: we numerically find that the projected ensembles arising from energy eigenstates are also well described by generalized Scrooge ensembles. We discuss measurable and information-theoretic signatures of the generalized Scrooge ensemble, which mirror that of the Scrooge ensemble. Finally, we identify conditions under which the generalized Scrooge ensemble reduces to the Scrooge ensemble. These results give a complete description of the projected ensembles generated by time-evolved states, proven for generic Hamiltonian dynamics, and broaden our understanding of deep thermalization.

As an illustrative example, we consider the mixed-field Ising model (MFIM). For convenience, we repeat its Hamiltonian [Eq.~\eqref{eq:MFIM}] here.
\begin{equation}
    H_\text{MFIM} = \sum_{j=1}^N h_x X_j + h_y Y_j + \sum_{j=1}^{N-1} J X_j X_{j+1}~,
\end{equation}
The projected ensemble is equal to the Scrooge ensemble when the subsystem $B$ is measured in the $Z_B$ basis. In other bases, e.g. the $X_B$ basis, there are correlations between the measurement outcomes $\ket{x_B}$ and the projected states $\ket{\Psi(x_B)}$. In particular, their energies are anti-correlated: measurement outcomes $\ket{x_B}$ with high energies $\bra{x_B} H_B \ket{x_B}$ are likely to result in projected states with lower energies $\bra{\Psi(x_B)}H_A \ket{\Psi(x_B)}$ (and vice versa), where $H_A$ and $H_B$ are the terms of the Hamiltonian restricted to the subsystems $A$ and $B$ respectively. Note that the energies we refer to here are local energies on subsystems $A$ and $B$ and therefore such energetic correlations are not strict, but exist in a statistical sense, see e.g. Ref.~\cite{murthy2019structure}. Under these conditions, one must specify exponentially many probability distributions which take into account the correlations between $x_B$ and $\ket{\Psi(x_B)}$ arising from energy or other conservation laws. This is in contrast to what we have discussed in Sec.~\ref{sec:scrooge}, where we could treat the projected states $\ket{\Psi(x_B)}$ with different labels (measurement outcomes) $x_B$ as samples from the same probability distribution. 
To emphasize the effect of measurement bases, in this section we will refer to such measurement outcomes and projected states as $x_B$ and $\ket{\Psi(x_B)}$ respectively.

\textit{Special case: Magnetization-sectored bases ---} To gain intuition about the effects of measurement basis, it is illustrative to consider a special case: one where the outcomes can be partitioned into sets (``sectors"),  within which there are no further correlations between the outcomes and projected states. For example, the MFIM we consider has a Zeeman field in the $Y$-axis. When this Zeeman field is sufficiently large, the sector $M$ of bitstrings $\ket{y_B}$ with the same magnetization have approximately the same energy on $B$, but have different energies from bitstrings from other sectors. One consequence of this is the following: consider the ``sector-resolved reduced density matrix," defined as
\begin{align}
\rho_A(M) &=\frac{\sum_{y_B \in M} p(y_B) \ketbra*{\Psi(y_B,t)}{\Psi(y_B,t)}}{\sum_{y_B \in M} p(y_B)}~.
\end{align}
These mixed states $\rho_A(M)$ show systematic differences across magnetization sectors $M$. Accordingly, it is not appropriate to take the projected states from $|\Psi(y_B,t)\rangle$ as identical samples from the same ensemble of states. Rather, projected states corresponding to the same sector $M$ are drawn from the same ensemble. As we will show below, these are Scrooge ensembles, with different first moments: 
\begin{equation}
   \forall y_B \in M~,~ \ket*{\Psi(y_B)} \sim \text{Scrooge}[\rho_A(M)]~.
\end{equation}
However, this sector picture is only valid in special cases. Our generalized Scrooge ensemble extends this observation to the general case.

\begin{figure*}
    \centering
    \includegraphics[width=0.9\textwidth]{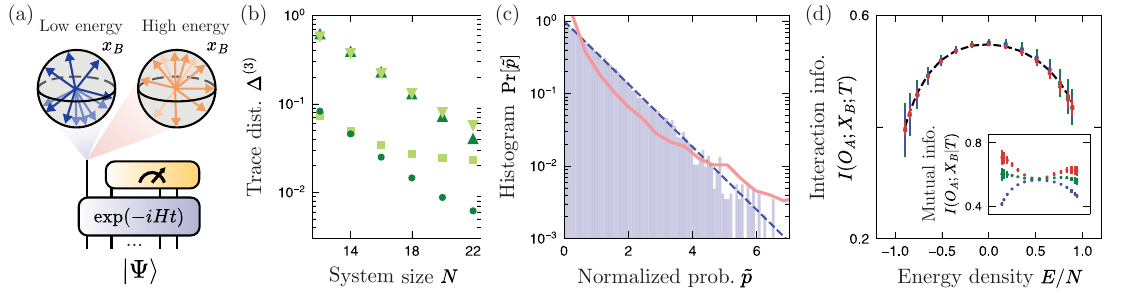}
\caption{Generalized Scrooge ensembles describe projected ensembles. (a) When the measurement basis on $B$ (here denoted $X_B$) is correlated with the total energy of the system, the projected states $\ket{\Psi(x_B)}$ are correlated with the measurement outcome $x_B$, here illustrated by differences between outcomes with ``low" (orange) and ``high" (blue) energies. This leads to deviations from the Scrooge ensemble, and the projected ensemble is instead described by a generalized Scrooge ensemble (Claim~\ref{claim:generalized_ensemble}). (b) We verify this conjecture numerically with the trace distance $\Delta^{(3)}$ between the third moments of the projected ensemble and either the Scrooge ensemble 
(squares, downward triangle) or the generalized Scrooge ensemble (circles and upward triangles). We consider projected ensembles of states time-evolved (for sufficiently long times) under the MFIM [Eqs.~(\ref{eq:MFIM}-\ref{eq:initial_state})] and measured in the $X_B$ basis.
The difference between the two ensembles is more obvious for subsystems of two qubits (squares, circles) than for subsystems of four qubits (upward, downward triangles), due to the larger bath relative to the system size. While the distance to the Scrooge ensemble saturates to a non-zero value, the distance to the generalized Scrooge ensemble continues to decrease. (c) Consistent with the generalized Scrooge ensemble, the histogram (blue) of the rescaled probabilities $\tilde{p}(x_B) \equiv p_{x_A}(x_B)/\mathbb{E}_t[p_{x_A}(x_B)]$ follows a PT distribution, while the bare probabilities $D p_{x_A}(x_B)$ (red line) do not, with $x_A = 0000$ plotted here as a representative. (d) As with the mutual information in Scrooge ensembles, in generalized Scrooge ensembles the \textit{interaction information} $I(O_A;X_B;T)$ displays a universal behavior, independent of basis $O_A$, and agrees with the \textit{weighted sum of subentropies} [Eq.~\eqref{eq:weighted_subentropies}] (dashed). Inset: in contrast, the mutual information $I(O_A;X_B|T)$ is no longer independent of the basis $O_A$, here illustrated in red, green, and blue respectively by $O_A = \{X^{\otimes 4},Y^{\otimes 4},Z^{\otimes 4}\}$, due to time-averaged (thermal) correlations between the measurement bases $O_A$ and $X_B$. }
    \label{fig:generalized_scrooge}
\end{figure*}

\subsection{Generalized Scrooge Ensemble}
Our key idea is to treat each projected state $\ket*{\Psi(x_B)}$ as a random sample from a \textit{different} Scrooge ensemble $\text{Scrooge}[\bar{\rho}(x_B)]$, where $\bar{\rho}(x_B)$ is a density matrix associated to each outcome $x_B$. We define the density matrix $\bar{\rho}(x_B)$ be the time-averaged state
\begin{align}
    \bar{\rho}(x_B) &\equiv  \mathbb{E}_t[|\tilde{\Psi}(x_B)\rangle\langle \tilde{\Psi}(x_B)|]/\mathbb{E}_t[\langle\tilde{\Psi}(x_B)|\tilde{\Psi}(x_B)\rangle]\nonumber\\
&=\left(\mathbb{I}_A \otimes \bra{x_B}\right) \rho_d \left(\mathbb{I}_A \otimes \ket{x_B}\right)/p_d(x_B)~,\label{eq:average_state_xB}\\
    p_d(x_B) &\equiv \mathbb{E}_t[\langle\tilde{\Psi}(x_B)|\tilde{\Psi}(x_B)\rangle] \nonumber\\
    &=  \tr\left[\left(\mathbb{I}_A \otimes \bra{x_B}\right) \rho_d \left(\mathbb{I}_A \otimes \ket{x_B}\right)\right]~.
\end{align}
 $p_d(x_B)$ is the time-averaged probability of measuring the outcome $x_B$, and we express both $\bar{\rho}(x_B)$ and $p_d(x_B)$ in terms of the diagonal ensemble $\rho_d \equiv \mathbb{E}_t[|\Psi(t)\rangle\langle \Psi(t)|]$. We refer to this collection of probability distributions, each sampled from once, as the generalized Scrooge ensemble.

\begin{claim}
\label{claim:generalized_ensemble}
    The states $\ket*{\Psi(x_B)}$ of the projected ensemble are sampled from the \textit{generalized Scrooge ensemble}
    \begin{align}
        \ket*{\Psi(x_B)} \sim \text{Scrooge}[\bar{\rho}(x_B)]~.
    \end{align}
\end{claim}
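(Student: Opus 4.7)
\noindent \textit{Proof proposal.} The plan is to bootstrap the claim from our earlier analysis of the temporal ensemble (Theorems~\ref{thm:temp_ens} and~\ref{thm:temp_ens_product_form}). For each fixed outcome $x_B$, I first consider the temporal trajectory of the \emph{unnormalized} projected state
\begin{equation}
\ket{\tilde\Psi(x_B, t)} = (\mathbb{I}_A \otimes \bra{x_B}) e^{-iHt}\ket{\Psi_0} = \sum_E c_E\, e^{-iE t}\, \ket{\phi_E(x_B)}~,
\end{equation}
where $\ket{\phi_E(x_B)} \equiv (\mathbb{I}_A \otimes \bra{x_B})\ket{E}$ is a (non-normalized, non-orthogonal) set of vectors in $\mathcal{H}_A$. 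Expanding the $k$-th temporal moment in the energy basis produces a sum weighted by $\mathbb{E}_t[\exp(-it \sum_j(E_{\alpha_j}-E_{\beta_j}))]$, which in the $\tau\rightarrow\infty$ limit collapses under the $k$-th no-resonance condition to a sum over permutations pairing $\{\alpha_j\}$ with $\{\beta_j\}$, exactly as in the proof of Theorem~\ref{thm:temp_ens}. Applying the asymptotic product form of Theorem~\ref{thm:temp_ens_product_form}, which is controlled because $\tr(\rho_d^2)$ is exponentially small in the many-body Hilbert space dimension, yields
\begin{equation}
\mathbb{E}_t\!\left[\ket{\tilde\Psi(x_B,t)}\!\bra{\tilde\Psi(x_B,t)}^{\otimes k}\right] \approx [\,p_d(x_B)\bar\rho(x_B)\,]^{\otimes k}\sum_{\sigma \in S_k}\text{Perm}(\sigma)~.
\end{equation}

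Comparing directly with Eq.~\eqref{eq:distorted_Gaussian_kth_moment_result}, this is precisely the $k$-th moment of the unnormalized Scrooge ensemble with first moment $p_d(x_B)\bar\rho(x_B)$. Since the projected states have bounded norm ($\sum_{x_B}\Vert\tilde\Psi(x_B,t)\Vert^2 = 1$), the Carleman-type argument used for Theorem~\ref{thm:temp_ens} upgrades matching moments to equality of distributions, so the \emph{time distribution} of $\ket{\tilde\Psi(x_B,t)}$ at fixed $x_B$ coincides with $\tilde{\text{S}}\text{crooge}[p_d(x_B)\bar\rho(x_B)]$. Normalizing each state gives the desired $\ket{\Psi(x_B,t)}\sim\text{Scrooge}[\bar\rho(x_B)]$, but still as a statement about fluctuations in $t$ at fixed $x_B$, not about fluctuations over $x_B$ at a typical fixed $t$.

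To exchange the roles of $t$ and $x_B$ I would invoke an ergodic-type argument analogous to Sec.~\ref{subsec:ergodicity}. Define the empirical $k$-th moment at a single time,
\begin{equation}
\tilde\rho^{(k)}_{\mathrm{Proj}}(t) = \frac{1}{D_B}\sum_{x_B} \ket{\tilde\Psi(x_B,t)}\!\bra{\tilde\Psi(x_B,t)}^{\otimes k}~,
\end{equation}
and observe that its time-average $\mathbb{E}_t[\tilde\rho^{(k)}_{\mathrm{Proj}}(t)]$ equals the generalized Scrooge $k$-th moment $D_B^{-1}\sum_{x_B}[p_d(x_B)\bar\rho(x_B)]^{\otimes k}\sum_\sigma\text{Perm}(\sigma)$ by the previous step. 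The remaining task is to bound the temporal fluctuation $\mathbb{E}_t\bigl\Vert\tilde\rho^{(k)}_{\mathrm{Proj}}(t) - \mathbb{E}_t\tilde\rho^{(k)}_{\mathrm{Proj}}(t)\bigr\Vert^2$, which I expect to be small because it reduces (via the no-resonance expansion applied at order $2k$) to contributions suppressed by additional factors of $\tr(\rho_d^2)$ and by the effective dimension $D_\beta$. A Markov/Chebyshev bound then ensures that for a typical time $t$ (away from the very short transient of Sec.~\ref{subsec:finite_temp_temp_ens}) the moments match with controlled error, which is the precise form of Theorem~\ref{thm:moment_of_general_projected_ensemble}.

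The main obstacle is the last step: a clean fluctuation bound on $\tilde\rho^{(k)}_{\mathrm{Proj}}(t)$ with explicit scaling in $D$ and $D_B$. This is technically similar to the $D_\beta^{-1}$ bound used in Theorem~\ref{thm:temp_ens_PT}, but now mixes a sum over $x_B$ with the temporal no-resonance combinatorics at order $2k$, so care is needed to track the resonance terms that survive when the index sets $\{\alpha_j,\beta_j\}$ partially coincide across the two copies of $\tilde\rho^{(k)}_{\mathrm{Proj}}(t)$; I expect these contributions can be organized into a sum over subgroups of $S_{2k}$ and bounded by combinations of $\tr(\rho_d^n)$, generalizing the $k=1$ calculation of Sec.~\ref{subsec:finite_temp_temp_ens}.
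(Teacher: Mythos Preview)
Your approach matches the paper's two-step strategy: first establish that the temporal ensemble of each unnormalized projected state is $\tilde{\text{S}}\text{crooge}[p_d(x_B)\bar\rho(x_B)]$ (this is Lemma~\ref{lemma:temp_ens_of_proj_st}), then bound the temporal variance of an aggregate $k$-th moment to pass from the time-distribution to a statement at typical fixed times (Theorem~\ref{thm:moment_of_general_projected_ensemble}). You have correctly identified both the structure of the argument and the main technical obstacle, namely the $2k$-th order no-resonance combinatorics in the variance bound, and your anticipated mechanism (surviving cross-permutations controlled by $\tr(\rho_d^2)$) is exactly what the paper uses in Appendix~\ref{app:gen_Scrooge_proof}.

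Two technical points deserve attention. First, your empirical moment with uniform weight $1/D_B$ time-averages to $D_B^{-1}\sum_{x_B}p_d(x_B)^k\,\bar\rho(x_B)^{\otimes k}\sum_\sigma\text{Perm}(\sigma)$, which is \emph{not} the generalized Scrooge moment $\sum_{x_B}p_d(x_B)\,\tilde\rho^{(k)}_{\tilde{\text{S}}\text{cr.}}[\bar\rho(x_B)]$ that supports the claim; the power of $p_d(x_B)$ is off. The paper fixes this by weighting each term by $p_d(x_B)^{1-k}$ [Eq.~\eqref{eq:kth_moment_proof}], a choice made precisely so that the time-average lands on the correct target and the variance comes out as $O(\Delta_\beta)$. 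Second, invoking Theorem~\ref{thm:temp_ens_product_form} directly gives only the global trace-norm error $O(\tr(\rho_d^2))$; after projection onto $\bra{x_B}$ the leading term has scale $p_d(x_B)^k$, so the relative error is uncontrolled for small $p_d(x_B)$. The paper's Lemma~\ref{lemma:temp_ens_of_proj_st} reopens the combinatorics of Theorem~\ref{thm:temp_ens_product_form} to extract the sharper $x_B$-dependent error $\Delta_\beta(x_B)\,p_d(x_B)^k$. Neither point changes your overall plan, but both are needed to obtain the precise statement of Theorem~\ref{thm:moment_of_general_projected_ensemble}.
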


In order to support this claim, we compare the $k$-th moments of the normalized projected ensemble with the predicted $k$-th moment
\begin{equation}
    \rho^{(k)}_\text{Gen.~Scr.} = \sum_{x_B} p_d(x_B) \rho^{(k)}_\text{Scrooge}(\bar{\rho}(x_B))~.
\label{eq:gen_scrooge_moment}
\end{equation}
As illustrated in Fig.~\ref{fig:generalized_scrooge}(b), the higher moments of the projected ensemble agree with $\rho^{(k)}_\text{Gen.~Scr.}$ and differ from $\rho^{(k)}_\text{Scrooge}$ [Eq.~\eqref{eq:scrooge_kth_moment}], with exponential convergence with increasing system size. Here, we have calculated the average states $\bar{\rho}(x_B)$ by numerically time-averaging $\ketbra*{\tilde{\Psi}(x_B)}{\tilde{\Psi}(x_B)}$ over a long interval. Having numerically demonstrated our claim, we proceed to derive it with results from the temporal ensemble.

\textit{Temporal ensemble of projected states ---} In order to derive results about projected ensembles, we first establish a result about the temporal ensemble of a projected state. Specifically, we consider the set of unnormalized states $\ket*{\tilde{\Psi}(x_B,t)}$ for a fixed outcome $x_B$. We show that the time trajectory (temporal ensemble) is described by the unnormalized ensemble $\tilde{\text{S}}\text{crooge}[\bar{\rho}(x_B)]$.

\begin{lemma}[Temporal $k$-th moment of the projected state]
\label{lemma:temp_ens_of_proj_st}
Consider the temporal ensemble comprised of the (unnormalized) projected states $\ket*{\tilde{\Psi}(x_B,t)} \equiv (\mathbb{I}_A \otimes \bra{x_B})\ket{\Psi(t)}$. The $k$-th moment of this temporal ensemble is equal to the $k$-th moment of the corresponding unnormalized Scrooge ensemble, up to small corrections:
\begin{align}
    &\rho^{(k)}_\text{Temp. Proj.}(x_B) \equiv \mathbb{E}_t \left[\ketbra*{\tilde{\Psi}(x_B,t)}{\tilde{\Psi}(x_B,t)}^{\otimes k} \right]\label{eq:temp_ensemble_of_proj_state} \\
    &= p_d(x_B)^k\tilde{\rho}^{(k)}_{\tilde{\text{S}}\text{cr.}}[\bar{\rho}(x_B)] + O(\Delta_\beta(x_B)p_d(x_B)^{k})~,\nonumber 
\end{align}
where $\Delta_\beta(x_B) \equiv  \tr[(\mathbb{I}_A\otimes \ketbra{x_B}{x_B})^{\otimes 2}\rho_d^{(2)}]/p_d(x_B)^2$ is a number, typically exponentially small in system size, characterizing the validity of our approximation and where $\rho_d^{(2)} \equiv \abs{c_E}^4 \ketbra{E,E}{E,E}$.
\end{lemma}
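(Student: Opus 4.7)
The strategy is to view $\rho^{(k)}_\text{Temp.~Proj.}(x_B)$ as a linear projection of the global temporal $k$-th moment $\rho^{(k)}_\text{Temp.}$, and then invoke Theorem~\ref{thm:temp_ens_product_form}. Explicitly, because the projection $|\tilde{\Psi}(x_B,t)\rangle = (\mathbb{I}_A \otimes \langle x_B|)|\Psi(t)\rangle$ is linear in $|\Psi(t)\rangle\langle \Psi(t)|$, interchanging the time-average with the projections yields
\begin{equation}
\rho^{(k)}_\text{Temp.~Proj.}(x_B) = P_{x_B}^{\otimes k}\, \rho^{(k)}_\text{Temp.}\, (P_{x_B}^\dagger)^{\otimes k}, \qquad P_{x_B} \equiv \mathbb{I}_A \otimes \langle x_B|.
\end{equation}
It then suffices to substitute the asymptotic product form of Theorem~\ref{thm:temp_ens_product_form} and carefully track the error introduced by the projection.

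For the leading term, I would observe that $\text{Perm}(\sigma)$ commutes with $P_{x_B}^{\otimes k}$ (each factor $P_{x_B}$ acts identically on each of the $k$ copies, so swapping factors before or after projecting onto identical copies yields the same result). Thus the product form collapses neatly:
\begin{align}
P_{x_B}^{\otimes k}\, \rho_d^{\otimes k}\, (P_{x_B}^\dagger)^{\otimes k} \sum_{\sigma \in S_k} \text{Perm}(\sigma)
&= \bigl[P_{x_B} \rho_d P_{x_B}^\dagger\bigr]^{\otimes k} \sum_{\sigma \in S_k} \text{Perm}(\sigma) \nonumber\\
&= p_d(x_B)^k\, \bar{\rho}(x_B)^{\otimes k} \sum_{\sigma \in S_k} \text{Perm}(\sigma) = p_d(x_B)^k\, \tilde{\rho}^{(k)}_{\tilde{\text{S}}\text{cr.}}[\bar{\rho}(x_B)],
\end{align}
where I have used $P_{x_B}\rho_d P_{x_B}^\dagger = p_d(x_B)\bar{\rho}(x_B)$ from Eq.~\eqref{eq:average_state_xB} and the fact that the unnormalized Scrooge moment has exactly the product form of Eq.~\eqref{eq:distorted_Gaussian_kth_moment_result}. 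This identifies the main term; what remains is the error estimate.

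The main obstacle is bounding the error in the correct norm and with the correct scaling $\Delta_\beta(x_B)\, p_d(x_B)^k$, since the bound $O(\tr \rho_d^2)$ from Theorem~\ref{thm:temp_ens_product_form} is in trace norm on the \emph{global} Hilbert space and is too crude after projection. I would return to the exact moment formula of Theorem~\ref{thm:temp_ens}, in which $\rho^{(k)}_\text{Temp.}$ is expanded as a sum over permutations $\sigma$ and ordered tuples of energy labels $(\alpha_1,\dots,\alpha_k)$. The product form arises from the contribution where all $\alpha_j$ are distinct; the corrections are indexed by subsets of $\{1,\dots,k\}$ on which two or more labels coincide. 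I would organize these corrections by inclusion-exclusion: whenever a pair of indices $i,j$ is forced to coincide $\alpha_i = \alpha_j$, the projection produces a factor
\begin{equation}
\sum_E |c_E|^4\, \langle x_B | E\rangle \langle E | x_B\rangle \langle x_B | E\rangle \langle E | x_B\rangle = \tr\!\left[(\mathbb{I}_A \otimes |x_B\rangle\langle x_B|)^{\otimes 2}\, \rho_d^{(2)}\right] = \Delta_\beta(x_B)\, p_d(x_B)^2,
\end{equation}
which replaces two otherwise independent factors of $p_d(x_B)$. Therefore each coincidence costs a factor of $\Delta_\beta(x_B)$ relative to the leading contribution of size $p_d(x_B)^k$, and higher coincidences are further suppressed by higher powers of analogous small parameters.

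Finally, to upgrade these coefficient-level estimates to a trace norm bound on the operator, I would use the triangle inequality over the finitely many ($k$-dependent) correction patterns, noting that each correction operator factorizes into a product of projected rank-one terms whose trace norm equals a single positive coefficient of the form above. Summing these $k!$-bounded contributions yields an overall bound $O(\Delta_\beta(x_B)\, p_d(x_B)^k)$ with a $k$-dependent prefactor, matching the statement of the Lemma. A subtle point I would check carefully is that the no-resonance condition (Definition~\ref{def:k-no-resonance}), applied to the $k$-th moment, is precisely what guarantees that only the permutation structure survives so that the above inclusion-exclusion book-keeping is well defined; this also explains why the Lemma inherits the same hypotheses as Theorem~\ref{thm:temp_ens}.
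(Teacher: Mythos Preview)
Your proposal is correct and follows essentially the same route as the paper: project the global temporal $k$-th moment, identify the leading product form as $p_d(x_B)^k\,\tilde{\rho}^{(k)}_{\tilde{\text{S}}\text{cr.}}[\bar{\rho}(x_B)]$, and then bound the corrections by returning to the exact permutation/multiset expansion and isolating the terms with at least one coincident energy label, which after projection produce $\Delta_\beta(x_B)\,p_d(x_B)^k$ rather than the cruder global $\tr(\rho_d^2)$. This is precisely the ``intermediate step'' the paper invokes from the proof of Theorem~\ref{thm:temp_ens_product_form} (the sum over degeneracy classes $s$), adapted to the projected setting; your inclusion–exclusion organization is just a rephrasing of that class decomposition. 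One minor slip: in your displayed coincidence factor, $\langle x_B|E\rangle$ is not a scalar but a vector in $\mathcal{H}_A$, so the correct expression is $\sum_E |c_E|^4\,\langle E|(\mathbb{I}_A\otimes|x_B\rangle\langle x_B|)|E\rangle^2$, which indeed equals $\tr[(\mathbb{I}_A\otimes|x_B\rangle\langle x_B|)^{\otimes 2}\rho_d^{(2)}]$.
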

\begin{proof}
    The leading term follows from Theorem~\ref{thm:temp_ens_product_form}. However, the bound $O(\Delta_\beta(x_B))$ does not immediately follow from the bound given in Theorem~\ref{thm:temp_ens_product_form}, which is too weak and should be improved for this context.

    We outline the proof here, with full details in Appendix~\ref{app:proof_temp_ens_of_proj_st}. We use an intermediate step in the proof of Theorem~\ref{thm:temp_ens_product_form} in Appendix~\ref{app:temporal_ensemble_error}. This step expresses the difference $\rho^{(k)}_\text{Temp. Proj.}(x_B) -p_d(x_B)^k\tilde{\rho}^{(k)}_{\tilde{\text{S}}\text{cr.}}[\bar{\rho}(x_B)] $ in terms of lists of energies  $(E_1,\dots , E_k)$--- those with at least one repeated element. This allows us to replace two copies of the diagonal matrix $\rho_d^{\otimes 2} \equiv \sum_{E_1, E_2} \abs{c_{E_1}}^2 \abs{c_{E_2}}^2 \ketbra{E_1,E_2}{E_1,E_2}$ with the term $\rho_d^{(2)} \equiv \sum_E \abs{c_E}^4 \ketbra{E,E}{E,E}$. The relative size of the correction term is then $\Delta_\beta(x_B)$, which is typically exponentially small in total system size (Appendix~\ref{app:gen_Scrooge_proof}). 
\end{proof}

\textit{Projected ensemble of time-evolved states ---} While Lemma~\ref{lemma:temp_ens_of_proj_st} is interesting in its own right, we use our notion of ergodicity (Sec.~\ref{subsec:ergodicity}) to turn it into a statement about the projected ensemble at fixed (but generically late) points in time.

\begin{theorem}
\label{thm:moment_of_general_projected_ensemble}
    At \textit{typical} (late) times $t$, the weighted $k$-th moment $\tilde{\rho}^{(k)}_\text{Proj.}(t)$ of the unnormalized projected ensemble is equal to the $k$-th moment of the generalized Scrooge ensemble:
    \begin{align}
       \tilde{\rho}^{(k)}_\text{Proj.}(t) &\equiv \sum_{x_B} \frac{\ketbra*{\tilde{\Psi}(x_B,t)}{\tilde{\Psi}(x_B,t)}^{\otimes k}}{p_d(x_B)^{k-1}} \label{eq:kth_moment_proof}\\
       &= \sum_{x_B} p_d(x_B) \tilde{\rho}^{(k)}_{\tilde{\text{S}}\text{cr.}}[\bar{\rho}(x_B)]+O(\Delta^{1/2}_\beta) ~,\label{eq:general_projected_moment}
    \end{align}
    where $\Delta_\beta \equiv \sum_{x_B} p_d(x_B) \Delta_\beta(x_B)$, the weighted mean of $\Delta_\beta(x_B)$.
\end{theorem}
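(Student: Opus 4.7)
The plan is to bootstrap from Lemma~\ref{lemma:temp_ens_of_proj_st}, which controls the \emph{time-averaged} $k$-th moment of the projected state at a fixed outcome $x_B$, to the \emph{instantaneous} $k$-th moment summed over $x_B$ at a typical time. The bridge is an ergodicity-in-operator-form argument, in the spirit of the scalar ergodic theorem in Sec.~\ref{subsec:ergodicity}: first compute the average of $\tilde{\rho}^{(k)}_\text{Proj.}(t)$ over time, and then show concentration around this mean using the no-resonance condition.

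First, I would interchange the time average with the sum over $x_B$ and apply Lemma~\ref{lemma:temp_ens_of_proj_st} term by term:
\begin{align}
\mathbb{E}_t\!\left[\tilde{\rho}^{(k)}_\text{Proj.}(t)\right]
&= \sum_{x_B}\frac{\mathbb{E}_t\!\left[\ketbra*{\tilde{\Psi}(x_B,t)}{\tilde{\Psi}(x_B,t)}^{\otimes k}\right]}{p_d(x_B)^{k-1}} \nonumber\\
&= \sum_{x_B} p_d(x_B)\,\tilde{\rho}^{(k)}_{\tilde{\text{S}}\text{cr.}}[\bar{\rho}(x_B)] + O(\Delta_\beta),
\end{align}
using $\sum_{x_B} p_d(x_B)\Delta_\beta(x_B)=\Delta_\beta$ by definition. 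This already reproduces the right-hand side of Eq.~\eqref{eq:general_projected_moment} in the time-averaged sense; it remains to show that at a typical time $t$ the instantaneous value sits within $O(\Delta_\beta^{1/2})$ of this average.

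Next, I would bound the temporal variance of $\tilde{\rho}^{(k)}_\text{Proj.}(t)$ in Hilbert--Schmidt (Frobenius) norm. Expanding each $\ketbra*{\tilde{\Psi}(x_B,t)}{\tilde{\Psi}(x_B,t)}^{\otimes k}$ in the energy eigenbasis introduces time-dependent phases $\exp\!\bigl[it\sum_j(E_{\beta_j}-E_{\alpha_j})\bigr]$. The variance is a $4k$-point object in the wavefunction, and under the $2k$-th no-resonance condition only multisets of energy indices that match across the two copies survive the time averaging, exactly as in the proof of Theorem~\ref{thm:temp_ens_product_form}. The surviving terms that are \emph{not} already present in $\bigl(\mathbb{E}_t\tilde{\rho}^{(k)}_\text{Proj.}(t)\bigr)^{\otimes 2}$ each carry at least one factor of the form $\rho_d^{(2)}=\sum_E|c_E|^4\ketbra{E,E}{E,E}$ in place of $\rho_d^{\otimes 2}$. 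Summing over $x_B$ with the weights $p_d(x_B)^{1-k}$, these factors organize precisely into $\Delta_\beta$, so the Frobenius-norm variance is $O(\Delta_\beta)$. Markov's inequality then yields a typical-time deviation of $O(\Delta_\beta^{1/2})$ in Frobenius norm, which transfers to the stated trace-norm bound via the fixed finite dimension of $\mathcal{H}_A^{\otimes k}$.

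The main obstacle is the combinatorial bookkeeping in the variance step: a priori, many index-coincidence patterns in the $4k$-point correlator could contribute, and one must identify those that cancel against $(\mathbb{E}_t\tilde{\rho}^{(k)}_\text{Proj.}(t))^{\otimes 2}$ versus those that are genuinely suppressed by $\Delta_\beta$. The decisive observation is that the only way a coincidence survives \emph{without} producing a $\rho_d^{(2)}$-type factor is to leave both copies diagonal in the \emph{same} multiset of energies --- exactly the pattern already captured by the mean --- so every remaining surviving coincidence contributes $O(\Delta_\beta)$ and no more. As in Theorem~\ref{thm:temp_ens}, the converse direction follows by noting that a $k$-th no-resonance violation would inject persistent oscillatory contributions into $\mathbb{E}_t[\tilde{\rho}^{(k)}_\text{Proj.}(t)]$ with coefficients that are nonzero for almost every $\ket{\Psi_0}$, thereby obstructing Eq.~\eqref{eq:general_projected_moment}.
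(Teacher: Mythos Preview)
Your overall strategy matches the paper's: compute the time average via Lemma~\ref{lemma:temp_ens_of_proj_st}, then bound the temporal variance using the $2k$-th no-resonance condition to conclude concentration at typical times. The time-average step is correct as you state it.

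However, your variance analysis misidentifies the dominant mechanism. You claim that every surviving term in $\mathbb{E}_t[X^{\otimes 2}]-(\mathbb{E}_t X)^{\otimes 2}$ carries a $\rho_d^{(2)}$ factor (from repeated energy indices), and that the only $\rho_d^{(2)}$-free survivors are those already captured by the mean-squared. This is not so. The variance is governed by the $2k$-twirling identity, whose decomposition into permutations of $S_{2k}$ includes permutations that genuinely mix the two $k$-blocks --- e.g.\ a single transposition swapping one energy index from the $x_B$ block with one from the $x_B'$ block. Such cross-permutations are \emph{not} products of two separate $k$-permutations, hence not present in $(\mathbb{E}_t X)^{\otimes 2}$, yet they involve generically distinct energies and carry no $\rho_d^{(2)}$ factor. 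The paper identifies precisely this cross-permutation contribution as the leading one: after summing over $x_B,x_B'$ (the $p_d^{1-k}$ weights cancel against the $k-1$ unmixed factors on each side), the swapped pair yields
\[
\sum_{x_B,x_B'}\tr\!\bigl[(\mathbb{I}_A\otimes|x_B\rangle\langle x_B|)\,\rho_d\,(\mathbb{I}_A\otimes|x_B'\rangle\langle x_B'|)\,\rho_d\bigr]=\tr(\rho_d^2)~.
\]
The paper then bounds $\tr(\rho_d^2)\le\Delta_\beta$ via Sedrakyan's inequality --- this is how $\Delta_\beta$ enters the variance, not directly through $\rho_d^{(2)}$ factors as you suggest. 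Your ``decisive observation'' would miss these cross-block terms entirely; your final $O(\Delta_\beta)$ bound on the variance happens to be correct, but the argument as written does not establish it.
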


\begin{proof} 
    From Lemma~\ref{lemma:temp_ens_of_proj_st}, the \textit{time-averaged} $k$-th moment $\mathbb{E}_t[\tilde{\rho}^{(k)}_\text{Proj.}(t)]$ is approximately equal to the weighted sum $\sum_{x_B} p_d(x_B) \tilde{\rho}^{(k)}_{\tilde{\text{S}}\text{cr.}}[\bar{\rho}(x_B)]$. We then show in Appendix~\ref{app:gen_Scrooge_proof} that the variance over time $\text{var}_t [\tilde{\rho}^{(k)}_\text{Proj.}(t)]$ is exponentially small, and therefore the projected $k$-th moment  $\tilde{\rho}^{(k)}_\text{Proj.}(t)$ is close to its temporal average. The temporal fluctuations are of size $\Vert \text{var}_t[\tilde{\rho}^{(k)}_\text{Proj.}(t)]\Vert = O(\Delta_\beta)$. While the most general bound on $\Delta_\beta$ that we can prove is $\Delta_\beta \leq 1$, we argue in Appendix~\ref{app:gen_Scrooge_proof} that it is typically exponentially small.
\end{proof}

Finally, we note that the $k$-th moment described in Eq.~\eqref{eq:kth_moment_proof} is a slightly different moment than both the moments of the normalized and unnormalized projected ensembles. Each projected state is weighted by the factor $p_d(x_B)^{1-k}$ in order to have a quantity that can be bounded by a small quantity ($\Delta_\beta$). The norm $\braket*{\tilde{\Psi}(x_B)}{\tilde{\Psi}(x_B)}$ fluctuates about the value $p_d(x_B)$. These fluctuations are not universal\footnote{More precisely, the norm is a random variable which is a weighted sum of PT variables, where the weights are the eigenvalues of $\bar{\rho}(x_B)$. The resulting distribution is known as a \textit{hypoexponential distribution} and is discussed extensively in companion work~\cite{shaw2024universal} but depends on the spectrum of $\bar{\rho}(x_B)$.}. Accordingly, the average value of $\braket*{\tilde{\Psi}(x_B)}{\tilde{\Psi}(x_B)}^k/p_d(x_B)^k$ is equal to the $k$-th moment of this distribution, which is larger than 1. With increasing size of the subsystem $A$, these values concentrate around 1 and therefore, our weighted $k$-th moment in Eq.~\eqref{eq:kth_moment_proof} converges to the normalized ensemble moment.

\subsection{Porter-Thomas distribution in generalized Scrooge ensembles}
\label{subsec:gen_Scrooge_PT}

In Sections~\ref{subsec:temp_ens_PT_dist} and~\ref{subsec:Scrooge_PT}, we discussed the Porter-Thomas distribution as a measurable signature of the temporal and Scrooge ensembles respectively. Perhaps unsurprisingly, the PT distribution is also present in the generalized Scrooge ensemble, but only after suitable data processing. In Sec.~\ref{subsec:Scrooge_PT}, we established that in the Scrooge ensemble, the joint probabilities $p_{o_A}(z_B)$ follow a PT distribution with mean $\mathbb{E}_{z_B}[p_{o_A}(z_B)] \equiv \bra{o_A}\rho_A \ket{o_A}/D_B$. In the generalized Scrooge ensemble, since each projected state $\ket*{\tilde{\Psi}(x_B)}$ is sampled from a different Scrooge ensemble, we must normalize the overlaps $p_{o_A}(x_B)$ by their respective time-averaged values $\mathbb{E}_t[p_{o_A}(x_B)] = \bra{o_A} \bar{\rho}(x_B)\ket{o_A}$ (see~Sec.~\ref{subsec:ergodicity}). Having done so, 
\begin{equation}
    p(o_A,z_B)/\mathbb{E}_t[p(o_A,z_B)] \sim \text{PT}(1)~,
    \label{eq:PT_in_gen_Scrooge}
\end{equation}
as illustrated in Fig.~\ref{fig:generalized_scrooge}(c).

\subsection{Interaction information}
\label{subsec:gen_Scrooge_info}
\label{subsec:interaction_info} We next turn to the information-theoretic properties of the generalized Scrooge ensemble. Specifically, we consider the mutual information $I(O_A;X_B)$ between any basis $\{o_A\}$ on $A$ and projective measurements $\{\ket{x_B}\}$. Unlike in the Scrooge ensemble, there are additional correlations between $x_B$ and the projected states $\ket*{\Psi(x_B)}$. With the generalized Scrooge ensemble, we find that the mutual information $I(O_A;X_B|T)$ (i.e. the mutual information of the distribution $p(o_A z_B,t)$ at fixed, late times) can be written as follows:
\begin{equation}
    I(O_A;X_B|T) \approx I(O_A;X_B) + \sum_{x_B} p_d(x_B) Q\left[\bar{\rho}(x_B)\right]~, \label{eq:interaction_information}
\end{equation}
where the first term $I(O_A;X_B)$ refers to correlations that exist even in the time-averaged distribution $\mathbb{E}_t[p(o_A z_B,t)]$. These correlations are \textit{non-universal} and are thermal in nature, arising from energy correlations between the bases $\{\ket{o_A}\}$ and $\{\ket{x_B}\}$. In contrast, the second term captures the correlations associated with fluctuations in time. These fluctuations in time are universal and are expressed as a weighted sum of subentropies $Q\left[\bar{\rho}(x_B)\right]$. Eq.~\eqref{eq:interaction_information} follows from the following:
\begin{align}
    &I(O_A;X_B|T) - I(O_A;X_B) \nonumber\\
    &=H(O_A|T) - H(O_A|X_B,T) - H(O_A) + H(O_A|X_B) \nonumber\\
    &\approx H(O_A|X_B) - H(O_A|X_B,T) \label{eq:conditional_entropies}\\
    &\approx \sum_{x_B} p_d(x_B) \left[H(O_A|X_B=x_B)-H(O_A|X_B=x_B,T)\right]\label{eq:weighted_mutual_informations}\\
    &\approx \sum_{x_B} p_d(x_B) Q\left[\tilde{\rho}(x_B)\right]~. \label{eq:weighted_subentropies}
\end{align}
Eq.~\eqref{eq:conditional_entropies} we have used the fact that $H(O_A|T)\approx H(O_A)$, because the time-dependent reduced density matrix $\rho_A$ is exponentially close to its time averaged value $\tr_B[\rho_d]$. We then approximate $H(O_A|X_B,T) = \sum_{x_B} p(x_B) H(O_A|X_B=x_B,T)\approx p_d(x_B) H(O_A|X_B=x_B,T)$. This is justified because $p(x_B,t)$ fluctuates about $p_d(x_B)$. When summed over in $H(O_A|X_B,T)$, these fluctuations average away and we can replace $p(x_B)$ with $p_d(x_B)$. Eq.~\eqref{eq:weighted_mutual_informations} is a weighted sum of the mutual information for the conditional distributions $p(o_A,t|x_B)$. Theorem~\ref{thm:moment_of_general_projected_ensemble} allows us to equate this mutual information with the subentropies $Q[\bar{\rho}(x_B)]$, giving the desired Eq.~\eqref{eq:interaction_information}.

The quantity $I(O_A;X_B;T)\equiv I(O_A;X_B|T) - I(O_A;X_B)$ is known as the \textit{interaction information}, which is a multivariate analog of the mutual information~\cite{mcgill1954multivariate}. It is a measure of the correlations that exist due to interaction among all variables, and is symmetric under exchange of all variables, and is also the classical analogue of the (quantum) \textit{tripartite mutual information} studied in many-body physics~\cite{rangamani2015entanglement,rota2016tripartite}. In particular, our result predicts that the interaction information is independent of basis $O_A$, and has expression $\sum_{x_B} p_d(x_B) Q[\bar{\rho}(x_B)]$, which we can bound as $\sum_{x_B} p_d(x_B) Q[\tilde{\rho}(x_B)] \leq Q[\rho_A]$ using the concavity of the subentropy~\cite{datta2014properties}. The interaction information quantifies the amount of information unique to a time-evolved state, i.e. with time-averaged correlations subtracted. Similar to our results in the temporal and Scrooge ensembles, this information in universal, and is minimized subject to constraints.

We demonstrate this result with numerical data in Fig.~\ref{fig:generalized_scrooge}(d), in which we plot the interaction information for three measurement bases $O_A \in \{X^{\otimes 4},Y^{\otimes 4},Z^{\otimes 4}\}$. While the mutual information $I(O_A;X_B|T)$ depends on the choice of $O_A$ (inset), the interaction information $I(O_A;X_B;T)$ is independent and is equal to the weighted average subentropy~[Eq.~\eqref{eq:weighted_subentropies}]. 

\subsection{Recovering the Scrooge ensemble from the generalized Scrooge ensemble}
\label{subsec:gen_scrooge_to_scrooge}
The generalized Scrooge ensemble in Theorem~\ref{thm:moment_of_general_projected_ensemble} specializes to the original Scrooge ensemble when the states $\bar{\rho}(x_B)$ are independent of measurement outcome $x_B$, i.e.~$\bar{\rho}(x_B)\propto \rho_A$. As presented in Section~\ref{sec:scrooge}, this is approximately true when states evolved by the MFIM are measured in the $\{\ket{z_B}\}$ basis. The crucial property of the $Z_B$ measurement basis is that the energy expectation values $E(\ket{z})\equiv\braket{z}{H|z}$ and energy variances $\sigma^2_H(\ket{z})\equiv\braket{z}{(H-E_z)^2|z}$ are independent of the bitstring $z$. This is an example of ``minimal correlation" between measurement basis and energy eigenbasis.

We argue that these two properties are sufficient for the Scrooge ensemble to be recovered approximately for $z_B$ measurements in the MFIM. In particular, we will show that for any state $\ket{\phi_A}\in \mathcal{H}_A$, 
$\braket{\phi_A|\rho_d(z_B)}{\phi_A}\simeq \braket{\phi_A|\rho_d}{\phi_A}$ up to subleading corrections in the size $N$ of the system, and therefore $\rho_d(z_B)\simeq \rho_d$.

From the definition of $\rho_d(z_B)$, the expectation value $\braket{\phi_A|\rho_d(z_B)}{\phi_A}$ can be expressed in terms of the eigenstates $\ket{E}$ of $H$ as
\begin{equation}
\braket{\phi_A|\rho_d(z_B)}{\phi_A}=\sum_E |c_E|^2 |\braket{\phi_A, z_B}{E}|^2 \label{eq:Scrooge_MFIM_distribution}
\end{equation}
We next use the theorem proved in Ref.~\cite{hartmann2004gaussian}, which states that for large $N$, the energy distribution of the product state $\ket{\phi_A, z_B}$ converges to a Gaussian centered about $E(\ket{\phi_A,z_B})=\braket{\phi_A,z_B}{H|\phi_A,z_B}$ with variance $\sigma^2_H(\ket{\phi_A,z_B})) = \braket{\phi_A,z_B}{(H-E(\ket{\phi_A,z_B}))^2|\phi_A,z_B}$.
Furthermore, the mean $E(\ket{\phi_A,z_B})$ and variance $\sigma^2_H(\ket{\phi_A,z_B}))$ are independent of $z_B$ up to subleading corrections. To show this, we write the Hamiltonian as $H=H_A+H_B+H_{AB}$, where $H_A, H_B$ only act on subsystems $A,B$ respectively and $H_{AB}$ is the remaining term, acting on the boundary between $A$ and $B$. As mentioned above, the expectation value of $H_B$ and variance are independent of $z_B$:
\begin{align}
E_B(\ket{z_B})&\equiv\braket{z_B}{H_B|z_B}=0, \\ \sigma^2_{H_B}(\ket{z_B})&\equiv\braket{z_B}{(H_B-E_B(\ket{z_B}))^2|z_B}\\
    &=N_B(h_x^2+h_y^2)+(N_B-1)J^2~.\nonumber
\end{align}
We also have $\braket{\phi_A,z_B}{H_{AB}|\phi_A, z_B}=0$, which gives
\begin{align}
&E(\ket{\phi_A,z_B})=E_A(\ket{\phi_A}),\label{eq:Ezb}\\
 &  \sigma^2_{H}(\ket{\phi_A,z_B}) =\sigma^2_{H_A}(\ket{\phi_A})+\sigma_{H_B}^2+O(1)\label{eq:varzb} 
\end{align}
where the terms $O(1)$ come from the boundary contributions and hence do not scale with system size. From Eqs. (\ref{eq:Ezb}) and (\ref{eq:varzb}), we find that the energy distribution of $\ket{\phi_A, z_B}$, and consequently the quantity $\braket{\phi_A}{\rho_d(z_B)|\phi_A}$ in Eq.~\eqref{eq:Scrooge_MFIM_distribution} are independent of the measurement outcome $z_B$ (up to subleading corrections in system size), recovering the Scrooge ensemble from the generalized Scrooge ensemble in the limit of large system sizes.

\subsection{Projected ensembles of energy eigenstates}
\label{subsec:eigenstates}
\begin{figure*}
    \centering
    \includegraphics[width=0.85\textwidth]{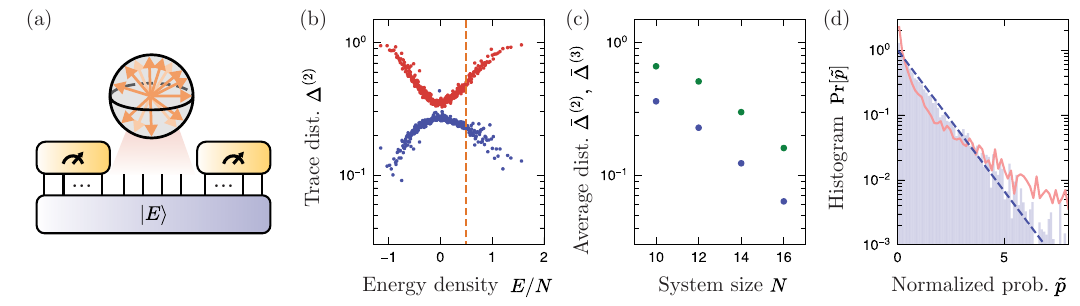}
    \caption{Scrooge and generalized Scrooge ensembles from energy eigenstates. (a) We study the projected ensembles arising from energy eigenstates of the MFIM with broken time-reversal symmetry [Eq.~\eqref{eq:MFIM_no_TRS}], with total size $N=12$ and a subsystem of size $N_A = 4$ in the middle of the chain.
    (b) For each eigenstate $\ket{E}$, we construct its reduced density matrix $\rho_A^{(E)}$ and corresponding Scrooge ensemble $\text{Scrooge}[\rho_A^{(E)}]$. The projected ensemble is always close to the Scrooge ensemble prediction, as measured by the trace distance $\Delta^{(2)}$ of their second moments (blue), and is only close to the Haar ensemble at infinite temperature, when $E/N\approx 0$ (red). Curiously, and unexpectedly, the generalized Scrooge ensemble appears to be better descriptions at the edges of the spectrum. (c) The trace distances $\Delta^{(2)}$ (blue) and $\Delta^{(3)}$ (green), averaged over the 100 eigenstates closest to the target energy $E=0.5 N$ [orange line in (b)] decrease exponentially with system size. (d) For a fixed eigenstate $\ket{E}$, the histogram of normalized measurement probabilities $\tilde{p} \equiv p_{x_A}^{(E)}(x_B)/\mathbb{E}_E[p_{x_A}^{(E)}(x_B)]$ follows a PT distribution, while the raw probabilities $D p_{x_A}^{(E)}(x_B)$ (red line) do not, supporting the presence of the generalized Scrooge ensemble. Here we plot $x_A = 0010$ as a representative.}
    \label{fig:eigenstates}
\end{figure*}
Thus far, we have investigated the projected ensembles of time-evolved states $\ket{\Psi(t)}$, and provided analytical arguments based on the temporal ensemble. In fact, the generalized Scrooge ensemble also seems to describe projected ensembles obtained from energy eigenstates, as verified by numerical simulation results illustrated in Fig.~\ref{fig:eigenstates}. 

Here, we consider the projected ensembles arising from eigenstates of the MFIM, with additional terms which remove time-reversal symmetry [Eq.~\eqref{eq:MFIM_no_TRS} and in Appendix~\ref{app:time_reversal_symmetry}]. 
We first test our simpler prediction: the emergence of the Scrooge ensemble by measuring in the $Z_B$ basis. In Fig.~\ref{fig:eigenstates}(b), the projected ensemble agrees with its respective Scrooge ensemble for energy eigenstates at all energies, but only agrees with the Haar ensemble at infinite temperature, when $E/N\approx 0$. Furthermore, the distance between the projected and Scrooge ensemble decreases exponentially with total system size [Fig.~\ref{fig:eigenstates}(c)]. 

We next test our generalized Scrooge ensemble prediction, by generating the projected ensemble with measurements in the $X_B$ basis. In this setting, it is difficult to make quantitatively accurate predictions for the generalized Scrooge ensemble, because we do not have a prescription of the average state $\bar{\rho}(x_B)$. The expression $\bar{\rho}(x_B) \propto \mathbb{E}_t[\ketbra*{\tilde{\Psi}(x_B,t)}{\tilde{\Psi}(x_B,t)}]$ in time-evolved states does not apply to energy eigenstates. Nevertheless, motivated by the eigenstate thermalization hypothesis (ETH), we expect that the eigenstates with nearby energy values will generate highly-similar projected ensembles: in particular, the average states $\bar{\rho}(x_B,E)$ should be smooth functions of the total energy. Therefore, we can estimate $\bar{\rho}(x_B,E)$ by averaging the projected states $\ketbra*{\tilde{\Psi}_E(x_B)}{\tilde{\Psi}_E(x_B)}$ over energy eigenstates of a sufficiently narrow energy window. While this method is not quantitatively accurate enough to give exponentially-decreasing trace distances (the equivalent of Fig.~\ref{fig:eigenstates}(c), but for the generalized Scrooge ensemble), we can observe the PT distribution in normalized outcome probabilities [Fig.~\ref{fig:eigenstates}(d)]. We first average the probability outcomes $p_{x_A}^{(E)}(x_B) \equiv \abs{\braket{E}{x_A,x_B}}^2$ over a small energy window to obtain $\mathbb{E}_E[p_{x_A}^{(E)}(x_B)]$ for each value of $x_A$ and $x_B$. The histogram of normalized probabilities $\tilde{p}\equiv p_{x_A}^{(E)}(x_B)/\mathbb{E}_E[p_{x_A}^{(E)}(x_B)]$ over all $x_B$'s (with fixed $x_A$) shows a PT distribution, while the raw probabilities $p_{x_A}^{(E)}(x_B)$ do not.

\textit{Time-reversal symmetry and the real Scrooge ensemble ---} Finally, we remark on the role of symmetry. Unlike in time-evolved states, the presence of time-reversal symmetry in a Hamiltonian $H$ qualitatively affects the projected ensembles from energy eigenstates. Time-reversal symmetry --- defined as the existence of a unitary operator $U$ that maps $H$ to its complex conjugate $H^*$: $UHU^\dagger = H^*$~\cite{ryu2010topological} --- ensures that all eigenstates of $H$ have purely real coefficients in a certain basis. If this coincides with the measurement basis (e.g.~the $X_B$ basis in the MFIM), then the projected states are guaranteed to be real vectors. In such cases, the projected ensemble instead follows a \textit{real Scrooge ensemble}, which is an analogous ``$\rho$-distortion" of the ensemble of real random vectors. The real Scrooge ensemble has corresponding signatures such as a ``real PT distribution"~\cite{porter1956fluctuations} which we demonstrate in Appendix~\ref{app:time_reversal_symmetry}. Here we simply break the time-reversal symmetry by adding appropriate terms:
\begin{align}
    H'_\text{MFIM} = &\sum_{j=1}^N h_x X_j + h_y Y_j + h_z Z_j \label{eq:MFIM_no_TRS}\\
    &+ \sum_{j=1}^{N-1} J X_j X_{j+1} + J' Y_j Y_{j+1}~, \nonumber
\end{align}
with $J'= 0.4$, $h_z = 0.5$ and the values of $h_x,h_y,J$ kept the same as the other numerical examples [Eq.~\eqref{eq:MFIM}].

The presence of time-reversal symmetry was not relevant to the projected ensembles of time-evolved states because the time-evolution naturally breaks time-reversal symmetry. However, time-evolved states can be engineered to be purely real, requiring both time-reversal and particle-hole symmetry and suitably symmetric initial states and measurement bases (Appendix~\ref{app:time_reversal_symmetry}).

\section{Porter-Thomas distributions as signatures of randomness: Higher-order ETH} 
\label{sec:eigenstate_PT}
\begin{figure*}
    \centering
    \includegraphics[width=0.8\textwidth]{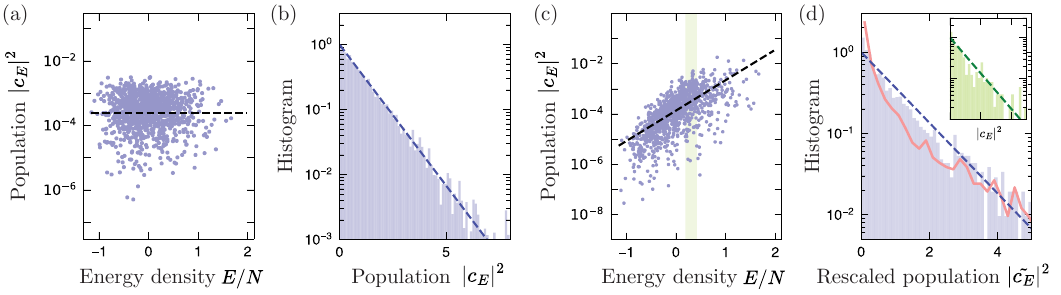}
    \caption{Porter-Thomas (PT) distribution in energy populations. We observe the presence of the PT distribution in the energy populations $\abs{c_E}^2 \equiv \abs{\braket{E}{\Psi_0}}^2$ as a consequence of the ``higher-order ETH"~\cite{kaneko2020characterizing}. In a narrow energy window, the populations are distributed according to a Porter-Thomas distribution, with a smoothly varying mean $f(E)$. (a) A state at infinite effective temperature $\ket{\Psi} =  |0\rangle^{\otimes N}$ has constant mean $f(E)$ (black dashed line). (b) The histogram of $\abs{c_E}^2$ over \textit{all eigenstates} follows a PT distribution. (c) In contrast, for a state $\ket{\Psi} = (\exp(i\theta X/2)|0\rangle)^{\otimes N}$ at finite temperature (details below), $\abs{c_E}^2$ monotonically increases about the smooth function $f(E) \propto \exp(\beta E)$, with $\beta$ fitted (dashed line). (d) The histogram of $\abs{c_E}^2/f(E)$ shows a PT distribution, both over the entire spectrum as well as in a narrow energy window [inset, energy window denoted by green band in (c)], while the raw populations $D \abs{c_E}^2$ at finite temperature (red line) do not.}
    \label{fig:eigenstate_overlaps_PT}
\end{figure*}

In this work we demonstrated instances of the PT distribution emerging from the maximum entropy principle in many-body dynamics: in the temporal and Scrooge ensembles, the PT distribution describes the \textit{probabilities of outcome probabilities} $\text{Pr}[p]$. We found that in the generalized Scrooge ensemble, it was necessary to appropriately rescale the outcome probabilities to obtain the PT distribution. 

Here, we demonstrate another instance of the PT distribution, hinting at the wider applicability of the maximum entropy principle in many-body physics. Specifically, we find the PT distribution in the distribution of overlaps $\abs{c_E}^2 = \abs{\braket{\Psi}{E}}^2$ between energy eigenstates $\ket{E}$ and a simple physical state $\ket{\Psi}$. This is consistent with a ``higher-order" or $k$-ETH~\cite{kaneko2020characterizing}, in which energy eigenstates in a narrow energy window are conjectured to form a state $k$-design in an appropriate microcanonical subspace. This is a generalization of the conventional ETH, which is a statement about the the behavior of properties such as the overlaps $\bra{E_j}O\ket{E_k}$ between energy eigenstates $\ket{E_k}$ and operators $O$. 
Specifically, Ref.~\cite{kaneko2020characterizing} defines the $k$-ETH to hold \textit{for a given $k$-copy operator $O$} if
\begin{equation}
    \mathbb{E}_t[U_t^{\otimes k} \Pi_E O \Pi_E U_{-t}^{\otimes k}] = \mathbb{E}_U[U^{\otimes k} \Pi_E O \Pi_E U^{\dagger,\otimes k}]~,
\label{eq:kETH}
\end{equation}
where $\Pi_E$ is a projector onto a microcanonical subspace $\mathcal{H}_\text{MC}$, spanned by the energy eigenstates in a narrow energy window, $U_t \equiv \exp(-iHt)$, and the average over $U$ is over Haar-random unitaries acting on the microcanonical space $\mathcal{H}_\text{MC}$. They numerically verify that Eq.~\eqref{eq:kETH} approximately holds when $O = o^{\otimes k}$ (for local operators $o$) and when $O = \mathcal{S}_A$ is the two-copy swap operator on a subsystem $A$, which they used to derive the Page correction in the 2-R\'enyi entropy. 

Here, we take $O = \ketbra{\Psi_0}{\Psi_0}^{\otimes k}$. \textit{Assuming that the $k$-ETH holds for such operators}, Eq.~\eqref{eq:kETH} predicts that the eigenstate overlaps $\abs{c_E}^2$ in a microcanonical window follow a Porter-Thomas distribution, with a mean value $f(E)$.
\begin{equation}
    \abs{c_E}^2 \sim \text{PT}(f(E))~.
\end{equation}
We further expect $f(E)$ to be a smooth function of the energy. By dividing $\abs{c_E}^2$ by a fitted smooth function $f(E)$, we expect the \textit{rescaled overlap} to be distributed according to a PT distribution. We empirically find that $f(E)$ is often well described by a Boltzmann distribution $\exp(-\beta E)$. In Fig.~\ref{fig:eigenstate_overlaps_PT}(a-d) we demonstrate the emergence of the PT distribution in the populations over the eigenstates of the time-reversal-broken MFIM $H'$ [Eq.~\eqref{eq:MFIM_no_TRS}] for $N=12$ for: an infinite-temperature state $\ket{\Psi_0} =  |0\rangle^{\otimes N}$ (a,b), and a finite-temperature state $\ket{\Psi_0} = [\exp(i\theta X/2)|0\rangle]^{\otimes N}$ (c,d), with $\theta=0.3557$ chosen such that $\langle \Psi_0|H|\Psi_0\rangle = 0.5 N$. 

The presence of the PT distribution in the energy populations not only serves as an easily visualized consequence of the $k$-ETH, it also accounts for observations in the literature, such as universal ratios of 2 and 3 between the spectral form factor (SFF) and survival probability at late times~\cite{schiulaz2019thouless}, which follow from the second moment of the complex and real PT distributions. Specifically, the SFF is expressed as $|\text{tr}\left[\exp(-iHt)\right]|^2/D^2 = \sum_{E,E'} \exp(i(E'-E)t)/D^2$. At late times, only the $E=E'$ contribution remains and the SFF has a late time value of 1. In contrast, the survival probability of an initial state $|\Psi_0\rangle$ has expression $|\langle \Psi_0| \exp(-iHt)|\Psi_0\rangle |^2 = \sum_{E,E'} |c_E|^2 |c_{E'}|^2 \exp(i(E'-E)t)$. Again, at late times, only the $E=E'$ term survives and we obtain $\sum_E |c_E|^4$. From the discussion above, at infinite effective temperature, this sum is approximately $2/D$ or $3/D$, when $H$ respectively does not and does have time-reversal symmetry, and $|\Psi_0\rangle$ respects these symmetries.

\section{Outlook}
In this work, we have demonstrated that a novel type of maximum entropy principle arises in natural chaotic many-body systems. We substantiate our findings by studying two different ensembles of states,  each underlying the phenomena of Hilbert space ergodicity and deep thermalization, and find them well-described by ensembles with maximum entropy. Furthermore, we present the operational meaning,  information theoretic properties, and simple, experimentally-accessible signatures, i.e., Porter-Thomas distribution in certain rescaled observables, of our claimed universality.  

Our work extends our understanding of deep thermalization, most notably away from infinite temperature. We leave it to future work to generalize more features of infinite-temperature projected ensembles, such as the existence of distinct thermalization times for $k$-th moments~\cite{ippoliti2022dynamical,ippoliti2022solvable,chan2024projected} and the dynamics at short times~\cite{shrotriya2023nonlocality}.

The universality we establish in this work may be exploited for practical purposes. For example, in Ref.~\cite{mark2022benchmarking}, the presence of the PT distribution is leveraged to design a protocol to estimate fidelities in natural many-body systems. In companion experimental work, we explore deviations of the PT distribution in the presence of a bath. By understanding these deviations from universality, we design protocols to learn properties of the bath~\cite{shaw2024universal}. Further applications may include generalizing schemes that use aspects of quantum randomness to natural systems, such as randomized measurement protocols (including classical shadow tomography)~\cite{huang2020predicting,elben2023randomized,tran2023measuring,liu2023predicting} (with applications to characterizing thermal states~\cite{coopmans2023predicting}), as well as random number generation~\cite{bassirian2021certified,aaronson2023certified}.

Finally, our results should be seen as complementary to and distinct from conventional results such as the ETH. The ETH models some aspects of ergodic many-body systems with random matrix theory, which has been successful in describing striking features such as the level repulsion in eigenvalues. In recent years, there has been a growing body of work scrutinizing as well as generalizing the ETH, motivated by recent interest in quantities such as multi-time correlators. See e.g.~Refs.~\cite{foini2019eigenstate,kaneko2020characterizing,wang2022eigenstate,pappalardi2022eigenstate}. Our findings in the temporal and projected ensembles as well as our analysis of the $k$-ETH contribute to such work; providing a guiding principle to understand the emergence of effective randomness in many-body states. We look forward to such work and anticipate a broader understanding of complex quantum many-body systems.

\section*{Acknowledgements}
We thank David Huse, Wen Wei Ho, Daniel Ranard and Sa{\'u}l Pilatowsky-Cameo for insightful discussions on this work. We acknowledge support by the NSF QLCI Award OMA-2016245, the DOE (DE-SC0021951), the Institute for Quantum Information and Matter, an NSF Physics Frontiers Center (NSF Grant PHY-1733907),
the Center for Ultracold Atoms, an NSF Physics Frontiers Center (NSF Grant PHY-1734011), 
the DARPA ONISQ program (W911NF2010021), the AFOSR YIP (FA9550-19-1-0044), Army Research Office MURI program (W911NF2010136), NSF CAREER award 2237244 and NSF CAREER award 1753386. Support is also acknowledged from the U.S. Department of Energy, Office of Science, National Quantum Information Science Research Centers, Quantum Systems Accelerator. AE\ acknowledges funding by the German National Academy of Sciences Leopoldina under the grant number LPDS 2021-02 and by the Walter Burke Institute for Theoretical Physics at Caltech. FMS acknowledges support provided by the U.S.\ Department of Energy Office of Science, Office of Advanced Scientific Computing Research, (DE-SC0020290), by Amazon Web Services, AWS Quantum Program, and by the DOE QuantISED program through the theory consortium ``Intersections of QIS and Theoretical Particle Physics'' at Fermilab. JC acknowledges support from the Terman Faculty Fellowship at Stanford University.

\begin{widetext}
\begin{appendix}

\section{Proof of Theorem~\ref{thm:temp_ens} (The temporal ensemble is the random phase ensemble)}
\label{app:temp_ens_equals_random_phase_ens_proof}
\setcounter{theorem}{0}
\begin{theorem}
\label{thm:temp_ens_app}
Given an initial state $\ket{\Psi_0}$ and a Hamiltonian $H$, the infinite-time temporal ensemble is equal to the random phase ensemble:
\begin{equation}
\lim_{\tau \rightarrow \infty} \mathcal{E}_\text{Temp.}(\tau) = \mathcal{E}_\text{Rand.~Phase}~,    
\end{equation}
with fixed energy populations $|c_E|^2 = \abs{\braket{\Psi_0}{E}}^2$, if and only if $H$ satisfies all \textit{$k$-th no-resonance conditions} modulo degeneracies.    
\end{theorem}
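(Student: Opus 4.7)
The plan is to establish the equivalence of the two ensembles at the level of their $k$-th moments for every $k$, and then upgrade moment equality to distributional equality by a compactness argument. For the forward direction, I would start by expanding $|\Psi(t)\rangle = \sum_E c_E e^{-iEt}|E\rangle$ in the energy eigenbasis, so that the $k$-th moment of the finite-interval temporal ensemble reads
\begin{equation}
\rho^{(k)}_{\text{Temp.}}(\tau) = \sum_{\{\alpha_j\},\{\beta_j\}} \operatorname{sinc}\!\left(\tfrac{\tau}{2}\textstyle\sum_j(E_{\alpha_j}-E_{\beta_j})\right) \bigotimes_{j=1}^{k} c_{\alpha_j} c^*_{\beta_j}\,|E_{\alpha_j}\rangle\!\langle E_{\beta_j}|.
\end{equation}
As $\tau \to \infty$, only terms with $\sum_j E_{\alpha_j}=\sum_j E_{\beta_j}$ survive; the $k$-th no-resonance condition modulo degeneracies then forces $(\beta_j)$ to be a permutation of $(\alpha_j)$ at the level of non-degenerate eigenvalues, and (by absorbing any intra-multiplet redundancy into the coefficient $c_E$) collapses the sum to a sum over multisets $\{\alpha_j\}$ and permutations $\sigma \in S_k$.

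Next, I would compute the $k$-th moment of the random phase ensemble by averaging $\bigotimes_j |c_{\alpha_j}||c_{\beta_j}|\, e^{i(\phi_{\alpha_j}-\phi_{\beta_j})}\,|E_{\alpha_j}\rangle\!\langle E_{\beta_j}|$ over independent uniform phases $\phi_E \sim \mathrm{Unif}[0,2\pi)$. Each phase integral vanishes unless every eigenvalue label appears the same number of times among the $\alpha$'s as among the $\beta$'s, i.e.\ unless $(\beta_j)$ is a permutation of $(\alpha_j)$; the surviving terms exactly match those obtained from the temporal side. Matching the magnitudes and the permutation structure index by index therefore confirms that the two $k$-th moments coincide for every $k$.

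To convert moment equality into distributional equality --- which I expect to be the main subtlety --- I would invoke the fact that both ensembles are supported on the unit sphere of a finite-dimensional Hilbert space, a compact real-analytic manifold. By Stone--Weierstrass, polynomials in $\{\Psi_z,\bar\Psi_z\}$ are dense in $C(\mathbb{S}(\mathcal{H}))$, so equality of all moments $\rho^{(k)}$ implies equality of $\int f \,dP$ for every continuous $f$, and hence equality of the measures. This is the finite-dimensional instance of the complex Carleman condition the authors reference, and I would formalize it by noting that a generic matrix element $\langle \pmb{m}|\rho^{(k)}|\pmb{m'}\rangle$ encodes exactly the expectation $\mathbb{E}[\Psi_{m_1}\cdots\Psi_{m_k}\bar\Psi_{m'_1}\cdots\bar\Psi_{m'_k}]$ of a monomial.

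For the converse, I would argue by contrapositive. If $H$ fails the $k$-th no-resonance condition modulo degeneracies, there exist two index multisets $(\alpha_j)$ and $(\beta_j)$, not related by any permutation of non-degenerate eigenvalues, with $\sum_j E_{\alpha_j} = \sum_j E_{\beta_j}$. The corresponding block $\bigotimes_j c_{\alpha_j} c^*_{\beta_j}|E_{\alpha_j}\rangle\!\langle E_{\beta_j}|$ contributes to $\lim_{\tau\to\infty}\rho^{(k)}_{\text{Temp.}}$ but is annihilated by the random-phase average. Provided $|\Psi_0\rangle$ has nonzero population on every non-degenerate eigenvalue's eigenspace --- the ``almost every $|\Psi_0\rangle$'' hypothesis --- this extra term is nonzero, and by isolating a single off-diagonal matrix element in the $\bigotimes_j |E_\bullet\rangle\!\langle E_\bullet|$ basis one verifies it cannot be cancelled by other contributions with distinct index structure, so $\rho^{(k)}_{\text{Temp.}} \neq \rho^{(k)}_{\text{Rand.\,Phase}}$ and the ensembles differ.
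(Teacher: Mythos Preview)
Your proposal follows essentially the same route as the paper: expand both ensembles in the energy eigenbasis, match their $k$-th moments under the no-resonance condition, invoke moment determinacy, and for the converse exhibit an off-diagonal matrix element that survives the temporal average but is killed by the random-phase average. The one notable difference is the moment-determinacy step, where you appeal to compactness and Stone--Weierstrass while the paper cites the complex Carleman condition (Thm.~15.11 of Schm\"udgen); your argument is more elementary and equally valid in finite dimension. One point worth tightening: the moments $\rho^{(k)}$ encode only \emph{balanced} monomials $\Psi_{m_1}\cdots\Psi_{m_k}\bar\Psi_{m'_1}\cdots\bar\Psi_{m'_k}$, which do not separate points on the sphere $\mathbb{S}(\mathcal{H})$ since they are blind to global phase. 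Stone--Weierstrass should therefore be applied on projective space $\mathbb{P}(\mathcal{H})$ (or equivalently to the algebra of $U(1)$-invariant continuous functions on the sphere), where the balanced polynomials do separate points and form a conjugation-closed subalgebra; with that adjustment the argument goes through cleanly.
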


\begin{proof}
We prove Theorem~\ref{thm:temp_ens} by matching the $k$-th moments of the temporal and random phase ensembles, and showing that they agree for all orders $k$. We enumerate the possible terms with \textit{multisets}.
\begin{definition}[Multisets]
We define the \textit{ordered multiset} $\{\alpha_j\}_{j=1}^k$ as the ordered list of elements $\alpha_j \in \{1,\dots,D\}$, with possibly repeated elements.
We denote $M_k(D)$ the set of all such multisets of length $k$ with symbols in $\{1,\dots,D\}$.
\end{definition}

We now compute the $k$-th moment of the temporal ensemble:
\begin{align}
    &\rho^{(k)}_\text{Temp.} \equiv \mathbb{E}_t \left[ U_t^{\otimes k} \ketbra{\Psi_0}{\Psi_0}^{\otimes k} U_{-t}^{\otimes k}\right] \label{eq:temp_kth_moment}\\
    &= \sum_{\pmb{\alpha},\pmb{\beta} \in M_k(D)} \left[\bigotimes_{j=1}^k \ketbra{E_{\alpha_j}}{E_{\alpha_j}}\right]  \ketbra{\Psi_0}{\Psi_0}^{\otimes k} \left[\bigotimes_{j=1}^k \ketbra{E_{\beta_j}}{E_{\beta_j}}\right] \delta\left(\sum_{j=1}^k E_{\alpha_j} - \sum_{j=1}^kE_{\beta_j}\right),\label{eq:temp_kth_moment_time_integration}\\
    &= \sum_{\pmb{\alpha} \in M_k(D)} \bigg[\sum_{\sigma \in \pi(\pmb{\alpha})} \ketbra{E_{\alpha_1}, \cdots, E_{\alpha_k}}{E_{\alpha_1}, \cdots, E_{\alpha_k}}  \ketbra{\Psi_0}{\Psi_0}^{\otimes k} \ketbra{E_{\sigma({\alpha_1})}, \cdots, E_{\sigma({\alpha_k})}}{E_{\sigma({\alpha_1})}, \cdots, E_{\sigma({\alpha_k})}} \bigg],  \label{eq:multiset_line}\\
    &= \sum_{\pmb{\alpha} \in M_k(D)} \sum_{\sigma \in \pi(\pmb{\alpha})}\bigotimes_{j=1}^k \abs{\braket{E_{\alpha_j}}{\Psi_0}}^2  \ketbra{E_{\alpha_j}}{E_{\sigma({\alpha_j})}}\\
    &= \rho_d^{\otimes k} \sum_{\pmb{\alpha} \in M_k(D)} \sum_{\sigma \in \pi(\pmb{\alpha})}\bigotimes_{j=1}^k \ketbra{E_{\alpha_j}}{E_{\sigma({\alpha_j})}}~,
    \label{eq:Ham_k_design}
\end{align}
where $U_t \equiv \exp(-iHt)$, and $\mathbb{E}_t[\cdot] = \lim_{T\rightarrow \infty} \frac{1}{T} \int_0^T \cdot~dt$ is the infinite-time average, $\ket{E}$ are energy eigenstates of the Hamiltonian, and $\rho_d \equiv \sum_E \abs{\braket{E}{\Psi_0}}^2 \ketbra{E}{E}$ is the diagonal ensemble, prominently featured in literature on many-body thermalization~\cite{deutsch2018eigenstate}. 
In going from \eqref{eq:temp_kth_moment} to \eqref{eq:temp_kth_moment_time_integration} we have performed an infinite-time integration to obtain a delta function, and in going to \eqref{eq:multiset_line}, we have used the $k$-th no-resonance condition. 

We then compute the $k$-th moment of the random phase ensemble:
\begin{align}
    \rho^{(k)}_\text{Rand. phase} &= \sum_{\pmb{\alpha},\pmb{\beta} \in M_k(D)} \bigotimes_{j=1}^k c_{\alpha_j} c_{\beta_j}^* \ketbra*{\alpha_j}{\beta_j} \underbrace{\prod_j \left(\frac{1}{2\pi}\int_0^{2\pi} d\theta_j \right) \exp[i(\theta_{\alpha_j} - \theta_{\beta_j})]}_{\delta_{(\alpha_j),(\beta_j)}}\\
    &= \sum_{\pmb{\alpha} \in M_k(D)}  \sum_{\sigma \in \pi(M)} \bigotimes_{j=1}^k \abs{c_{\alpha_j}}^2  \ketbra*{\alpha_j}{\sigma({\alpha_j})}
\end{align}
Therefore, the $k$-th moments of the temporal and random phase ensembles are equal. Identifying a distribution from its moments is known as the ``moment problem." For complex multivariate (vector-valued) distributions $P(z_1,\cdots,z_D)\equiv P(\pmb{z})$, the complex Carleman's condition gives a sufficient condition to the moment problem. That is, define the moments $m^{(k)}_j \equiv \int dz P(\pmb{z}) |z_j|^{2k}$. If the following sum diverges:
\begin{equation}
    \forall j~,~\sum_{k=1}^\infty \left(m^{(k)}_j\right)^{-1/2k} = \infty~,
\end{equation}
then the sequence of moments $(m^{(k)}_j)$ uniquely defines the probability distribution~(Thm.~15.11 of Ref.~\cite{schmudgen2017}). In the random phase ensemble, the complex Carleman condition is satisfied. This allows us to identify the temporal and random phase ensembles, if the $k$-th no-resonance condition is satisfied.

In the other direction, if the $k$-th no-resonance condition is \textit{not} satisfied, the delta function in Eq.~\eqref{eq:temp_kth_moment_time_integration} will give additional contributions. Therefore, the $k$-th moments of the temporal and random phase ensembles do not agree, giving the desired statement.
\end{proof}

\section{$k$-th twirling identity}\label{app:Ham_twirling}
The above computation of $\rho^{(k)}_\text{Temp.}$ in Appendix~\ref{app:temp_ens_equals_random_phase_ens_proof} can be generalized into a statement about the \textit{twirling channel} on any operator $A$ that acts on $k$ copies of the Hilbert space. Note, however, that unlike Theorem~\ref{thm:temp_ens}, we now require the full $k$-th no-resonance conditions and cannot disregard degeneracies. As an illustration, we state and explicitly prove the result for $k=2$. This identity originally appeared in the SM of Ref.~\cite{mark2022benchmarking}.

\begin{lemma}[Second Hamiltonian twirling identity] If a Hamiltonian $H$ satisfies the second no-resonance condition, then for any operator $A$ acting on $\mathcal{H}^{\otimes 2}$, the following identity holds:
\begin{align}
    &\mathbb{E}_t \left[ U_t^{\otimes 2} A U_{-t}^{\otimes 2}\right] =  \mathcal{D}^{\otimes 2}\big[A\big]  \mathbb{I} + \mathcal{D}^{\otimes 2}\big[A \mathcal{S}\big] \mathcal{S} - \sum_E \ketbra{E}{E}^{\otimes 2} A \ketbra{E}{E}^{\otimes 2}~, \label{eq:Ham_2_design}
\end{align}
where $U_t \equiv \exp(-iHt)$, and $\mathbb{E}_t[\cdot] = \lim_{T\rightarrow \infty} \frac{1}{T} \int_0^T \cdot~dt$ is the infinite-time average, and $\ket{E}$ are energy eigenstates of the Hamiltonian. $\mathbb{I}$ and $\mathcal{S}$ are the identity and swap operators on $\mathcal{H}^{\otimes 2}$, defined as $\mathbb{I}(\ket{\Psi_1} \otimes \ket{\Psi_2}) = \ket{\Psi_1} \otimes \ket{\Psi_2}$ and $\mathcal{S}(\ket{\Psi_1} \otimes \ket{\Psi_2}) = \ket{\Psi_2} \otimes \ket{\Psi_1}$ for any $\ket{\Psi_1},\ket{\Psi_2}$. Finally, $\mathcal{D}^{\otimes 2}$ is the dephasing channel in the energy eigenbasis: $\mathcal{D}^{\otimes 2 }\big[A\big] \equiv \sum_{E,E'} \ketbra{E,E'}{E,E'}A\ketbra{E,E'}{E,E'}$.
\end{lemma}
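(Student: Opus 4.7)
My plan is to expand both sides of the identity in the energy eigenbasis and compare them term by term, mirroring the multiset analysis used in the proof of Theorem~\ref{thm:temp_ens_app} but specialized to $k=2$ so that the degenerate-label contributions (the case $E_1=E_2$) can be tracked explicitly. Writing $U_t = \sum_E e^{-iEt}\ket{E}\bra{E}$ and inserting four resolutions of the identity, I would begin from
\begin{equation}
U_t^{\otimes 2} A\, U_{-t}^{\otimes 2} = \sum_{E_1,E_2,E_1',E_2'} e^{-it(E_1+E_2-E_1'-E_2')}\, P_{E_1 E_2}\, A\, P_{E_1' E_2'},
\end{equation}
with $P_{E_1 E_2}\equiv \ket{E_1 E_2}\!\bra{E_1 E_2}$, and then take the infinite-time average. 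The average of the exponential factor is the indicator $\mathbf{1}[E_1+E_2 = E_1'+E_2']$.

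Next, I would invoke the second no-resonance condition modulo degeneracies: assuming no spectral degeneracies, $E_1+E_2 = E_1'+E_2'$ forces the multisets $\{E_1,E_2\}$ and $\{E_1',E_2'\}$ to coincide, so $(E_1',E_2')\in\{(E_1,E_2),(E_2,E_1)\}$. These two solutions are distinct when $E_1\neq E_2$ but coincide when $E_1=E_2$. Summing both contributions over all $(E_1,E_2)$ therefore double-counts the diagonal $E_1=E_2$ terms, which gives
\begin{equation}
\mathbb{E}_t\!\left[U_t^{\otimes 2} A\, U_{-t}^{\otimes 2}\right]
= \sum_{E_1,E_2} P_{E_1 E_2}\, A\, P_{E_1 E_2}
+ \sum_{E_1,E_2} P_{E_1 E_2}\, A\, P_{E_2 E_1}
- \sum_E P_{E E}\, A\, P_{E E}.
\end{equation}

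The remaining step is to recognize each of the three sums as the advertised operators. The first sum is $\mathcal{D}^{\otimes 2}[A]$ by definition, and since $\mathbb{I}$ acts trivially, this equals $\mathcal{D}^{\otimes 2}[A]\,\mathbb{I}$. For the second sum, I would use $\mathcal{S}\ket{E_2 E_1} = \ket{E_1 E_2}$, equivalently $P_{E_1 E_2}\mathcal{S}P_{E_1 E_2}^{\mathrm{op}} = \ket{E_1 E_2}\!\bra{E_2 E_1}$, to rewrite $P_{E_1 E_2} A P_{E_2 E_1} = \bra{E_1 E_2}A\mathcal{S}\ket{E_1 E_2}\,\ket{E_1 E_2}\!\bra{E_2 E_1}$; summing over $(E_1,E_2)$ produces $\mathcal{D}^{\otimes 2}[A\mathcal{S}]\,\mathcal{S}$. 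The third sum is already the correction term. Combining the three yields the identity.

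The only real subtlety is bookkeeping the $E_1=E_2$ degenerate case --- that is precisely the origin of the third, subtracted term --- so care must be taken to avoid either double-counting or omitting it; this is the step most likely to be miswritten. Everything else is a direct algebraic identification via the action of $\mathcal{S}$ on product energy basis states, and no further analytic input (such as the Carleman argument used in the $k$-th moment proof of Theorem~\ref{thm:temp_ens_app}) is required, since here we are averaging a fixed operator $A$ rather than identifying a probability distribution from its moments.
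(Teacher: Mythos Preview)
Your proposal is correct and follows essentially the same approach as the paper: expand in the energy eigenbasis, use the infinite-time average to reduce to the constraint $E_1+E_2=E_1'+E_2'$, apply the second no-resonance condition to enumerate the two solutions, and subtract the doubly counted diagonal case $E_1=E_2$. The paper's proof is nearly identical, differing only in index labeling (it writes $(E_1,E_2,E_3,E_4)$ rather than $(E_1,E_2,E_1',E_2')$) and in leaving the identification of the second sum with $\mathcal{D}^{\otimes 2}[A\mathcal{S}]\,\mathcal{S}$ implicit, which you spell out via the action of $\mathcal{S}$ on product states.
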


\begin{proof}
\begin{align}
    &\mathbb{E}_t \left[U_t^{\otimes 2} A  U_{-t}^{\otimes 2} \right] =  \lim_{T\rightarrow \infty}\frac{1}{T}\int^T_0 dt~U_t^{\otimes 2} A  U_{-t}^{\otimes 2} \\
    &=  \sum_{E_1,E_2,E_3,E_4} \left[\lim_{T\rightarrow \infty}\frac{1}{T}\int^T_0 dt~e^{-i(E_1-E_2+E_3-E_4)t}\right] \ketbra{E_1,E_3}{E_1,E_3}A \ketbra{E_2,E_4}{E_2,E_4} \\
    &=  \sum_{E_1,E_2,E_3,E_4} \delta(E_1-E_2+E_3-E_4)~ \ketbra{E_1,E_3}{E_1,E_3}A \ketbra{E_2,E_4}{E_2,E_4}\\
    &\overset{\text{$2$-NR}}{=}  \sum_{E,E'} \bigg[ \ketbra{E,E'}{E,E'} A  \ketbra{E,E'}{E,E'} + \ketbra{E,E'}{E,E'} A  \ketbra{E',E}{E',E} \bigg] - \sum_E \ketbra{E,E}{E,E} A  \ketbra{E,E}{E,E}~, \label{eq:lemma_2}
\end{align}
where ``2-NR" indicates the use of the second no-resonance condition. The second no-resonance condition states that the delta function $\delta(E_1-E_2+E_3-E_4)$ is non-zero in only two cases: $E_1 = E_2,~E_3=E_4$ or $E_1 = E_4,~E_3=E_2$, which give the two terms above. The final term in Eq.~\eqref{eq:lemma_2} corresponds to $E_1=E_2=E_3=E_4$. 
This special case belongs to the above two terms simultaneously. It is double counted in the first two terms, hence is subtracted.
\end{proof}

\begin{lemma}[$k$-th Hamiltonian twirling identity] \label{lemma:Hamiltonian_k_design}
If the Hamiltonian satisfies the $k$-th no-resonance condition, its $k$-th twirling channel acting on any operator $A$ on $k$ copies of the Hilbert space $\mathcal{H}^{\otimes k}$ is:
\begin{align}
    \mathbb{E}_t \left[ U_t^{\otimes k} A U_{-t}^{\otimes k}\right]
    &= \sum_{M = (E_1,\cdots, E_k)} \bigg[\sum_{\sigma \in \pi(M)} \ketbra{E_1, \cdots, E_k}{E_1, \cdots, E_k} A  \ketbra{E_{\sigma(1)}, \cdots, E_{\sigma(k)}}{E_{\sigma(1)}, \cdots, E_{\sigma(k)}} \bigg],  \label{eq:multiset_twirling}\\
    &= \sum_{\sigma \in S_k}\mathcal{D}^{\otimes k}\big[A\text{Perm}(\sigma^{-1})\big] \text{Perm}(\sigma) + \mathrm{E.C.} \label{eq:Ham_k_twirling}
\end{align}
where $\text{Perm}(\sigma)$---with $\sigma \in S_k$---is a permutation operator acting on the $k$-copy Hilbert space: 
\begin{align}
    \text{Perm}(\sigma) \equiv \sum_{\{E_i\}} \ketbra*{E_{1},E_{2}, \dots, E_{k}}{E_{\sigma(1)},E_{\sigma(2)}, \dots, E_{\sigma(k)}}~,
\end{align} 
and $\mathcal{D}^{\otimes k}$ is the dephasing channel in the energy basis, $\mathcal{D}^{\otimes k}\big[A\big] \equiv \sum_{\{E_i\}} \ketbra*{\{E_i\}}{\{E_i\}} A \ketbra*{\{E_i\}}{\{E_i\}}$.

In~\eqref{eq:multiset_twirling}, the sum is taken over multisets $M$. If $M$ has $r$ unique indices with multiplicities $n_1,\cdots,n_r$ (such that $\sum_{i=1}^n n_r = k$), its group of permutations $\pi(M)$ is a subgroup of the symmetric group $S_k$ with $k!/\prod_{i=1}^n n_i!$ elements. In~\eqref{eq:Ham_k_twirling}, we re-write Eq.~\eqref{eq:multiset_twirling} in terms of permutations, and additional terms $\text{E.C.}$ associated with multiple counting (similar to the $k=2$ case). At this level, it is difficult to provide a useful bound on the size of $\text{E.C}$. However, bounds may be proven in specific contexts, such as Theorem~\ref{thm:temp_ens_PT}. All terms denoted ``E.C."~can be explicitly derived from the twirling identities.
\end{lemma}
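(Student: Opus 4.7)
The plan is to mirror the structure of the $k=2$ twirling calculation already carried out above, but with the bookkeeping generalized to arbitrary $k$. First, I would expand $A$ in the energy eigenbasis as
\begin{equation}
A = \sum_{\pmb{\alpha},\pmb{\beta}\in M_k(D)} A_{\pmb{\alpha},\pmb{\beta}}\, \ketbra{E_{\alpha_1},\dots,E_{\alpha_k}}{E_{\beta_1},\dots,E_{\beta_k}},
\end{equation}
so that conjugation by $U_t^{\otimes k}$ multiplies each coefficient by the phase $\exp[-i\sum_j(E_{\alpha_j}-E_{\beta_j})t]$. Taking the infinite-time average replaces this phase by $\delta(\sum_j E_{\alpha_j}-\sum_j E_{\beta_j})$, exactly as in Eq.~\eqref{eq:temp_kth_moment_time_integration}. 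At this point, the $k$-th no-resonance condition kicks in: the delta function vanishes unless $\pmb{\beta}$ is a rearrangement of $\pmb{\alpha}$ as multisets, i.e.~there exists $\sigma\in S_k$ with $\beta_j=\alpha_{\sigma(j)}$. This yields Eq.~\eqref{eq:multiset_twirling} directly, where the inner sum runs over the stabilizer-quotient group $\pi(M)$ of genuinely distinct rearrangements of the multiset $M=(E_{\alpha_1},\dots,E_{\alpha_k})$.

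The second half of the proof is the reorganization into the ``main term + excess counting'' form of Eq.~\eqref{eq:Ham_k_twirling}. The key algebraic observation is that, for a multiset $M$ with all entries distinct, $\pi(M)=S_k$ and the contribution can be written as $\mathcal{D}^{\otimes k}[A\,\mathrm{Perm}(\sigma^{-1})]\mathrm{Perm}(\sigma)$ summed over $\sigma\in S_k$. I would first verify this rewriting on multisets with distinct entries: use the definitions of $\mathcal{D}^{\otimes k}$ and $\mathrm{Perm}(\sigma)$ to collapse the two inserted projectors and the permutation into exactly the outer product structure $\ket{E_1,\dots,E_k}\!\bra{E_{\sigma(1)},\dots,E_{\sigma(k)}}$ appearing in Eq.~\eqref{eq:multiset_twirling}. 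Summing over all multisets (including those with coincidences) then produces the main term of Eq.~\eqref{eq:Ham_k_twirling}, but with over-counting whenever some $E_{\alpha_i}=E_{\alpha_j}$, because then distinct $\sigma\in S_k$ give the same rearrangement of the multiset.

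To correct for this over-counting, I would proceed by inclusion--exclusion on the partition lattice of $\{1,\dots,k\}$. For each set partition $\Pi$ of the $k$ indices, define the ``coincidence configurations'' in which the energy labels are forced to agree on each block of $\Pi$. The number of elements of $S_k$ fixing a multiset with coincidence pattern $\Pi$ is $\prod_{B\in\Pi}|B|!$, so replacing $\pi(M)$ by all of $S_k$ over-counts by precisely this factor. One then writes Möbius-style correction terms, one for each nontrivial partition, which are the $\mathrm{E.C.}$ terms appearing in Eq.~\eqref{eq:Ham_k_twirling}. For $k=2$ the only nontrivial partition is the single block $\{1,2\}$, reproducing the lone $-\sum_E\ketbra{E,E}{E,E}A\ketbra{E,E}{E,E}$ correction of Eq.~\eqref{eq:lemma_2}.

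The main obstacle I expect is the combinatorial bookkeeping of the excess-counting terms: writing them in a compact, closed form that is manifestly correct requires a careful Möbius inversion over the partition lattice, and in particular keeping track of which projectors $\mathcal{D}^{\otimes k}$ should be coarse-grained (one label per block) in each correction term. However, as the lemma statement itself notes, a useful explicit bound on these $\mathrm{E.C.}$ terms is not needed in general — only the structural identity is claimed — and bounds in specific applications such as Theorem~\ref{thm:temp_ens_PT} can be handled separately by estimating matrix elements of $\rho_d^{(2)}$ relative to $\rho_d^{\otimes 2}$, in the spirit of Theorem~\ref{thm:temp_ens_product_form}.
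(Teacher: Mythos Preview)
Your proposal is correct and follows the same route as the paper's proof: expand in the energy eigenbasis, time-average to a Kronecker delta on the total energies, invoke the $k$-th no-resonance condition to reduce to rearrangements of the multiset, and then reorganize into the permutation-sum plus over-counting corrections. The paper's own argument is in fact terser than yours---it simply declares the proof ``analogous to that of the second twirling identity'' and illustrates with the $k=3$ coincidence hierarchy---so your inclusion--exclusion description of the $\mathrm{E.C.}$ terms via the partition lattice is a welcome elaboration rather than a departure.
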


The proof of the $k$-th twirling identity is analogous to that of the second twirling identity: in this case, there are $k!$ permutations relevant for the $k$-th no resonance theorem, along with more terms than the simplest one in the $k=2$ case. 
As an example, for $k=3$, in addition to the terms associated with cases where two energies are the same $E_i = E_j$, there is a higher order term arising when all three energies are the same: $E_1 = E_2 = E_3$.

\section{Entropy of temporal ensemble}
\label{app:temp_ens_entropy}
In Section~\ref{sec:temporal_ensemble}, we stated that the ensemble entropy of the temporal ensemble, a negative quantity which is the negative of its relative entropy to the Haar ensemble, is given by [Eq.~\eqref{eq:temporal_ensemble_entropy}]
\begin{align}
    \text{Ent}(\mathcal{E}_\text{Temp.}) =\sum_E \log(D\abs{c_E}^2)
    +\lim_{\delta\rightarrow 0} D \left(\log(\delta/2)-1\right)~.
\end{align}
Here, we derive this result. Note that any ensemble in the constrained space can be described by the joint distribution $P(\{\theta_E\})$. We emphasize that the ensemble entropy is the differential entropy of the \textit{distribution of states} over Hilbert space. The ensemble entropy for any ensemble of states with constrained magnitudes $\abs{c_E}^2$ is formally negative infinite since they lie on a sub-dimensional (measure zero) manifold of the Hilbert space. However, we can still extract meaningful information from the ensemble entropy as follows. We may take the probability distribution $P(\Psi)$ on the constrained space as the limit of probability distributions with independent phase and magnitude components: $P(\Psi) = P(\{\abs{\Psi_E}\})P(\{\theta_E\})$. This separates the 
entropy into contributions from the phase and magnitude degrees of freedom:
\begin{align}
    &-\mathcal{D}(\mathcal{E} \Vert \text{Haar}) =  -\int d\Psi~P(\Psi) \log\left(\frac{P(\Psi)}{P_\text{Haar}(\Psi)}\right) 
    = S_\theta + S_{\abs{\Psi_E}^2}~.
\end{align}

The phase contribution is
\begin{equation}
    S_\theta = -\prod_E\int_0^{2\pi} d\theta_E P(\{\theta_E\}) \log[(2\pi)^D P(\{\theta_E\})]~,
\end{equation}
which is uniquely maximized when the phases are uniformly distributed, taking value 0.

The magnitude component is divergent:
However, if we replace the delta function with an integral over the thin shell $\abs{\Psi_E}^2 \in [\abs{c_E}^2(1-\delta),\abs{c_E}^2(1+\delta)]$, we obtain a constant contribution in addition to the divergent one.
\begin{align}
    S_{\abs{\Psi_E}^2} &= -\lim_{\delta\rightarrow 0}\prod_E \int_{\abs{c_E}^2(1-\delta)}^{\abs{c_E}^2(1+\delta)} \frac{d\abs{\Psi_E}^2}{2\abs{c_E}^2 \delta}  \log(\frac{ 1/(\abs{c_E}^2\delta)}{2D \exp(-D\abs{\Psi_E}^2)}) = \lim_{\delta\rightarrow 0}\left[\sum_E (\log(\abs{c_E}^2))-D\abs{c_E}^2 + D \log(\delta D/2)\right] \nonumber\\
    &=\sum_E \log(\abs{c_E}^2)+\lim_{\delta\rightarrow 0} D \left(\log(\delta D/2)-1\right)~.
\end{align}
We interpret the constant contribution to the entropy as the volume of Hilbert (sub)space spanned by the temporal ensemble. This subspace is in fact a hypertorus, with radii given by ${\abs{c_E}}$. In the limit of large Hilbert space dimension $D$, the volume of this hypertorus is approximately $\prod_E \abs{c_E}$. Therefore our measure of entropy is quantitatively equivalent to the volume of the Hilbert space.

\section{Proof of Theorem~\ref{thm:temp_ens_product_form} (Asymptotic product formula)} \label{app:temporal_ensemble_error}
\begin{theorem}
The random phase ensemble asymptotically approaches the \textit{product form}:
\begin{align}
    \rho^{(k)}_\text{Temp.} = \rho_d^{\otimes k} \sum_{\sigma \in S_k} \text{Perm}(\sigma) + O(\tr(\rho_d^2))
\end{align}
Here, $\rho_d \equiv \sum_E \abs{c_E}^2 \ketbra{E}{E}$ is the \textit{diagonal ensemble}, a common construction in the thermalization literature~\cite{rigol2008thermalization,polkovnikov2011nonequilibrium}, and its purity $\tr(\rho_d^2)$ is typically exponentially small in total system size. 
\end{theorem}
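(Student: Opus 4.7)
The plan is to establish the product form by direct comparison of the candidate operator $\rho_d^{\otimes k} \sum_{\sigma \in S_k} \text{Perm}(\sigma)$ against the exact multiset expression for $\rho^{(k)}_\text{Temp.}$ from Theorem~\ref{thm:temp_ens}, both already expanded in the energy eigenbasis, and then to bound the residual in trace norm by a union bound over colliding indices.

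Concretely, expanding $\rho_d^{\otimes k}$ and $\text{Perm}(\sigma)$ in the energy basis gives
\begin{align}
\rho_d^{\otimes k}\!\sum_{\sigma \in S_k}\!\text{Perm}(\sigma) = \sum_{\pmb{\alpha} \in M_k(D)} \sum_{\sigma \in S_k} \prod_{j} |c_{\alpha_j}|^2 \, \ketbra{E_{\alpha_1},\ldots,E_{\alpha_k}}{E_{\alpha_{\sigma(1)}},\ldots,E_{\alpha_{\sigma(k)}}}.
\end{align}
For each multiset $\pmb{\alpha}$ with symbol multiplicities $\{n_i\}$, the permutations in the stabilizer $\text{Stab}(\pmb{\alpha}) \subset S_k$ (of order $\prod_i n_i!$) leave the tuple $(\alpha_1,\ldots,\alpha_k)$ invariant, so the inner sum produces each distinct outer product exactly $|\text{Stab}(\pmb{\alpha})|$ times. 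The exact moment from Theorem~\ref{thm:temp_ens}, by contrast, sums over a set of coset representatives $\pi(\pmb{\alpha}) \cong S_k/\text{Stab}(\pmb{\alpha})$, i.e., over the distinct permutations without over-counting. The two operators therefore agree exactly on the locus of multisets with pairwise distinct entries and differ only on the locus of multisets with at least one repeated symbol, on which the candidate over-counts by the factor $\prod_i n_i! \geq 2$.

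To control the trace norm of the residual $\Delta$, each basis operator $\ket{E_{\pmb{\alpha}}}\!\bra{E_{\sigma(\pmb{\alpha})}}$ is rank one with unit trace norm, and the multiplicative mismatch obeys the uniform bound $(\prod_i n_i! - 1)\cdot|\pi(\pmb{\alpha})| \leq k!$, so the triangle inequality gives $\|\Delta\|_* \leq k! \sum_{\pmb{\alpha}\,:\,\text{repeats}} \prod_j |c_{\alpha_j}|^2$. Writing the event ``$\pmb{\alpha}$ has a repeat'' as the union of the pair events $\{\alpha_i = \alpha_j\}$ with $i<j$ and applying a union bound, the constrained pair contributes $\sum_E |c_E|^4 = \tr(\rho_d^2)$ while the remaining $k-2$ indices sum freely to $1$. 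Summing over the $\binom{k}{2}$ pairs produces $\|\Delta\|_* \leq k!\binom{k}{2}\tr(\rho_d^2) = O(\tr(\rho_d^2))$ with a $k$-dependent prefactor, as claimed.

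The main obstacle I anticipate is the combinatorial bookkeeping: correctly identifying the stabilizer size as the multinomial factor $\prod_i n_i!$, and verifying that the triangle inequality together with the non-disjoint union bound is only lossy in the $k$-dependent prefactor rather than in the scaling with $\tr(\rho_d^2)$. This looseness is harmless for our purposes because $\tr(\rho_d^2)$ is the inverse effective dimension, exponentially small in system size at any nontrivial effective temperature, so the crude constants suffice for the downstream applications of Theorem~\ref{thm:temp_ens_product_form} to the Porter-Thomas statistics of Sec.~\ref{subsec:temp_ens_PT_dist} and to the projected-ensemble analysis via Lemma~\ref{lemma:temp_ens_of_proj_st}.
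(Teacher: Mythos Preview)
Your proposal is correct and follows essentially the same strategy as the paper's proof in Appendix~\ref{app:temporal_ensemble_error}: both expand the candidate product form and the exact moment in the energy eigenbasis, identify the residual as supported entirely on multisets $\pmb{\alpha}$ with at least one repeated index, and bound it by extracting a single collision factor $\sum_E |c_E|^4 = \tr(\rho_d^2)$.

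The execution of the bounding step differs slightly. You apply the triangle inequality directly to the rank-one outer products and then a union bound over pairs $\{\alpha_i=\alpha_j\}$, arriving at $k!\binom{k}{2}\tr(\rho_d^2)$. The paper instead groups multisets by their degeneracy pattern (partition class $s$), observes that each class contributes a positive semi-definite piece $\rho_d^{\otimes k} P_s \sum_\sigma \text{Perm}(\sigma)$ whose trace norm equals its trace, and replaces two copies of $\rho_d$ by $\rho_d^{(2)}=\sum_E |c_E|^4\ketbra{E,E}{E,E}$ on the collision slots, yielding the bound $(p(k)-1)\,k!\,\tr(\rho_d^2)$ with $p(k)$ the partition number. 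Your route is more elementary (it avoids invoking positivity) at the cost of discarding the cancellations inside $\sum_\sigma \text{Perm}(\sigma)$; the paper's route exploits positivity to evaluate the norm exactly as a trace. Both constants are crude in $k$ and, as you correctly note, irrelevant for the downstream applications since $\tr(\rho_d^2)$ is exponentially small in system size.
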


\begin{proof}
    We want to show that:
\begin{align}   \rho^{(k)}_\text{Temp.} &= \rho_d^{\otimes k}\sum_{\pmb{\alpha} \in M_k(D)} \sum_{\sigma \in \pi(\pmb{\alpha})}\bigotimes_{j=1}^k   \ketbra{E_{\alpha_j}}{E_{\sigma({\alpha_j})}}\overset{D\rightarrow \infty}{\longrightarrow} \rho_d^{\otimes k} \sum_{\sigma \in S_k} \text{Perm}(\sigma) = \rho_d^{\otimes k} \sum_{\pmb{\alpha}\in M_k(D)} \sum_{\sigma \in S_k}\bigotimes_{j=1}^k \ketbra{E_{\alpha_j}}{E_{\sigma({\alpha_j})}}
\end{align}
Here, we derive the bound on the error
\begin{equation}
    \Vert\rho^{(k)}_\text{Temp.} - \rho_d^{\otimes k} \sum_{\sigma \in S_k} \text{Perm}(\sigma) \Vert_* \leq \frac{k!}{2} \tr(\rho_d^2),
\end{equation}
where $\Vert \cdot \Vert_*$ is the trace norm, or the sum of absolute values of the eigenvalues. To do so, we note that
\begin{equation}
    \rho_d^{\otimes k} \sum_{\sigma \in S_k} \text{Perm}(\sigma) - \rho^{(k)}_\text{Temp.} = \rho_d^{\otimes k}\left[\sum_{\pmb{\alpha} \in M_k(D)} \left(\sum_{\sigma \in S_k} - \sum_{\sigma \in \pi(\pmb{\alpha})}\right) \bigotimes_{j=1}^k   \ketbra{E_{\alpha_j}}{E_{\sigma({\alpha_j})}} \right]
\end{equation}
We then divide the Hilbert space $\mathcal{H}^{\otimes k}$ into classes $s$ of multisets $\pmb{\alpha}$, grouped by their degeneracy structure: the number of degenerate indices $\alpha_j$ and their multiplicity. For example, $s =()$ indicates the class of multisets with no degeneracies, $s=(2)$ indicates the presence of 1 degeneracy of multiplicity 2, and $s=(2,2)$ indicates two degeneracies of multiplicity 2, respectively. The size of the permutation group $\pi(\pmb{\alpha})$ is uniquely determined by the class $s(\pmb{\alpha})$, and is a divisor of $|S_k|=k!$.
Therefore, 
\begin{align}
    \rho_d^{\otimes k} \sum_{\sigma \in S_k} \text{Perm}(\sigma) - \rho^{(k)}_\text{Temp.} 
    &= \rho_d^{\otimes k}\left[\sum_{s \subset M_k(D)} \sum_{\pmb{\alpha} \in s} \left(\sum_{\sigma \in S_k} - \sum_{\sigma \in \pi(\pmb{\alpha})}\right) \bigotimes_{j=1}^k   \ketbra{E_{\alpha_j}}{E_{\sigma({\alpha_j})}} \right]\\
    &= \sum_{s \subset M_k(D)} \rho_d^{\otimes k}
    P_s \sum_{\sigma \in S_k}\text{Perm}(\sigma) \label{eq:sum_over_classes}
\end{align}
Where $P_s$ is a projector onto the space of multisets in the class $s$. Each term in the sum over $s$, $\rho_d^{\otimes k}
    P_s \sum_{\sigma \in S_k}\text{Perm}(\sigma)$,is positive semi-definite. We then use the triangle inequality to bound the trace distance, noting that the term from the class $s = ()$ gives 0 contribution. 
In our sum \eqref{eq:sum_over_classes} over multisets with at least one degeneracy, we can replace $\rho_d^{\otimes k}$ with $\rho_d^{(2)} \otimes \rho_d^{\otimes (k-2)}$, where $\rho_d^{(2)} \equiv \sum_E \abs{c_E}^4 \ketbra{E,E}{E,E}$. Finally, the projector $P_s$ only decreases the trace distance and can be omitted.
\begin{align}
    \Vert \rho_d^{\otimes k} \sum_{\sigma \in S_k} \text{Perm}(\sigma) - \rho^{(k)}_\text{Temp.}  \Vert_* &\leq \sum_{s \neq ()} \Vert \left(\rho_d^{(2)} \otimes \rho_d^{\otimes (k-2)}\right) \sum_{\sigma \in S_k}\text{Perm}(\sigma) \Vert_*\\
    &= \sum_{s \neq ()} \tr\left[\left(\rho_d^{(2)} \otimes \rho_d^{\otimes (k-2)}\right) \sum_{\sigma \in S_k}\text{Perm}(\sigma) \right] \label{eq:temporal_product_form_bound}\\
    &\leq \sum_{s \neq ()} k! \tr(\rho_d^2) = (p(k)-1) k! \tr(\rho_d^2) \leq k!\exp(\pi \sqrt{\frac{2k}{3}})\tr(\rho_d^2) \label{eq:temporal_product_form_final_bound}
\end{align}
where $p(k)$ is the \textit{partition number} of $k$ (the number of possible classes $s$), which is upper bounded by $\exp(\pi \sqrt{\frac{2k}{3}})$ \cite{erdos1942elementary}. In going from Eq.~\eqref{eq:temporal_product_form_bound} to Eq.~\eqref{eq:temporal_product_form_final_bound} we have used the fact that 
\begin{equation}
    \tr\left[\left(\rho_d^{(2)} \otimes \rho_d^{\otimes (k-2)}\right) \sum_{\sigma \in S_k}\text{Perm}(\sigma) \right]  = \tr(\rho_d^2) + (k-1) \tr(\rho_d^3) + \binom{k-1}{2} \tr(\rho_d^2)^2 + \cdots \leq k! \tr(\rho_d^2)~,
\end{equation}
this bound is weak since we expect all terms to be exponentially smaller (in system size) than the first. This implies that the $O(k!)$ prefactor is likely not tight in terms of scaling with $k$. However, we will typically restrict our attention to small values of $k$, and are more interested in the factor of $\tr(\rho_d^2)$. In typical systems, we expect $\tr(\rho_d^2) \sim D^{-1}$ to exponentially decrease as the size of the many-body system increases. 
\end{proof}

\section{Convergence of finite-time temporal ensembles: Further numerical results} \label{app:finite_time_ensembles}

\begin{figure}
    \centering
    \includegraphics[width=\textwidth]{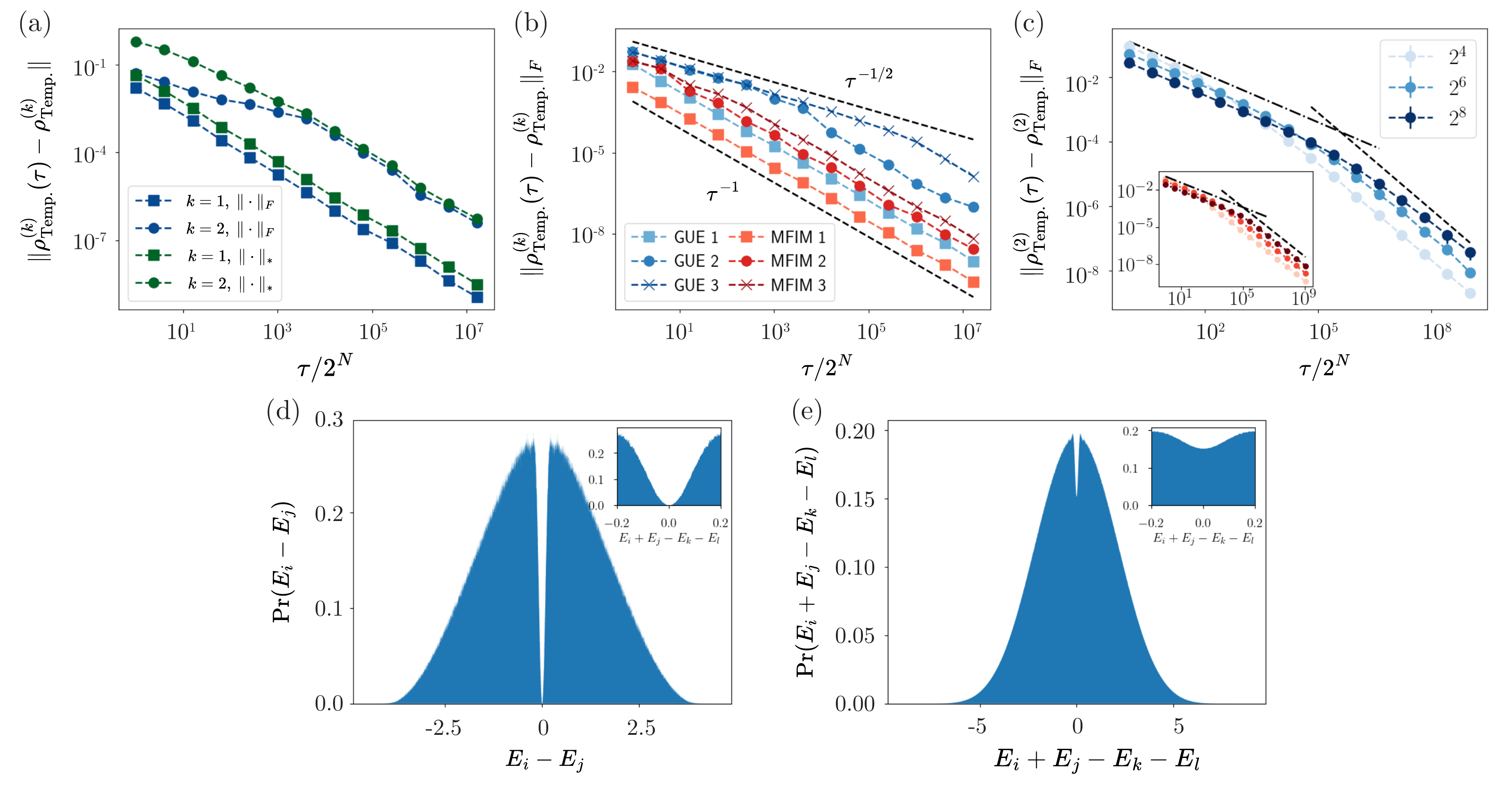}
    \caption{Convergence of finite time ensembles. (a) Comparison of convergence of first ($k=1$, squares) and second moment ($k=2$, circles) in trace (green) and Frobenius distance (blue) for a temporal ensemble generated by a single, randomly drawn GUE Hamiltonian with Hilbert space dimension $D=2^6$. (b) Convergence in Frobenius distance of the first three moments $k=1,2,3$ (squares, circles, crosses) for single, randomly drawn GUE Hamiltonian with Hilbert space dimension $D=2^6$ (blue) and the MFIM with $N=6$ spins (red). (c) Ensemble averaged Frobenius distance of the temporal ensembles generated by a randomly drawn GUE Hamiltonians in various Hilbert space dimension $D=2^4,2^6,2^8$. Mean is takes over 10000 ($D=2^4,2^6$) and 2000 random instances ($D=2^8$). Error bars denote two standard errors of the mean, 2000. Inset shows the median of the Frobenius distances of the same data. (d,e) Histogram of gaps and differences of gaps accumulated over 50000 randomly chosen GUE Hamiltonian with Hilbert space dimension $D=2^4$. Indices $i,j,k,l$ take all values in $1,\dots, D$ and we excluded the (trivially vanishing) diagonal elements  $i=j$  in (d) and $i=k, j=l$ and $i=l, j=k$ in (e). The insets enlarge the regions close to zero.}
    \label{fig:app_finiteT}
\end{figure}

Here, we present additional numerical results investigating the convergence of finite time temporal ensembles discussed in Section~\ref{subsec:finite_temp_temp_ens}. 

In Fig.~\ref{fig:app_finiteT}(a), we compare convergence of first and second moment of the finite time temporal ensemble in trace and Frobenius distance. We consider a single Hamiltonian drawn at random from the GUE ensemble in a system with Hilbert space dimension $d=2^6$. For both, first and second moment, we observe the same behavior at long times: trace and Frobenius distance decay as $\tau^{-1}$, where $\tau$ is the interval length. In contrast, the intermediate $\tau^{-1/2}$ decay of the Frobenius distance and its pronounced cross-over to the late time $\tau^{-1}$ appears to be absent when considering the trace distance. This suggests, that while the  early and intermediate time behavior of trace and Frobenius distance can differ, the Frobenius distance is a quantitative proxy for the trace distance at long times.

In Fig.~\ref{fig:app_finiteT}(b), we complement the results of the main text by comparing convergence of the $k$-th moments of the finite time temporal for different $k=1,2,3$. We consider a single Hamiltonian drawn at random from the GUE ensemble in a system with Hilbert space dimension $D=2^6$ and the MFIM with $N=6$ spins. For computational simplicity, we focus on the Frobenius norm. Independent of the value of $k$, we observe a late time decay $\propto \tau^{-1}$. As discussed in the main text [c.f.  Eq.~\eqref{eq:finite_time_ensemble}], we attribute this to the presence of a finite minimal gap $E_i-E_j$ (difference of gaps for $k\leq 2$) for an individual Hamiltonian and the late time decay of the $\textrm{sinc}$ function.

In Fig.~\ref{fig:app_finiteT}(c), we investigate the convergence of ensemble-averaged distances. Notably, we find that for the first moment, $k=1$, the GUE ensemble mean closely resembles the behavior of individual instances and decays as $\tau^{-1}$ (not shown). For higher moments, $k=2$, we find that, at early times,  ensemble average distances decays as $\tau^{-1/2}$ whereas at very late times we observe a decay consistent with $\tau^{-1}$. The cross-over between these two regimes is, however, less sharply  pronounced as for single fixed Hamiltonians [as visualized by the ensemble median, representing a typical instance,  in inset of Fig.~\ref{fig:app_finiteT}(c)].  We relate these findings to statistical properties  of  the energy eigenlevels of GUE Hamiltonians. It is well known that the statistics of eigenvalues exhibits (strong) level repulsion, i.e., the probability to find two distinct eigenvalues close to each other $E_i-E_j \approx 0$ is vanishing. 
In contrast, the numerical results presented Figs.~\ref{fig:app_finiteT}(d, e) suggest the gaps $E_i-E_j+E_k-E_l$ of eigenvalues exhibit only weak repulsion, i.e., the probability of finding two distinct gaps close to each other $E_i-E_j+E_k-E_l \approx 0$ is suppressed but finite. We leave a detailed investigation of such higher-order gap repulsion and its consequences for the late time decay of the ensemble-averaged Frobenius (and trace) distance to future study (see also Section \ref{subsec:conv}).

\section{Convergence of finite-time temporal ensembles: Analytical results from random matrix theory} \label{app:finite_time_ensembles}
Here we prove the $\tau^{-1}$ scaling of the convergence of the finite-time temporal ensemble for Hamiltonians from the Gaussian Unitary Ensemble (GUE), discussed in Section~\ref{subsec:finite_temp_temp_ens}. We focus on the (ensemble-averaged) squared Frobenius norm $\Vert \cdot \Vert_F$ for its analytic tractability. Note that, since we average over random Hamiltonians $H$, the initial state can be fixed to be an arbitrary state.
\begin{align}
\mathbb{E}_{H\sim \text{GUE}}\left[ \Vert  \rho^{(k)}_\text{Temp.}(\tau) - \rho^{(k)}_\text{Temp.} \Vert_F^2\right] &= \mathbb{E}_{H\sim \text{GUE}}\left[ \tr[\left(\rho^{(k)}_\text{Temp.}(\tau)\right)^2] - 2 \tr[\rho^{(k)}_\text{Temp.}(\tau) \rho^{(k)}_\text{Temp.}] + \tr[\left(\rho^{(k)}_\text{Temp.}\right)^2]\right] \\
&= \mathbb{E}_{H\sim \text{GUE}}\left[ \tr[\left(\rho^{(k)}_\text{Temp.}(\tau)\right)^2] -  \tr[\left(\rho^{(k)}_\text{Temp.}\right)^2]\right]
\end{align}
where we used that the long time limit average moment does not evolve in time.
\begin{align}
    \tr[\rho^{(k)}_\text{Temp.}(\tau) \rho^{(k)}_\text{Temp.}] = \tr[\int_{-\tau}^\tau \frac{\text{d}t}{2\tau}  \left(e^{-iHt}\ket{\Psi_0}\bra{\Psi_0}e^{iHt}\right)^{\otimes k} \rho^{(k)}_\text{Temp.}] =   \bra{\Psi_0}^{\otimes k} \rho^{(k)}_\text{Temp.}\ket{\Psi_0}^{\otimes k} = \tr[\rho^{(k),2}_\text{Temp.}].
\end{align}
To carry out the average over the GUE, we decompose $H=U\Lambda U^\dagger$ with $\Lambda$ a diagonal matrix of eigenvalues of $H$ and $U$ the unitary matrix formed by its eigenvectors. Importantly, for $H$ from GUE,  $\Lambda$ and $U$ are statistically independent and $U$ is a random unitary matrix distributed according to the Haar measure on the unitary group. We can thus evaluate the average over the eigenvectors $U$  separately using Weingarten calculus. We find for the first term
\begin{align}
    \mathbb{E}_{H\sim \text{GUE}}\left[ \tr[\rho^{(k),2}_\text{Temp.}(\tau)]\right] =& \frac{ 1}{ 4\tau^2} \int_{0}^{2\tau}  \! \! \text{d}\bar t \int_{0}^{2\bar t} \! \text{d}t'     \int \! \text{d}\Lambda  \,P(\Lambda)  \int \! \text{d}U  \tr\left [\ket{\Psi_0}\bra{\Psi_0} ^{\otimes 2k} U^{\otimes 2k} \left(e^{-i\Lambda t'}\otimes e^{+i\Lambda t'}\right)^{\otimes k}U^{\dag,\otimes 2k} \right ] \nonumber \\
     =&  \frac{ 1}{ 4\tau^2} \int_{0}^{2\tau}  \! \! \text{d}\bar t \int_{0}^{2\bar t} \! \text{d} t'   \frac{1}{D(D+1)\cdots (D+k-1)}\int \! \text{d}\Lambda  \, P(\Lambda)  \sum_{\pi\in S_{2k}} \tr\left[W_\pi\left(e^{-i\Lambda t'}\otimes e^{+i\Lambda t'}\right)^{\otimes k}\right].
\end{align}
where $P(\Lambda)$ denotes the joint probability distribution of the eigenvalues of Hamiltonians $H$ from GUE, $ \int \! \text{d}U$ denotes the Haar integral of the unitary group, $W_\pi $ with $\pi \in S_{2k}$ are $2k$ permutation operators acting on the $2k$-copy space and $S_{2k}$ denotes the symmetric group of order $2k$. In the case $k=1$, we can simplify the expressions further. We find
\begin{align}
    \int \! P(\Lambda)   \text{d}\Lambda   \sum_{\pi\in S_{2}} \tr\left[W_\pi \left(e^{-i\Lambda\tau}\otimes e^{+i\Lambda\tau}\right)\right]  = \frac{1}{D(D+1)} \left[K(\tau) + D\right] \nonumber
\end{align}
where 
\begin{align}
       K(\tau) = \mathbb{E}_{H\sim \text{GUE}}\left[ \abs{\tr(e^{-iH\tau })}^2\right] =&  \int \! P(\Lambda)   \text{d}\Lambda ~\abs{\tr( e^{-i\Lambda\tau})}^2 \nonumber
\end{align}
is the \textit{spectral form factor} averaged over Hamiltonians $H$. Furthermore, $\rho_{H}^{(1)}=\sum_E \abs{\braket{E}{\Psi_0}}^2 \ketbra{E}$ with $\{\ket{E}\}$ the eigenvectors of $H$. Thus, we have $\tr[\varrho_H^{(1),2}] = \sum_E \abs{\braket{E}{\Psi_0}}^4$, which averaged over $H$ from GUE,
\begin{align}
   \mathbb{E}_{H\sim \text{GUE}} \left[\sum_E |\braket{E}{\Psi_0}|^4\right] = \int \! \text{d}U \left[\sum_{i=1}^D |\braket{i|U}{\Psi_0}|^4\right]  =  \frac{2}{D+1}
\end{align}
Finally, we obtain
\begin{align}
    \mathbb{E}_{H\sim \text{GUE}}\left[ \Vert  \rho^{(1)}_H(\tau) - \rho^{(1)}_H \Vert_F^2\right] = \frac{ 1}{ 4\tau^2} \int_{0}^{2\tau}  \! \! \text{d}\bar t \int_{0}^{2\bar t} \! \text{d}t'   \frac{K(t')-D} {D(D+1)} \sim \frac{D}{\tau^2} + \mathcal{O}(\tau^{-4})\label{eq:frob_dist_final}
\end{align}
Using the improved approximation~\cite{liu2018spectral} for the spectral form factor $K(\tau)$ and an asymptotic expansion of the occuring Bessel function \cite{abramowitz_stegun_1964} gives a $\tau^{-2}$ scaling of the squared Frobenius norm. We note that if the standard box approximation is used, Eq.~\eqref{eq:frob_dist_final} does not converge to 0. As we detail below, an alternative approach is to compute the convergence for the GUE ensemble for a finite dimension $D$. The finite $D$ expressions, in fact, do not require any regularization, and the large $D$ limit can be safely taken on the final result. 

\subsection{Convergence for GUE ensemble and finite dimension}
\label{subsec:conv}

Here we show how to study the convergence to the temporal ensemble for Hamiltonians from GUE ensembles of finite dimension $D$. These finite $D$ expressions do not suffer from the problems of the infinite-$D$ spectral form factor, which has to be properly regularized. Therefore, we can compute the distance from the temporal ensemble for finite $D$, and take the $D\rightarrow \infty$ limit at the end.

We are interested in the average of the following distance for $H$ belonging to the GUE ensemble:
\begin{equation}
    \mathrm{Tr} [(\rho^{(k)}_{\mathrm{Temp.}}(\tau)-\rho^{(k)}_{\mathrm{Temp.}}(\infty))^2]=\sum_{\{\alpha_j\}}\; \sum_{\{\beta_j\}\notin \mathrm{Perm}(\{\alpha_j\})}\mathrm{sinc}^2 \left(\sum_j(E_{\alpha_j}-E_{\beta_j})\tau\right)\prod_j |c_{\alpha_j}|^2|c_{\beta_j}|^2.
\end{equation}
Note that the sum is over the sequences of indices $\{\alpha_j\}, \{\beta_j\}$ under the condition that the two sequences are not related by a permutation. 
Let us consider, as a first example, the case $k=1$. We have:
\begin{equation}
    \mathbb{E}_{H \sim \text{GUE}}\left[\mathrm{Tr} [(\rho^{(1)}_{\mathrm{Temp.}}(\tau)-\rho^{(1)}_{\mathrm{Temp.}}(\infty))^2]\right]=\sum_{i}\sum_{j\neq i } \mathbb{E}_{H \sim \text{GUE}}\left[\mathrm{sinc}^2\left((E_{i}-E_{j})\tau\right)\right]\mathbb{E}_{H \sim \text{GUE}}\left[ |c_{i}|^2|c_{j}|^2\right],
\end{equation}
where we used that the eigenvalues and the eigenvectors of $H$ are independently distributed. 

For the eigenvectors we get:
\begin{equation}
    \mathbb{E}_{H \sim \text{GUE}}\left[ |c_{i}|^2|c_{j}|^2\right]=\frac{1}{D(D+1)}.
\end{equation}

The convergence in time depends on the following average over the distribution of the eigenvalues:
\begin{align}
    \mathbb{E}_{H \sim \text{GUE}}\left[\mathrm{sinc}^2\left((E_{i}-E_{j})\tau\right)\right] &= \int \mathrm{d}E_1 \dots \mathrm{d}E_{D}  \,\mathrm{sinc}^2 \left[(E_1-E_2)\tau\right] p^D(E_1,\dots,E_{D})\nonumber\\
    &= \int \mathrm{d}u \,\mathrm{sinc}^2 \left(u \tau\right) q_{1|1}^D(u).
\end{align}
Here $p^D(E_1,\dots,E_{D})$ is the joint probability distribution of the (unordered) eigenvalues $E_1,\dots E_D$,
 and we defined $q_{1|1}^D(u)$ as the probability distribution of the energy difference $E_1-E_2$:

\begin{equation}
    q_{1|1}^D(u)=\int \int \mathrm{d}E_1 \dots \mathrm{d}E_{D}  \,\delta (E_1-E_2-u) p^D(E_1,\dots,E_{D}).
\end{equation}

Hence we get:

\begin{equation}
    \mathbb{E}_{H \sim \text{GUE}}\left[\mathrm{Tr} [(\rho^{(1)}_{\mathrm{Temp.}}(\tau)-\rho^{(1)}_{\mathrm{Temp.}}(\infty))^2]\right]=\frac{D-1}{D+1}\int \mathrm{d}u \,\mathrm{sinc}^2 \left(u\tau\right) q_{1|1}^D(u).
\end{equation}

Before evaluating this integral, let us consider the case $k=2$. By enumerating the possible sequences of indices, we get
\begin{align}
    \mathrm{Tr} [(\rho^{(2)}_{\mathrm{Temp.}}(\tau)-\rho^{(2)}_{\mathrm{Temp.}}(\infty))^2]&=\frac{D!}{(D-4)!}\mathbb{E}_{H \sim \text{GUE}}\left[\mathrm{sinc}^2\left((E_{1}+E_2-E_{3}-E_4)\tau\right)\right] \mathbb{E}_{H \sim \text{GUE}}\left[ |c_{1}|^2|c_{2}|^2 |c_{3}|^2|c_{4}|^2\right] \nonumber\\
    &+ 2\frac{D!}{(D-3)!}\mathbb{E}_{H \sim \text{GUE}}\left[\mathrm{sinc}^2\left((E_{1}+E_2-2E_{3})\tau\right)\right] \mathbb{E}_{H \sim \text{GUE}}\left[ |c_{1}|^2|c_{2}|^2 |c_{3}|^4\right]\nonumber\\
    &+ 4\frac{D!}{(D-3)!}\mathbb{E}_{H \sim \text{GUE}}\left[\mathrm{sinc}^2\left((E_{1}-E_{2})\tau\right)\right] \mathbb{E}_{H \sim \text{GUE}}\left[ |c_{1}|^4|c_{2}|^4\right]\\
    &+ D(D-1)\mathbb{E}_{H \sim \text{GUE}}\left[\mathrm{sinc}^2\left((2E_{1}-2E_{2})\tau\right)\right] \mathbb{E}_{H \sim \text{GUE}}\left[ |c_{1}|^4|c_{2}|^4\right]\nonumber\\
    &+ 4D(D-1)\mathbb{E}_{H \sim \text{GUE}}\left[\mathrm{sinc}^2\left((E_{1}-E_{2})\tau\right)\right] \mathbb{E}_{H \sim \text{GUE}}\left[ |c_{1}|^2|c_{2}|^6\right].\nonumber
\end{align}

Similar expressions can be obtained for generic $k$-th moments: in general, to study the time dependence we need to evaluate the probability distribution of linear combinations of energy eigenstates, i.e., quantities of the form
\begin{equation}
    q_{m_1\dots m_s|n_1\dots n_t}^D(u)\equiv \int \mathrm{d}E_1 \dots \mathrm{d}E_{D} \,\delta \left[\left(\sum_{i=1}^s m_i E_i\right)-\left(\sum_{j=1}^t n_j E_{s+j}\right)-u\right] p^D(E_1,\dots,E_D).
\end{equation}

The probability distribution $q_{m_1\dots m_s|n_1\dots n_t}^D(u)$ is then used to compute the following integral 

\begin{align}
    F^D_{m_1\dots m_s|n_1\dots n_t}(\tau)&=\int \mathrm{d}E_1 \dots \mathrm{d}E_{D} \,\mathrm{sinc}^2 \left[\left(\sum_{i=1}^s m_i E_i\right)\tau-\left(\sum_{j=1}^t n_j E_{s+j}\right)\tau\right] p^D(E_1,\dots,E_D)\\
    &= \int \mathrm{d}u \,\mathrm{sinc}^2 \left(u\tau\right) q_{m_1\dots m_s|n_1\dots n_t}^D(u).\label{eq:Fintegral}
\end{align}

\subsubsection{Evaluation of the integral}
To evaluate the integral in Eq.~\eqref{eq:Fintegral} we use the following property: the probability distribution $q_{m_1\dots m_s|n_1\dots n_t}^D(u)$ has, in general, the expression

\begin{equation}
\label{eq:qgen}
    q_{m_1\dots m_s|n_1\dots n_t}^D(u)=e^{-\gamma u^2}P_{m_1\dots m_s|n_1\dots n_t}^D(\gamma u^2),
\end{equation}
where
\begin{equation}
\gamma=\frac{D}{2}\left[\sum_{i=1}^s m_i^2+\sum_{j=1}^t n_j^2\right]^{-1},
\end{equation}
and $P_{m_1\dots m_s|n_1\dots n_t}^D(x)$ is a polynomial in $x$. This expression can be obtained from the joint probability distribution of $N$ levels, which can be computed with standard random matrix theory techniques \cite{mehta2004random}:
\begin{equation}
    p^D_N(E_1,\dots E_N)\equiv \int \mathrm{d}E_{N+1} \dots \mathrm{d}E_{D} \,  p^D(E_1,\dots E_D)=\frac{(D-N)!}{D!} \mathrm{det} \left(K_D(E_i, E_j) \right)|_{i,j=1}^N.
\end{equation}
Here the kernel $K_D(x,y)$ is defined as
\begin{equation}
K_D(x,y)=e^{-\frac{D}{4}(x^2+y^2)}\sum_{j=0}^{N-1}\frac{H_j(x\sqrt{D/2})H_j(y\sqrt{D/2})}{\sqrt{2\pi D}2^j j!},
\end{equation}
where $H_j(x)$ is the $j$-th Hermite polynomial. Using this expression, through a change of variables we can derive in general an expression of the form in Eq.~\eqref{eq:qgen}.

We can then evaluate the integral for each term of the polynomial $P_{m_1\dots m_s|n_1\dots n_t}^D(x)$ as follows:

\begin{equation}
    \int \mathrm{d}u \,\mathrm{sinc}^2 \left(u\tau\right) e^{-\gamma u^2}(\gamma u^2)^n=\begin{dcases}
\frac{\gamma^{1/2}}{\tau^2}\left(\frac{\sqrt{\pi}}{2^{2n-1}}(2n-3)!!+e^{-\tau^2/\gamma}O(\gamma^{-1}\tau^2(n-1))\right), & \text{\;for }n\ge 1\\
\frac{\sqrt{\pi}}{\tau}\left[\sqrt{\pi} \mathrm{Erf}(\gamma^{-1/2}\tau)-\frac{\gamma^{1/2}}{\tau}(1-e^{-\tau^2/\gamma})\right]. &\text{\;for } n=0
    \end{dcases}
\end{equation}
The leading term at long time hence has a power-law dependence $1/\tau^2$ if $P_{m_1\dots m_s|n_1\dots n_t}^D(0)=0$ and 
$1/\tau$ if $P_{m_1\dots m_s|n_1\dots n_t}^D(0)\neq 0$. For the case $k=1$, it is possible to prove (e.g., from an explicit calculation) that $P_{1|1}^D(0)=0$: this is the effect of level repulsion, as observed in Fig.~\ref{fig:app_finiteT}(d). For generic $P_{m_1\dots m_s|n_1\dots n_t}^D(x)$, we do not expect similar vanishing for $x=0$ [Fig.~\ref{fig:app_finiteT}(e)]. Therefore we get the following scaling at long times $\tau$:

\begin{equation}
\mathbb{E}_{H\sim \text{GUE}}\left[ \Vert  \rho^{(k)}_H(\tau) - \rho^{(k)}_H \Vert_F^2\right] \propto\begin{dcases}
    1/\tau^2 \qquad \text{for } k=1, \\
    1/\tau\qquad \text{for } k>1.
\end{dcases}
\end{equation}

\section{Finite-time temporal mutual information}
\label{app:finite_temporal_mutual_information}
In Sec.~\ref{subsubsec:finite_temp_mutual_info} we derived an expression for the mutual information of temporal ensembles of finite intervals $\tau$: $I(Z;T) \approx 1-\gamma - \frac{\sqrt{\pi}}{2 \sigma_H \tau}$, which we provide a full derivation of here.
\subsection{Bitstring correlation time}
We will first introdyce the ``bitstring correlation time" $t_c$, which characterizes the time scale over which the bitstring distribution $p(z,t)$ becomes independent (after rescaling their time-averaged values $p_\text{avg}(z)$). That is, how long does it take to form a new, independent speckle pattern? We will show that $t_c = \sigma_H^{-1}$, where $\sigma_H^2 = \tr(H^2)-\tr(H)^2$. To do this, we compute the correlator:
\begin{align}
  A_z(t) &=  \frac{\int dt'~p(z,t') p(z,t'+t)}{\left[\int dt' p(z,t')\right]^2} -1 = \frac{\abs{\sum_E \abs{\braket{z}{E}}^2 \abs{\braket{E}{\Psi_0}}^2 \exp(-i E t)}^2}{\abs{\sum_E \abs{\braket{z}{E}}^2 \abs{\braket{E}{\Psi_0}}^2}^2} + \cdots~,
  \label{eq:bitstring_autocorrelation}
\end{align}
where we use ``$\cdots$" to denote subleading terms. In order to make this calculation tractable, we make the following approximations:
\begin{align}
    &\sum_E \rightarrow \frac{1}{\sqrt{2\pi \sigma_H^2}}  \int_{-\infty}^{\infty} dE \exp(-\frac{(E-E_0)^2}{2\sigma_H^2})~,\\
    &\abs{\braket{z}{E}}^2 \rightarrow \exp(-\beta_z E)/Z_z~,\\
    &\abs{\braket{E}{\Psi_0}}^2 \rightarrow \exp(-\beta_\Psi E)/Z_\Psi~,
\end{align}
where $E_0 = \tr(H)$, $\beta_z,\beta_E$ are appropriate inverse temperatures of the state $\ket{z}$ and $\ket{\Psi_0}$, and $Z_z, Z_\Psi$ are their corresponding partition functions. The first approximation takes the density of states of the spectrum to be a Gaussian~\cite{keating2015spectra}, while the second and third approximations treat the energy populations of the states $\ket{z}$ and $\ket{\Psi_0}$ as Boltzmann distributions (Section~\ref{sec:eigenstate_PT}). We obtain 
\begin{align}
\sum_E \abs{\braket{z}{E}}^2 \abs{\braket{E}{\Psi_0}}^2 \exp(-i E t) &\rightarrow \frac{1}{\sqrt{2\pi \sigma_H^2}} \int_{-\infty}^{\infty} dE \exp(-\frac{(E-E_0)^2}{2\sigma_H^2}) \exp(-(\beta_z+\beta_E+i t)E) \\
&= \exp((\beta_z+\beta_E+i t)^2 \sigma_H^2/2) \label{eq:bitstring_autocorr_function}  
\end{align}
Taking the ratio of the absolute values of Eq.~\eqref{eq:bitstring_autocorr_function} at nonzero $t$ and $t =0 $ gives the desired result.

\subsection{Temporal mutual information}
We can then derive the observed scaling of the mutual information
\begin{align}
    I(Z; T) = H(Z) - H(Z|T) &= -\sum_z \bar{p}_{[t,t+\tau]}(z) \log_2(\bar{p}_{[t,t+\tau]}(z)) + \int_t^{t+\tau} dt' \frac{1}{\tau} \sum_z p_{t'}(z) \log_2(p_{t'}(z))~,
\end{align}
where $\bar{p}_{[t,t+\tau]}(z) = \tau^{-1} \int_t^{t+\tau} dt' p_{t'}(z)$ is the outcome probability averaged over an interval. Using the facts that: (i) $p_{t'}(z) \approx p_\text{avg}(z) \tilde{p}$, where $\tilde{p} \sim \text{PT}(1)$, and (ii) the bitstring correlation time is $t_c = \sigma_H^{-1}$, we approximate $\bar{p}/p_\text{avg}$ to be the average of $n = \tau/t_c$ \textit{independent PT variables}. The average of $n$ Porter-Thomas (exponential) variables follows an \textit{Erlang distribution} with mean 1~\cite{shaw2024universal}. Following convention, we denote this distribution as $\bar{p}/p_\text{avg} \sim \text{Erlang}(1/n,n)$, with PDF
\begin{equation}
    P_{\text{Erlang}(1/n, n)}(x) dx = \frac{ (nx)^{n-1} \exp(-n x)}{(n-1)!} n dx~.
\end{equation}
This approximation allows us to calculate the entropy of $\bar{p}$:
\begin{align}
     &-\sum_z \bar{p}_{[t,t+\tau]}(z) \log_2(\bar{p}_{[t,t+\tau]}(z)) \approx - \mathbb{E}[\bar{p}] \sum_z p_\text{avg}(z) \log_2(p_\text{avg}(z)) - \mathbb{E}[\bar{p} \log_2(\bar{p})] \sum_z p_\text{avg}(z)\\
     & = H_{p_\text{avg}} +\log_2(n) - \Psi(n+1) \overset{n\rightarrow \infty}{\longrightarrow} (H_{p_\text{avg}} - \frac{1}{2n})/\ln 2 = \left(H_{p_\text{avg}} - \frac{1}{2 \sigma_H \tau}\right)/\ln 2 
\end{align}
where $\Psi(x)$ is the digamma function, related to the Gamma function $\Gamma(x)$ by $\Psi(x) = \Gamma'(x)/\Gamma(x)$. A more careful analysis modifies the effective number $n$ of the Erlang distribution by considering the second moment of $\bar{p}$:
\begin{align}
    \mathbb{E}_t[\bar{p}(z,t)^2] &= \frac{1}{\tau^2} \int_0^\tau dt_1 \int_0^\tau dt_2 p(z,t_1) p(z,t_2) \\
    &\approx 1 + \frac{1}{\tau} \int_{-\infty}^{\infty} d t' \exp(-\sigma_H^2 t'^2) = 1 + \frac{\sqrt{\pi}}{\sigma_H \tau} \equiv 1 + \frac{1}{n_\text{eff}}~,
\end{align}
i.e.~the effective parameter $n_\text{eff}$ of the Erlang distribution is $n_\text{eff} = \sigma_H \tau/\sqrt{\pi}$.
This gives the overall temporal mutual information for a finite time interval:
\begin{align}
    I(Z;T) \approx \left(1-\gamma - \frac{\sqrt{\pi}}{2 \sigma_H \tau}\right)/\ln 2~,
\end{align}
which agrees with numerical results [inset of Fig.~\ref{fig:temporal_ensemble}(d)].

\section{$k$-th moments of the Scrooge ensemble}
\label{app:Scrooge_kth_moment}
Here, we derive our expression for the $k$-th moments of the Scrooge ensemble [Eq.~\eqref{eq:scrooge_kth_moment}] $\text{Scrooge}[\rho]$:
\begin{equation}
\rho^{(k)}_{\text{Scr.},\pmb{m}}=\left(\prod_i \lambda^{-1}_i\right)\frac{\partial^{k} \Lambda_k}{\partial \mu_{m_1}\dots \partial \mu_{m_k}} {\Bigg\vert}_{\mu_i = \lambda_i^{-1}}~,  
\end{equation}
where 
\begin{align}
    \Lambda_k&\equiv \sum_{j=1}^{D} \frac{\mu^{k-2}_j\ln \mu_j}{\prod_{k\neq j} (\mu_j-\mu_k)},
\end{align}
and $\{\lambda_i\}$ are the eigenvalues of the given state $\rho$.

As an explicit example, we consider the simplest non-trivial parameters, $k=2$ and $D=2$. Let the eigenstates of $\rho$ be $|\uparrow\rangle$ and $|\downarrow\rangle$, and the eigenvalues be $\lambda,1-\lambda$. Here, the second moment $\rho^{(2)}$ is a $4\times 4$ matrix. However, in the (doubled) eigenbasis of $\rho$, only 6 entries are non-zero: $\langle \uparrow \uparrow |\rho^{(2)}| \uparrow \uparrow\rangle \equiv \rho^{(2)}_{\text{Scr.}, 11} $, $\langle \uparrow \downarrow |\rho^{(2)}| \uparrow \downarrow\rangle = \langle \uparrow \downarrow |\rho^{(2)}| \downarrow \uparrow\rangle = \langle \downarrow \uparrow |\rho^{(2)}| \uparrow \downarrow\rangle = \langle \downarrow \uparrow |\rho^{(2)}| \downarrow \uparrow\rangle \equiv \rho^{(2)}_{\text{Scr.}, 12}$, and $\langle \downarrow \downarrow |\rho^{(2)}| \downarrow \downarrow\rangle \equiv \rho^{(2)}_{\text{Scr.}, 22}$, where
\begin{align}
    \rho^{(2)}_{\text{Scr.}, 11} = \frac{\lambda ^2 \left(2 \lambda  (4 \lambda -5)-2 (1-\lambda )^2 \log \left(\frac{1-\lambda}{\lambda }\right)+3\right)}{(2 \lambda -1)^3}\\
    \rho^{(2)}_{\text{Scr.}, 12} = \frac{(1-\lambda) \lambda  \left(2 \lambda +2 (1-\lambda) \lambda  \log \left(\frac{1-\lambda}{\lambda}\right)-1\right)}{(2 \lambda -1)^3}\\
    \rho^{(2)}_{\text{Scr.}, 22}= -\frac{(1-\lambda)^2 \left(8 \lambda ^2+2 \lambda ^2 \log \left(\frac{1-\lambda}{\lambda }\right)-6 \lambda +1\right)}{(2 \lambda -1)^3}
\end{align}

\subsection{Second moment of the Scrooge ensemble}
First, we explicitly compute the second moment of the Scrooge ensemble for a $D$-dimensional density matrix $\rho$. We write $\rho$ using its eigenvalues and eigenvectors $\rho=\sum_{m=1}^D \lambda_m \ket{s_m}\bra{s_m}$. We next use the definition of the Scrooge ensemble as the $\rho$-distortion of the Haar ensemble to write the second moment of the Scrooge ensemble as
\begin{align}
    \rho^{(2)}&=\int \frac{\Big(\sqrt{D\rho}\ket{\Psi}\bra{\Psi}\sqrt{D\rho}\Big)^{\otimes 2}}{\lVert\sqrt{D\rho}\ket{\Psi}\rVert^2}P_{\mathrm{Haar}}(\Psi)d\Psi\\ &=\sum_{m,n,m',n'=1}^D\int \frac{D \sqrt{\lambda_m\lambda_n\lambda_{m'}\lambda_{n'}}\braket{s_n}{\Psi}\braket{s_m}{\Psi}\braket{\Psi}{s_{n'}}\braket{\Psi}{s_{m'}}}{\sum_{k=1}^D \lambda_k^2 |\braket{s_k}{\Psi}|^2} P_{\mathrm{Haar}}(\Psi)d\Psi \ket{s_n}\ket{s_m}\bra{s_{n'}}\bra{s_{m'}}.
\end{align}
From this equation, we clearly see that the only non-zero elements in the basis of the eigenvectors of $\rho$ are the ones with $n=n'$, $m=m'$ and/or $n=m'$, $m=n'$. All the other terms are zero because of the average over the complex phases. We can then focus, without loss of generality, on the following matrix element of $\rho^{(2)}$
\begin{align}
    \rho^{(2)}_{\text{Scr.}, nm}&\equiv \bra{s_n}\bra{s_m}\rho^{(2)}\ket{s_n}\ket{s_m}=\bra{s_n}\bra{s_m}\rho^{(2)}\ket{s_m}\ket{s_n}\\
    &=\int \frac{D \lambda_m \lambda_n |\Psi_n|^2|\Psi_m|^2}{\sum_{k=1}^D \lambda_k^2 |\Psi_k|^2} \frac{(D-1)!}{\pi^D} \delta(1-\sum_{j=1}^D |\Psi_j|^2) d^2\Psi_1\dots d^2\Psi_D.
\end{align}
where we used the short notation $\Psi_j=\braket{s_j}{\Psi}$ and the explicit expression of $P_\mathrm{Haar}$. Since the integrand only depends on the absolute values of the complex variables $\Psi_1\dots \Psi_D$ we can perform the integral over the phases and then change variable to $x_j=\lambda_j |\Psi_j|^2$, obtaining

\begin{equation}
    \label{eq:rho2mel}
    \rho^{(2)}_{\text{Scr.}, nm}=\left(\prod_{j=1}^D \lambda_j^{-1}\right)D!\int_0^\infty dx_1\dots \int_0^\infty dx_D \frac{ x_n x_m}{\sum_{k=1}^D x_k}  \delta(1-\sum_{j=1}^D \lambda_j^{-1}x_j) .
\end{equation}
To remove the delta function, it is useful to define the following quantity
\begin{equation}
    f^{(2)}_{nm}(t) =\left(\prod_{j=1}^D \lambda_j^{-1}\right)D!\int_0^\infty dx_1\dots \int_0^\infty dx_D \frac{ x_n x_m}{\sum_{k=1}^D x_k}  \delta(t-\sum_{j=1}^D \lambda_j^{-1}x_j) 
\end{equation}
such that $\rho^{(2)}_{\text{Scr.}, nm}=f^{(2)}_{nm}(1)$, and the Laplace transform
\begin{equation}
    \tilde f^{(2)}_{nm}(s)= \int_0^\infty f^{(2)}_{nm}(t)e^{-st}dt =\left(\prod_{j=1}^D \lambda_j^{-1}\right)D!\int_0^\infty dx_1\dots \int_0^\infty dx_D \frac{ x_n x_m}{\sum_{k=1}^D x_k}  \exp\left({-s\sum_{j=1}^D \lambda_j^{-1}x_j}\right).
\end{equation}
The change of variables $x_j\rightarrow x_j/s$ yields

\begin{equation}
    \tilde f^{(2)}_{nm}(s) =\left(\prod_{j=1}^D \lambda_j^{-1}\right)D!s^{-D-1}\int_0^\infty dx_1\dots \int_0^\infty dx_D \frac{ x_n x_m}{\sum_{k=1}^D x_k}  \exp\left({-\sum_{j=1}^D \lambda_j^{-1}x_j}\right).
\end{equation}
Since the inverse Laplace transform of $D!s^{-D-1}$ is $t^D$, we find 
\begin{equation}
    \rho^{(2)}_{\text{Scr.}, nm}=f^{(2)}_{nm}(1) =\left(\prod_{j=1}^D \lambda_j^{-1}\right)\int_0^\infty dx_1\dots \int_0^\infty dx_D \frac{ x_n x_m}{\sum_{k=1}^D x_k}  \exp\left({-\sum_{j=1}^D \lambda_j^{-1}x_j}\right).
\end{equation}
Note that these steps allowed us to get from the integral over the Haar ensemble of normalized states in Eq. (\ref{eq:rho2mel}) to an integral over the unnormalized Gaussian ensemble.

We now define
\begin{equation}
\label{eq:Lambda2_0}
    \Lambda_2 =  \int_0^\infty dx_1\dots \int_0^\infty dx_D \frac{ 1}{\sum_{k=1}^D x_k}  \exp\left({-\sum_{j=1}^D \lambda_j^{-1}x_j}\right)
\end{equation}
such that we can now write all the $ \rho^{(2)}_{\text{Scr.}, nm}$ from this single ``partition function":
\begin{equation}
\label{eq:rho2elements}
\rho^{(2)}_{\text{Scr.}, nm} = \left(\prod_{j=1}^D \lambda_j^{-1}\right) \frac{\partial}{\partial \lambda_n^{-1}}\frac{\partial}{\partial \lambda_n^{-1}} \Lambda_2 .
\end{equation}

To find the expression for $\Lambda_2$ in a closed form, we introduce an additional integration variable $z$
\begin{equation}
    \Lambda_2  = \int_0^\infty dx_1\dots \int_0^\infty dx_D \int_0^\infty dz\,   \exp\left[{-\sum_{j=1}^D (\lambda_j^{-1}+z)x_j}\right].
\end{equation}
The equivalence with Eq. (\ref{eq:Lambda2_0}) is evident upon performing the integral in $z$. Since the integral is now factorized, we can easily compute the integral in the variables $x_1,\dots x_D$, obtaining
\begin{equation}
\label{eq:L2A}
    \Lambda_2  = \int_0^\infty dz \prod_{j=1}^D (\lambda_j^{-1}+z)^{-1}=\int_0^\infty dz \sum_{j=1}^D \frac{A_j}{\lambda_j^{-1}+z}
\end{equation}
where 
\begin{equation}
    A_j=\prod_{k\neq j} (\lambda_k^{-1}-\lambda_j^{-1})^{-1}.
\end{equation}
We can now perform the integral in $z$ and we get
\begin{equation}
    \Lambda_2 =\lim_{M\rightarrow \infty}\sum_{j=1}^D A_j[\ln(\lambda_j^{-1}+M)-\ln (\lambda_j^{-1})].
\end{equation}
Using the property $\sum_j A_j=0$, we find
\begin{equation}
  \lim_{M\rightarrow \infty}  \sum_j A_j \ln (\lambda_j^{-1}+M)=\lim_{M\rightarrow \infty} \left[\sum_j A_j \ln (M)+ \sum_j A_j \ln \left(1+\frac{\lambda_j^{-1}}{M}\right)\right]=0.
\end{equation}

We then get the final expression for $\Lambda_2$
\begin{equation}
    \Lambda_2=\sum_{j=1}^D A_j \ln (\lambda_j)=\sum_{j=1}^D \left[\prod_{k\neq j} (\lambda_k^{-1}-\lambda_j^{-1})^{-1}\right] \ln (\lambda_j).
    \label{eq:second_moment_partition_function}
\end{equation}
This expression can be straightforwardly used to evaluate the matrix elements of the second moment of the Scrooge ensemble using Eq. (\ref{eq:rho2elements}).

\subsection{Second moment in the limit of small purity of $\rho$}
Here, show that in the limit that $\rho$ has small purity, the second moment of the Scrooge ensemble has a simple form $\rho^{(2)}\simeq(1+\mathbb S)\rho^{\otimes2}$, equivalent to the product form for the temporal ensemble~(Theorem~\ref{thm:temp_ens_product_form}).
To do so, we compute an approximate formula for $\Lambda_2$ in the case of a state $\rho$ with small purity $\mathrm{Tr}[\rho^2]=\sum_k \lambda_k^2\ll 1$. Using that $\sum_k \lambda_k^n\le \sum_k \lambda_k^2\ll 1$ for every $n\ge 2$, from Eq. (\ref{eq:L2A}) we get:

\begin{align}
    \Lambda_2&=\left(\prod_{j=1}^{D}\lambda_j\right)\int_0^{\infty} \mathrm{d}z\prod_{k=1}^{D}\left(1+\lambda_k z\right)^{-1}\\
    &=\left(\prod_{j=1}^{D}\lambda_j\right)\int_0^{\infty} \mathrm{d}z\left\{1+\left(\sum_{k=1}^D \lambda_k\right)z+\frac{1}{2}\left[\left(\sum_{k=1}^D\lambda_k\right)^2-\sum_{k=1}^D \lambda_k^{2}\right]z^2+\dots\right\}^{-1}\\
    &\approx \left(\prod_{j=1}^{D}\lambda_j\right)\int_0^{\infty} \mathrm{d}z\left[\sum_{n=0}^{\infty}\frac{1}{n!}\left(\sum_{k=1}^D \lambda_k\right)^n z^n\right]^{-1}\\
    &= \left(\prod_{j=1}^{D}\lambda_j\right)\int_0^{\infty} \mathrm{d}z\, \exp\left(-z\sum_{k=1}^D \lambda_k\right)
    = \left(\prod_{j=1}^{D}\lambda_j\right)\left(\sum_{k=1}^D \lambda_k\right)^{-1}
\end{align}

From this equation we obtain
\begin{equation}
    \rho_{nm}^{(2)} \approx \lambda_n\lambda_m[(1+\delta_{nm})(1-\lambda_n-\lambda_m)+2\lambda_m\lambda_n]
\end{equation}
and, assuming that the Schmidt coefficients are small we have ${\rho_{\text{Scr.}, nm}^{(2)}\simeq\lambda_n\lambda_m(1+\delta_{nm})}$, which implies $\rho^{(2)}\simeq(1+\mathbb S)\rho^{\otimes2}$.

\subsection{Higher moments}
We now generalised the above derivation to compute the matrix elements of the $k$-th moment ($k<D$). The non-zero matrix elements in the eigenbasis of $\rho$ are of the form 

\begin{equation}
    \rho_{m_1,\dots m_k}^{(k)}  \equiv \bra{s_{m_1}}\dots\bra{s_{m_k}}\rho^{(2)}\ket{s_{m_1'}}\dots\ket*{s_{m_k'}}
\end{equation}
where $m_1', \dots, m_k'$ is a permutation of the indices $m_1,\dots m_k$. With the same procedure used above for the second moment, we get

\begin{equation}
    \rho_{m_1,\dots m_k}^{(k)}=\left(\prod_{j=1}^D \lambda_j^{-1}\right)\int_0^\infty dx_1\dots \int_0^\infty dx_D \frac{ x_{m_1}\dots x_{m_k}}{(\sum_{j=1}^D x_j)^{k-1}}  \exp\left({-\sum_{j=1}^D \lambda_j^{-1}x_j}\right).
\end{equation}
We can compute the this matrix element for arbitrary indices using the generating function $\Lambda_k$ here defined:

\begin{equation}
    \rho_{m_1,\dots m_k}^{(k)}=\left(\prod_j \lambda_j^{-1}\right)\frac{\partial^{k} \Lambda_k}{\partial \lambda^{-1}_{m_1}\dots \partial \lambda^{-1}_{m_k}},
\end{equation}

\begin{equation}
    \Lambda_k=\int_0^\infty dx_1\dots \int_0^\infty dx_D \frac{(-1)^k}{(\sum_{j=1}^D x_j)^{k-1}}  \exp\left({-\sum_{j=1}^D \lambda_j^{-1}x_j}\right)
\end{equation}
Simlarly to the case $k=2$, the integral can be computed by introducing an additional integration variable:
\begin{align}    
    \Lambda_k&=\int_0^\infty dx_1\dots \int_0^\infty dx_D \,\exp\left({-\sum_{j=1}^D \lambda^{-1}_j x_j}\right)\int_0^\infty \mathrm{d}z \frac{(-1)^k z^{k-2}}{(k-2)!}\, \exp\left({-z\sum_{j} x_j}\right)\\
    &=\int_0^\infty \mathrm{d} z\, \frac{(-1)^k z^{k-2}}{(k-2)!}\prod_{j=1}^{D} (\lambda_j^{-1}+z)^{-1}\\
    &=\int_0^\infty  \mathrm{d} z\, \frac{(-1)^k z^{k-2}}{(k-2)!}\sum_{j=1}^{D} \frac{A_j}{\lambda_j^{-1}+z}\\
    \label{eq:Lambdak}
    &=\sum_{j=1}^{D} \lambda_j^{-k
    +2} A_j \ln \lambda_j.
\end{align}
This derivation holds for $k>1$, but we can define in general $\Lambda_k$ using Eq. (\ref{eq:Lambdak}).
Then the subentropy derived in Ref.~\cite{jozsa1994lower} can be written as
\begin{equation}
    Q(\rho)=-\left(\prod_{j=1}^D \lambda_j^{-1}\right)\sum_{j=1}^D A_j \lambda_j^{2}\ln \lambda_j=-\left(\prod_{j=1}^D \lambda_j^{-1}\right)\Lambda_0
\end{equation}

\section{Proof of Lemma~\ref{lemma:temp_ens_of_proj_st}: $k$-th moments of temporal ensemble of projected states}
\label{app:proof_temp_ens_of_proj_st}
Here, we provide the proof for Lemma~\ref{lemma:temp_ens_of_proj_st}.
\setcounter{lemma}{0}
\begin{lemma}[Temporal $k$-th moment of the projected state]
Consider the temporal ensemble comprised of the (unnormalized) projected states $\ket*{\tilde{\Psi}(x_B,t)} \equiv (\mathbb{I}_A \otimes \bra{x_B})\ket{\Psi(t)}$. The $k$-th moment of this temporal ensemble is equal to the $k$-th moment of the corresponding unnormalized Scrooge ensemble, up to small corrections:
\begin{align}
    &\rho^{(k)}_\text{Temp. Proj.}(x_B) \equiv \mathbb{E}_t \left[\ketbra*{\tilde{\Psi}(x_B,t)}{\tilde{\Psi}(x_B,t)}^{\otimes k} \right]= p_d(x_B)^k\tilde{\rho}^{(k)}_{\tilde{\text{S}}\text{cr.}}[\bar{\rho}(x_B)] + O(\Delta_\beta(x_B)p_d(x_B)^{k})~,\nonumber 
\end{align}
where $\Delta_\beta(x_B) \equiv  \tr[(\mathbb{I}_A\otimes \ketbra{x_B}{x_B})^{\otimes 2}\rho_d^{(2)}]/p_d(x_B)^2$ is a correction term, that is typically exponentially small in system size. To see this, note that $\Delta_\beta(x_B)$ can be interpreted as the purity of a probability distribution over $D$ elements. Specifically, we define the quantity $q_{x_B}(E) \equiv \abs{c_E}^2 \bra{E}(\mathbb{I}_A\otimes \ketbra{x_B}{x_B})\ket{E} / p_d(x_B)$ which is a probability distribution because it is non-negative and satisfies $\sum_E q_{x_B}(E) = 1$. $\Delta_\beta(x_B) = \sum_E q_{x_B}(E)^2$ is the purity of $q_{x_B}(E)$, which we expect to be typically $O(D^{-1})$
\end{lemma}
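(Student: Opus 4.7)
The plan is to compute the temporal $k$-th moment of the unnormalized projected state $\ket{\tilde\Psi(x_B,t)}$ directly from Theorem~\ref{thm:temp_ens_app}, recognize the parallel product form of the $k$-th moment of the unnormalized Scrooge ensemble $\tilde{\text{S}}\text{crooge}[\bar\rho(x_B)]$ given in Eq.~\eqref{eq:distorted_Gaussian_kth_moment_result}, and bound the residual through a refined multiset-counting argument. The key improvement over the proof of Theorem~\ref{thm:temp_ens_product_form} (Appendix~\ref{app:temporal_ensemble_error}) is that the projection by $\mathbb{I}_A \otimes \bra{x_B}$ before taking the trace norm replaces the generic factor $\tr(\rho_d^2)$ with the physically natural quantity $\Delta_\beta(x_B)$.

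Concretely, since $\ketbra{\tilde\Psi(x_B,t)}{\tilde\Psi(x_B,t)}^{\otimes k} = (\mathbb{I}_A \otimes \bra{x_B})^{\otimes k} \ketbra{\Psi(t)}{\Psi(t)}^{\otimes k} (\mathbb{I}_A \otimes \ket{x_B})^{\otimes k}$, Theorem~\ref{thm:temp_ens_app} yields
\begin{equation*}
\rho^{(k)}_\text{Temp. Proj.}(x_B) = \sum_{\bm{\alpha} \in M_k(D)} \sum_{\sigma \in \pi(\bm{\alpha})} \bigotimes_{j=1}^k |c_{\alpha_j}|^2 (\mathbb{I}_A \otimes \bra{x_B})\ketbra{E_{\alpha_j}}{E_{\sigma(\alpha_j)}}(\mathbb{I}_A \otimes \ket{x_B}).
\end{equation*}
Using $\bar\rho(x_B) = p_d(x_B)^{-1}\sum_E |c_E|^2 (\mathbb{I}_A \otimes \bra{x_B})\ketbra{E}{E}(\mathbb{I}_A \otimes \ket{x_B})$, Eq.~\eqref{eq:distorted_Gaussian_kth_moment_result} gives the analogous expression for $p_d(x_B)^k \tilde\rho^{(k)}_{\tilde{\text{S}}\text{cr.}}[\bar\rho(x_B)]$ with an identical summand but $\sigma$ ranging over the full $S_k$. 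The difference is thus supported on multisets $\bm{\alpha}$ with at least one repeated index, weighted by the permutations in $S_k \setminus \pi(\bm{\alpha})$.

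I will then partition this residual into degeneracy classes $s$ as in Eq.~\eqref{eq:sum_over_classes}, following the trace-norm-to-trace reduction of Appendix~\ref{app:temporal_ensemble_error}. The leading class, with a single pair of coinciding indices, collapses two tensor factors into $\sum_E |c_E|^4 (\mathbb{I}_A \otimes \bra{x_B})^{\otimes 2}\ketbra{E,E}{E,E}(\mathbb{I}_A \otimes \ket{x_B})^{\otimes 2}$, whose trace is precisely $\Delta_\beta(x_B)\,p_d(x_B)^2$ by the definition of $\Delta_\beta(x_B)$. Each of the remaining $k-2$ factors contributes a trace of $p_d(x_B)$, since $\sum_E |c_E|^2 (\mathbb{I}_A \otimes \bra{x_B})\ketbra{E}{E}(\mathbb{I}_A \otimes \ket{x_B}) = p_d(x_B)\bar\rho(x_B)$. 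Combined with a $k$-dependent combinatorial prefactor counting the cosets of $\pi(\bm{\alpha})$ in $S_k$, this class gives the claimed $O(\Delta_\beta(x_B)\,p_d(x_B)^k)$ contribution.

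The main obstacle will be the bookkeeping for higher-order degeneracy classes (multiple coinciding pairs, triples, and longer coincidences), which scale as $O(\Delta_\beta(x_B)^m p_d(x_B)^k)$ for $m \geq 2$ and must be absorbed into the leading term using $\Delta_\beta(x_B) \le 1$. Making this tight requires a class-by-class generalization of the bounds in Eqs.~\eqref{eq:temporal_product_form_bound}--\eqref{eq:temporal_product_form_final_bound}; the key structural fact that makes this manageable is that the projection $\mathbb{I}_A \otimes \bra{x_B}$ factors through cleanly to produce a trace of $p_d(x_B)$ on each independent energy index and of $\Delta_\beta(x_B)\,p_d(x_B)^2$ on each coinciding pair, so that the partition structure is inherited but with $D$-independent, $k$-dependent prefactors.
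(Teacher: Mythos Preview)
Your proposal is correct and follows essentially the same route as the paper's proof in Appendix~\ref{app:proof_temp_ens_of_proj_st}: project the multiset/class decomposition of Appendix~\ref{app:temporal_ensemble_error} through $\mathbb{I}_A\otimes\bra{x_B}$, identify the product form with $p_d(x_B)^k\,\tilde\rho^{(k)}_{\tilde{\text{S}}\text{cr.}}[\bar\rho(x_B)]$, and observe that each coinciding pair in a degenerate class produces a trace of $\tr[(\mathbb{I}_A\otimes\ketbra{x_B})^{\otimes 2}\rho_d^{(2)}]=\Delta_\beta(x_B)\,p_d(x_B)^2$ rather than $\tr(\rho_d^2)$. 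One small caveat: higher classes with a single longer coincidence (e.g.\ a triple) do not scale as $\Delta_\beta(x_B)^m$ with $m\ge 2$ but still as $\Delta_\beta(x_B)$, since $\sum_E q_{x_B}(E)^n \le \sum_E q_{x_B}(E)^2$ for $n\ge 2$; this does not affect your argument, as you already absorb all subleading classes into the leading $O(\Delta_\beta(x_B)\,p_d(x_B)^k)$ term with a $k$-dependent prefactor.
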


\begin{proof}
  This is a simple application of an intermediate step from Appendix~\ref{app:temporal_ensemble_error}, which states that
    \begin{align}
        \left \Vert \bar{\rho}(x_B)^{\otimes k} \sum_{\sigma \in S_k} \text{Perm}(\sigma) - \rho^{(k)}_\text{Temp. Proj.} \right \Vert_* 
        &= \left \Vert  \sum_{s \subset M_k(D)} \bar{\rho}(x_B)^{\otimes k}\left(1-\frac{\abs{\pi(s)}}{\abs{S_k}}\right) P_s \sum_{\sigma \in S_k}\text{Perm}(\sigma) \right \Vert_*  \nonumber\\
         \leq  C_k \sum_{\overset{s \subset M_k(D)}{s \neq ()}} \tr\left[\bar{\rho}(x_B)^{\otimes (k-2)} \otimes \bar{\rho}^{(2)}(x_B) \sum_{\sigma \in S_k}\text{Perm}(\sigma) \right ] 
        &  \leq (C_k k!) \text{tr}[\bar{\rho}^{(2)}(x_B)] p_d(x_B)^{k-2} \equiv C_k' \Delta_\beta(x_B)p_d(x_B)^{k}~.
    \end{align}  
\end{proof}

\section{Proof of Theorem~\ref{thm:moment_of_general_projected_ensemble}: weighted $k$-th moments of projected ensemble}
\label{app:gen_Scrooge_proof}
Here, we prove Theorem~\ref{thm:moment_of_general_projected_ensemble}, 
\setcounter{theorem}{3}
\begin{theorem}
    At \textit{typical} (late) times $t$, the weighted $k$-th moment $\tilde{\rho}^{(k)}_\text{Proj.}(t)$ of the unnormalized projected ensemble is equal to the $k$-th moment of the generalized Scrooge ensemble:
    \begin{align}
       \tilde{\rho}^{(k)}_\text{Proj.}(t) &\equiv \sum_{x_B} \frac{\ketbra*{\tilde{\Psi}(x_B,t)}{\tilde{\Psi}(x_B,t)}^{\otimes k}}{p_d(x_B)^{k-1}} \label{eq:kth_moment_proof}\\
       &= \sum_{x_B} p_d(x_B) \tilde{\rho}^{(k)}_{\tilde{\text{S}}\text{cr.}}[\bar{\rho}(x_B)]+O(\Delta^{1/2}_\beta) ~,\label{eq:general_projected_moment}
    \end{align}
    where $\Delta_\beta \equiv \sum_{x_B} p_d(x_B) \Delta_\beta(x_B)$ the weighted mean of $\Delta_\beta(x_B)$.
\end{theorem}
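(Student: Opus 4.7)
The strategy has two stages: show that the temporal average of $\tilde{\rho}^{(k)}_\text{Proj.}(t)$ equals the generalized Scrooge $k$-th moment up to $O(\Delta_\beta)$, and then bound the temporal fluctuations about this average by $O(\Delta_\beta^{1/2})$ in an appropriate norm, so that Chebyshev's inequality yields the claim at typical times. Decompose $\tilde{\rho}^{(k)}_\text{Proj.}(t) = \bar{\rho}^{(k)}_\text{avg} + \delta(t)$ with $\bar{\rho}^{(k)}_\text{avg} \equiv \mathbb{E}_t[\tilde{\rho}^{(k)}_\text{Proj.}(t)]$. By linearity of the time average combined with Lemma~\ref{lemma:temp_ens_of_proj_st} applied outcome-by-outcome,
\begin{equation}
\bar{\rho}^{(k)}_\text{avg} = \sum_{x_B} p_d(x_B)\, \tilde{\rho}^{(k)}_{\tilde{\text{S}}\text{cr.}}[\bar{\rho}(x_B)] + O(\Delta_\beta),
\end{equation}
since each $x_B$-term contributes a Lemma~1 correction of size $p_d(x_B)\cdot O(\Delta_\beta(x_B))$ whose sum is exactly $\Delta_\beta$. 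Note that the precise weighting $p_d(x_B)^{1-k}$ in the definition of $\tilde{\rho}^{(k)}_\text{Proj.}(t)$ is chosen precisely so that the leading factors of $p_d(x_B)^k$ from Lemma~1 are reduced to $p_d(x_B)$, making the sum a genuine probability-weighted mixture of Scrooge moments.

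To bound $\delta(t)$, I would compute $\mathbb{E}_t[\Vert\delta(t)\Vert_F^2] = \mathbb{E}_t[\mathrm{tr}(\tilde{\rho}^{(k)}_\text{Proj.}(t)^2)] - \mathrm{tr}(\bar{\rho}^{(k),2}_\text{avg})$ and apply Chebyshev. Expanding $\mathrm{tr}(\tilde{\rho}^{(k)}_\text{Proj.}(t)^2) = \sum_{x_B,x_B'} p_d(x_B)^{1-k} p_d(x_B')^{1-k} |\langle \tilde{\Psi}(x_B,t)|\tilde{\Psi}(x_B',t)\rangle|^{2k}$ reduces $\mathbb{E}_t$ to a temporal average effectively built from $2k$ copies of $\ket{\Psi(t)}$, which by Theorem~\ref{thm:temp_ens} and the twirling identity of Appendix~\ref{app:Ham_twirling} decomposes into a sum over $\sigma \in S_{2k}$. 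Those permutations that factor into two independent $k$-cycles acting within each block reproduce $\mathrm{tr}(\bar{\rho}^{(k),2}_\text{avg})$ and cancel; the surviving ``cross'' permutations pair at least one index across the two blocks, forcing a coincidence of energy labels and extracting a factor of $\rho_d^{(2)} \equiv \sum_E |c_E|^4 \ketbra{E,E}{E,E}$. Contracting with $(\mathbb{I}_A \otimes \ketbra{x_B}{x_B})^{\otimes 2}$ and summing over $x_B, x_B'$ then produces $\Delta_\beta$-type factors, so $\mathbb{E}_t[\Vert\delta(t)\Vert_F^2] = O(\Delta_\beta)$; Chebyshev with threshold $\epsilon \sim \Delta_\beta^{1/2-\eta}$ yields $\Vert\delta(t)\Vert_F = O(\Delta_\beta^{1/2})$ at all but a vanishing fraction of times.

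The main obstacle is the combinatorial bookkeeping of the variance step: of the $(2k)!$ permutations in $S_{2k}$, one must identify precisely which are ``disconnected'' (and hence cancel against $\mathrm{tr}(\bar{\rho}^{(k),2}_\text{avg})$) versus ``connected'' (and hence suppressed by at least one factor of $\rho_d^{(2)}$), and must verify that the residual $x_B$-projector contractions indeed yield $\Delta_\beta(x_B)$-type weights. This is essentially the analysis of Appendix~\ref{app:temporal_ensemble_error} carried out at order $2k$ in the presence of the additional $x_B$-projectors, and one has to rule out pathological permutations that avoid producing a $\rho_d^{(2)}$ factor. A secondary subtlety is lifting the Chebyshev ``all but a small fraction of $t$'' statement into the theorem's cleaner ``typical (late) times'' phrasing, which I would address by combining with the finite-$\tau$ convergence results of Sec.~\ref{subsec:finite_temp_temp_ens}. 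Finally, if one wants the $O(\Delta_\beta^{1/2})$ bound in trace norm rather than Frobenius, the standard $\Vert \cdot \Vert_* \leq \sqrt{\mathrm{rank}}\,\Vert \cdot \Vert_F$ conversion on $\mathcal{H}_A^{\otimes k}$ costs a factor of $D_A^{k/2}$, which is harmless compared to the typically exponential smallness of $\Delta_\beta$ in total system size.
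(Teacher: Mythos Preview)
Your proposal is correct and follows essentially the same two-stage strategy as the paper: the mean is handled by summing Lemma~\ref{lemma:temp_ens_of_proj_st} over $x_B$, and the variance is controlled by a $2k$-twirling computation in which only the ``connected'' permutations survive and pick up a $\rho_d^{(2)}$ factor. One small point you gloss over: after summing the leading connected permutation over both $x_B$ and $x_B'$, the projectors resolve the identity and the contraction collapses to $\tr(\rho_d^2)$, not directly to a $\Delta_\beta$-type sum; the paper then invokes Sedrakyan's inequality (a Cauchy--Schwarz specialization) to show $\tr(\rho_d^2) \le \Delta_\beta$, which closes the bound.
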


\begin{proof}
    First, we compute the time average:
\begin{equation}
\mathbb{E}_t[\tilde{\rho}^{(k)}_\text{Proj.}(t)] \equiv  \sum_{x_B} p_d(x_B) \tilde{\rho}^{(k)}_{\text{S}\text{cr.}}(\bar{\rho}(x_B)) + O(\Delta_\beta)~. 
\end{equation}
This follows from Appendix~\ref{app:proof_temp_ens_of_proj_st}. Next, we compute the variance $\text{var}_t[\tilde{\rho}^{(k)}_\text{Proj.}(t)]$ by computing the trace norm
\begin{align}
    \Vert \mathbb{E}_t[ \tilde{\rho}^{(k)}_\text{Proj.}(t)^{\otimes 2}] - \mathbb{E}_t[ \tilde{\rho}^{(k)}_\text{Proj.}(t)]^{\otimes 2}\Vert_* = O(\tr(\rho_d^2)) = O(\Delta_\beta)~.
\end{align}
The temporal averages can be in terms of: permutations of $2k$ and $k$ elements respectively, and correction terms $\text{E.C.}$ in Lemma~\ref{lemma:Hamiltonian_k_design}. The dominant term will be from the former: permutations of $2k$ elements that cannot be written as disjoint permutations of two groups of $k$ elements. The largest possible contribution comes from the permutation where only two elements from each group are swapped, and all others are not permuted. This has trace
\begin{equation}
    \sum_{x_B,x_B'}\tr[(\mathbb{I}_A \otimes \ketbra{x_B}{x_B})\rho_d(\mathbb{I}_A \otimes \ketbra{x'_B}{x'_B})\rho_d] = \tr(\rho_d^2) ~,\label{eq:contribution_of_swap}
\end{equation}
and the contributions of all other relevant permutations and correction terms can be bounded by Eq.~\eqref{eq:contribution_of_swap}. 
Finally, we can bound $\tr(\rho_d^2) \leq \Delta_\beta$ as follows: we use Sedrakyan's inequality, a specialization of the Cauchy-Schwarz inequality. It states that for positive $u_i$ and $v_i$, $\sum_i u_i^2/\sum_i v_i \leq \sum_i u_i^2/v_i$. Applied to our context, we have
\begin{align}
    \tr(\rho_d^2) = \sum_E \abs{c_E}^4 = \sum_E \abs{c_E}^4 \frac{\left(\sum_{x_B} \bra{E}(\mathbb{I}_A\otimes \ketbra{x_B}{x_B}) \ket{E} \right)^2}{\sum_{x_B} p_d(x_B)} &\leq \sum_E \abs{c_E}^4 \sum_{x_B}\frac{\bra{E}(\mathbb{I}_A\otimes \ketbra{x_B}{x_B}) \ket{E}^2}{p_d(x_B)} \nonumber \\
    &\leq \sum_{x_B} p_d(x_B) \Delta_\beta(x_B) \equiv \Delta_\beta~. 
\end{align}
While the tightest general bound we can achieve is $\Delta_\beta \leq 1$ (saturated when the initial state is an energy eigenstate $\ket{\Psi_0} = \ket{E}$), we expect $\Delta_\beta$ to be exponentially small in total system size. 
As discussed above, $\Delta_\beta(x_B) = \sum_E q_{x_B}(E)^2$ is the purity of a distribution $q_{x_B}(E)$ and hence is typically $O(D^{-1})$. $\Delta_\beta$ is the weighted average of such purities.
\end{proof}

\section{Additional numerical evidence}

\subsection{Chaotic XXZ model}
\label{app:U1_symm}
In this Appendix, we support our results on the projected ensemble with another chaotic model. Specifically, we study a variant of the XXZ model~\cite{leblond2021universality}:
\begin{equation}
    H =\frac{1}{4}\sum_{j=1}^{N-1} \left[J (X_{j}X_{j+1}+Y_{j}Y_{j+1}) + \Delta Z_{j} Z_{j+1} \right]+\frac{\Delta'}{4} \sum_{j=1}^{N-2} Z_{j} Z_{j+2}~,
    \label{eq:chaotic_XXZ}
\end{equation}
where $X_j,Y_j,$ and $Z_j$ are the Pauli matrices on site $j$. Following Ref.~\cite{leblond2021universality}, we have set $J=\sqrt{2}, \Delta = (\sqrt{5}+1)/4$, and $\Delta' = 1$, and performed simulations on a one-dimensional chain with open boundary conditions.

This model serves as a further check of our theoretical conclusions, but also serves to illustrate the effect of conserved quantities on the projected ensemble. This model has a $U(1)$ symmetry, leading to a conserved global magnetization $M \equiv \sum_j Z_j$. 

The $U(1)$ symmetry leads to significant correlations between projected state and outcome. For simplicity, we consider $Z$-basis measurements on the complementary system. If the measured bitstring $z_B$ has magnetization $m$, the projected state $\ket{\Psi(z_B)}$ is guaranteed to have magnetization $M-m$. As a result, even in the case where the outcomes $z_B$ are not ``energy-revealing," they reveal charge, another conserved quantity. We no longer expect the entire projected ensemble to be described by the Scrooge ensemble. Rather, following Ref.~\cite{cotler2023emergent}, we form the projected ensemble only out of measurement outcomes $z_B$ with fixed magnetization $m$, and conjecture that this is described by a Scrooge ensemble of the ``sector reduced density matrix," $\rho_m \propto \sum_{z_B, |z_B| = m} \ketbra{\Psi(z_B)}{\Psi(z_B)}$. 

In the more general case where the outcomes $z_B$ are energy-revealing, Theorem~\ref{thm:moment_of_general_projected_ensemble} applies without modification: each measurement outcomes $z_B$ has its own average state $\bar{\rho}(z_B)$; the projected state $\ket{\Psi(z_B)}$ is sampled from the corresponding Scrooge ensemble $\text{Scrooge}[\bar{\rho}(z_B)]$. We illustrate the validity of this result in Fig.~\ref{fig:XXZ}(a). We plot the trace distances between the third moments of the projected and generalized Scrooge ensemble, for a time-evolution of the initial state with one domain wall $\ket{\Psi_0} = |0\rangle^{\otimes (N/2-1)}|1\rangle^{\otimes (N/2+1)}$. The trace distances decay exponentially with increasing system size, consistent with Theorem~\ref{thm:moment_of_general_projected_ensemble}. 
Meanwhile, the Scrooge ensemble $\text{Scrooge}[\rho_m]$ is a good description of the projected ensemble for smaller system sizes, but deviations become noticeable at larger system sizes, illustrated by the trace distance between the third moments of the projected and Scrooge ensemble, which saturates to a non-zero value (light green). As with the MFIM, the Porter-Thomas distribution is visible in the outcome probabilities $p(z_A,z_B)$, serving as a measurable signature of the generalized Scrooge ensemble.

\subsection{Transverse field Ising model}
We also demonstrate an example where the projected ensemble is not described by the generalized Scrooge ensemble, using the time-evolved states of the transverse field Ising model (TFIM), which is the mixed field Ising model [Eq.~\eqref{eq:MFIM}] with $h_x=0$. We time-evolve the initial state $\ket{0}^{\otimes N}$, and construct the projected ensemble on two sites near the middle of the chain, with measurements in the $X_B$ basis. Unlike the MFIM, the projected ensemble from the TFIM is not close to the generalized Scrooge ensemble [Fig.~\ref{fig:XXZ}(c)]. This is because the TFIM is integrable and has solution by free-fermions, and hence does not satisfy the $k$-th no-resonance conditions. We have also taken care to ensure that our projected ensemble should not be described by the real Scrooge ensemble (Appendix~\ref{app:time_reversal_symmetry} below): the time-reversal symmetry is broken by time evolution, and the particle-hole symmetry is broken by the choice of initial state.

\begin{figure}
    \centering
    \includegraphics[width=0.8\textwidth]{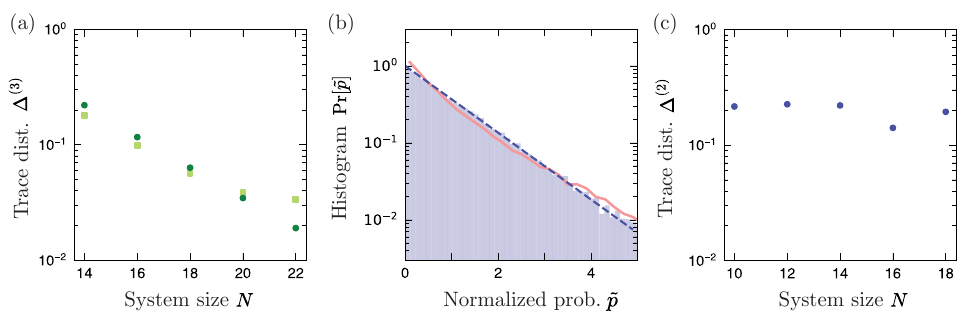}
    \caption{Projected ensembles in a chaotic and integrable model. (a) We time-evolve an initial state and plot the trace distances between the third moments of the Scrooge (light green) and generalized Scrooge (dark green) ensembles and the projected ensemble generated by a evolving a state under a variant of the XXZ model [Eq.~\eqref{eq:chaotic_XXZ}] for a subsystem of four qubits and measurements in the $Z_B$ basis. Like in the MFIM, the projected ensemble is close to the Scrooge ensemble but deviates from it at large system sizes. In contrast, the generalized Scrooge ensemble provides good agreement, even for the largest system sizes studied in this work. (b) The PT distribution is visible in the statistics of the normalized outcome probabilities $\tilde{p} \equiv p_{z_A}(z_B)/\mathbb{E}_t[p_{z_A}(z_B)]$ (while the raw probabilities $D p_{z_A}(z_B)$ show small deviations) (histogram plotted over all $z_A$ and $z_B$). (c) In contrast, the trace distance $\Delta^{(2)}$ between the Scrooge ensemble and the projected ensemble generated from the state $\exp(-i H_\text{TFIM}t)\ket{0}^{\otimes N}$ remains high, where $H_\text{TFIM}$ is the integrable transverse field Ising model [Eq.~\eqref{eq:MFIM} with $h_x=0$], the subsystem $A$ consists of two qubits, and the complement $B$ is measured in the $X$ basis. }
    \label{fig:XXZ}
\end{figure}

\section{Time-reversal symmetry and real Scrooge ensembles}
\label{app:time_reversal_symmetry}
Here, we discuss a notable deviation from the Scrooge ensemble: in the presence of relevant symmetries, states in the ensemble can be constrained to be purely real. In this case, one has to construct a maximally-entropic ensemble of \textit{real} states, which we dub the \textit{real Scrooge ensemble}.

A simple way that this can emerge is the presence of time-reversal symmetry: the existence of an unitary operator $U$ that maps the Hamiltonian $H$ to its complex conjugate $H^*$: $UHU^\dagger = H^*$. The eigenstates of such Hamiltonians are guaranteed to be real in an appropriate basis. Under appropriate measurement bases $z_B$, the projected states are guaranteed to be real. The real Scrooge ensemble can be defined as an analogous $\rho-$distortion of the maximally entropic ensemble of \textit{real states}, which has been studied in e.g.~Ref.~\cite{harrow2013church} and we dub the \textit{real Haar ensemble}.

As an example, in analogy to Eq.~\eqref{eq:rho2elements}, the second moment of the real Scrooge ensemble has non-zero entries:
\begin{equation}
    \rho_{nm}^{(2)}=\left(\prod_{k=1}^{d_A}\lambda_k^{-1}\right)^{1/2}\frac{\partial}{\partial \mu_m}\frac{\partial}{\partial \mu_n}\Lambda_2^\text{real} \big \vert_{\mu_i = \lambda_i^{-1}},
    \label{eq:real_scrooge_second_moment}
\end{equation}
where $\lambda_i$ are the eigenvalues of the first moment $\rho$, and
\begin{equation}
    \Lambda_2^\text{real}=2\int_0^\infty \mathrm d z \prod_{k=1}^{d_A}(\mu_k+z)^{-1/2}.
\end{equation}
While there is no closed-form expression for $\Lambda_2^\text{real}$ analogous to Eq.~\eqref{eq:second_moment_partition_function}, for a given first moment $\rho$, we can numerically evaluate Eq.~\eqref{eq:real_scrooge_second_moment} to obtain the higher moments of the real Scrooge ensemble.

In Fig.~\ref{fig:real_Scrooge}(b), we show that the real Scrooge ensemble describes the projected ensemble of eigenstates of the MFIM. As discussed in Sec.~\ref{subsec:eigenstates}, the MFIM is time-reversal symmetric. By measuring in the $x_B$ basis, the projected states can be written as real states. We compute the trace distance between the second moments of the real Scrooge ensemble (blue) and the real Haar ensemble (red). The real Haar ensemble only describes the projected ensemble at infinite temperature $E\approx 0$, while the real Scrooge ensemble describes the projected ensemble at all energies/temperatures, with exponentially increasing accuracy with increasing system size [Fig.~\ref{fig:real_Scrooge}(b)]. 

In addition to trace distances, we take the \textit{real Porter-Thomas distribution} as smoking-gun evidence of the real Scrooge ensemble. The real Porter-Thomas distribution~\cite{porter1956fluctuations} is an analogous universal distribution, given by
\begin{equation}
    P_{\text{PT}, \mathbb{R}}(x) dx = \frac{\exp(-x/2)}{\sqrt{2\pi x}} dx~.
    \label{eq:real_PT}
\end{equation}

In Fig.~\ref{fig:real_Scrooge}(d), we show the real PT distribution in relevant measurement outcomes $p_E(o_A,x_B)$, following the same procedure as Fig.~\ref{fig:eigenstates}. By using the same normalization procedure in Fig.~\ref{fig:eigenstates}(e), we see even better agreement with the real PT distribution (red), indicating that the \textit{generalized} real Scrooge ensemble is likely a better description than the real Scrooge ensemble, as expected.

Finally, we remark that time-evolved states can also be made to be all real. Unlike with energy eigenstates, time-reversal symmetry is not sufficient, because time-evolution breaks this symmetry. Rather, we need both time-reversal symmetry and \textit{particle-hole symmetry}, which ensures the energy spectrum is symmetric about 0. With appropriate initial states, the time-evolved state will always be real, and as such the projected ensemble will be described by the real Scrooge and generalized real Scrooge ensembles. See SM of Ref.~\cite{mark2022benchmarking} for more details.

\begin{figure}
    \centering
    \includegraphics[width=0.9\textwidth]{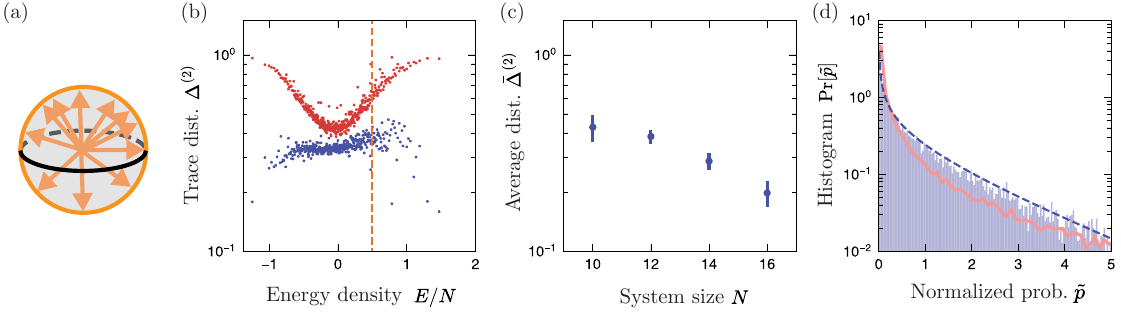}
    \caption{Real Scrooge ensemble. (a) Under appropriate conditions, the states in the temporal and projected ensembles can be restricted to be always real, indicated here with single-qubit states lying in the $X-Z$ plane of the Bloch sphere (orange circle). (b) We demonstrate this with the eigenstates of the MFIM, a time-reversal-symmetric Hamiltonian. We compare the second moments of the projected ensemble with those of: the real Haar ensemble (red) and the real Scrooge ensemble (blue). The real Scrooge ensemble consistently has the smaller trace distance $\Delta^{(2)}$. (c) With increasing system size, the projected ensembles appear to converge to the the real Scrooge ensemble. (d) We verify the generalized real Scrooge ensemble with a similar procedure as Fig.~\ref{fig:eigenstates}(e): the quantity $\tilde{p}_{z_A}(z_B) = p^{(E)}_{z_A}(z_B)/\mathbb{E}_{E}[p^{(E)}_{z_A}(z_B)]$ (blue) obeys a real PT distribution [Eq.~\eqref{eq:real_PT}, dashed line, note cusp at $\tilde{p}=0$], for $z_A = 0100$, while the raw probabilities $D p^{(E)}_{z_A}(z_B)$ do not (red).}
    \label{fig:real_Scrooge}
\end{figure}

\end{appendix}

\end{widetext}
\end{document}